\documentclass{article}

\usepackage{microtype}
\usepackage{graphicx}
\usepackage{subfigure}
\usepackage{booktabs} %

\usepackage{url}
\usepackage{graphicx}
\usepackage{braket}
\usepackage{enumerate}
\usepackage{color,graphics}

\usepackage{float}
\usepackage{amsmath,amsthm,amssymb,algorithmic,epsfig,graphicx, tikz}
\usepackage{mathtools}
\usepackage{multirow}
\usepackage{varioref}
\usepackage{bm}

\usepackage{hyperref}

\newcommand{\norm}[1]{\left\lVert#1\right\rVert}
\newcommand{\R}{\mathbb{R}}
\newcommand{\errgauss}{\epsilon_1}

\newcommand{\errlikelihood}{\epsilon_\tau}
\newcommand{\errmult}{\epsilon_3}
\newcommand{\errnorms}{\epsilon_{norm}}
\newcommand{\errtom}{\epsilon_{tom}}

\newtheorem{theorem}{Theorem}[section]
\newtheorem{lemma}[theorem]{Lemma}
\newtheorem{claim}[theorem]{Claim}

\newtheorem{corollary}[theorem]{Corollary}

\newtheorem*{theorem*}{Theorem}
\newtheorem*{result*}{Result}

\newtheorem{assumption*}{Assumption}
\newtheorem{definition}{Definition}
\DeclareMathOperator*{\argmax}{arg\,max}

\usepackage[accepted]{icml2020}

\icmltitlerunning{Quantum Expectation-Maximization for Gaussian mixture models}

\begin{document}

\twocolumn[
\icmltitle{Quantum Expectation-Maximization for Gaussian mixture models}

\icmlsetsymbol{equal}{*}

\begin{icmlauthorlist}
\icmlauthor{Iordanis Kerenidis}{irif,qcware}
\icmlauthor{Alessandro Luongo}{atos,irif}
\icmlauthor{Anupam Prakash}{irif,qcware}
\end{icmlauthorlist}

\icmlaffiliation{atos}{Atos Quantum Lab - Les Clayes sous bois}
\icmlaffiliation{irif}{IRIF - Paris}
\icmlaffiliation{qcware}{QCWare Corp.- Palo Alto}

\icmlcorrespondingauthor{Alessandro Luongo}{aluongo@irif.fr}

\icmlkeywords{Machine Learning, ICML}

\vskip 0.3in
]

\printAffiliationsAndNotice{\icmlEqualContribution} %

\begin{abstract}
    We define a quantum version of Expectation-Maximization (QEM), a fundamental tool in unsupervised machine learning, often used to solve Maximum Likelihood (ML) and Maximum A Posteriori (MAP) estimation problems. We use QEM to fit a Gaussian Mixture Model, and show how to generalize it to fit mixture models with base distributions in the exponential family. Given quantum access to a dataset, our algorithm has convergence and precision guarantees similar to the classical algorithm, while the runtime is polylogarithmic in the number of elements in the training set and polynomial in other parameters, such as the dimension of the feature space and the number of components in the mixture. We discuss the performance of the algorithm on datasets that are expected to be classified successfully by classical EM and provide guarantees for its runtime.
\end{abstract}

\section{Introduction}

Over the last few years, the effort to find real-world applications of quantum computers has greatly intensified. Along with chemistry, material sciences, finance, one of the fields where quantum computers are expected to be most beneficial is machine learning. Several different algorithms have been proposed for quantum machine learning \citep{Wiebe2017adversaries,HarrowHassidim2009HHL, subacsi2019quantum, farhi2018classification, montanaro2016quantum, biamonte2017quantum, chakrabarti2019quantum}, both for the supervised and unsupervised setting, and despite the lack of large-scale quantum computers and quantum memory devices, some quantum algorithms have been demonstrated in proof-of-principle experiments \citep{li2015experimental,otterbach2017unsupervised, jiang2019experimental}. 

Here, we look at Expectation-Maximization (EM), a fundamental algorithm in unsupervised learning, that can be used to fit different mixture models and give maximum likelihood estimates for \emph{latent variable models}. Such generative models are one of the most promising approaches to unsupervised problems. The goal of a generative model is to learn a probability distribution that is most likely to have generated the data collected in a dataset set $V \in \mathbb{R}^{n \times d}$ of $n$ vectors of $d$ features. Fitting a model consists in learning the parameters of a probability distribution $p$ in a certain parameterized family that best describes our vectors $v_i$. This formulation allows one to reduce a statistical problem into an optimization problem. For a given machine learning model $\gamma$, under the assumption that each point is independent and identically distributed the log-likelihood of a dataset $V$ is defined as $\ell(\gamma;V) := \sum_{i =1}^n \log p(v_i | \gamma) $, where $p(v_i|\gamma)$ is the probability that a point $v_i$ comes from model $\gamma$. For ML estimates we want to find the model $\gamma^*_{ML} := \argmax_{\gamma} \ell(\gamma;V)$. Due to the indented landscape of the function, optimizing $\ell$ using gradient based techniques often do not perform well. MAP estimates can be seen as the Bayesian version of maximum likelihood estimation problems, and are defined as $\gamma^*_{MAP} := \argmax_{\gamma} \sum_{i =1}^n \log p(v_i | \gamma) + \log p(\gamma)$. MAP estimates are often preferred over ML estimates, due to a reduced propensity to overfit.	

The EM algorithm has been proposed in different works by different authors but has been formalized as we know it in 1977 \citep{dempster1977maximum}. For more details, we refer to \citep{lindsay1995mixture, bilmes1998gentle}. EM is an iterative algorithm which is guaranteed to converge to a (local) optimum of the likelihood, and it is widely used to solve the ML or the MAP estimation problems. This algorithm has a striking number of applications and has been successfully used for medical imaging \citep{balafar2010review}, image restoration \citep{lagendijk1990identification}, problems in computational biology \citep{fan2010algorithm}, and so on. 

One of the most important applications of the EM algorithm is for fitting mixture models in machine learning \citep{murphy2012machine}. Most of the mixture models use a base distribution that belongs to the exponential family: Poisson \citep{church1995poisson}, Binomial,  Multinomial, log-normal \citep{dexter1972packing}, exponential \citep{ghitany1994exponential}, Dirichlet multinomial \citep{yin2014dirichlet}, and others. EM is also used to fit mixtures of experts, mixtures of the student T distribution (which does not belong to  the exponential family, and can be fitted with EM using \citep{liu1995ml}), factor analysis, probit regression, and learning Hidden Markov Models \citep{murphy2012machine}. The estimator of the parameter computed by EM has many relevant properties. For instance, it is asymptotically efficient: no other estimator can achieve asymptotically smaller variance in function of the number of points \cite{moitra2018algorithmic}. 
	
In this work, we introduce Quantum Expectation-Maximization (QEM), a quantum algorithm for fitting mixture models. We detail its usage in the context of Gaussian Mixture Models, and we extend the result to other distributions in the exponential family. It is straightforward to use the ML estimates to compute the MAP estimate of a mixture model. Our main result can be stated as:

	\begin{result*}[Quantum Expectation-Maximization] (see Theorem \ref{th:qgmm})
		Given quantum access to a GMM and a dataset $V \in \mathbb{R}^{n \times d}$ like in Definition \ref{def:quantumaccess}, for parameters $\delta_\theta, \delta_\mu > 0$,   Quantum Expectation-Maximization (QEM) fits a Maximum Likelihood (or a Maximum A Posteriori) estimate of a Gaussian Mixture Model with $k$ components, in running time per iteration which is dominated by:
   \begin{equation}\widetilde{O}\left( \frac{d^2k^{4.5}\eta^3\kappa(V)\kappa(\Sigma)\mu(\Sigma)}{\delta_\mu^3} \right),\end{equation}
where $\Sigma$ is a covariance matrix of a Gaussian distribution of one of the mixture , $\eta$ is a parameter of the dataset related to the maximum norm of the vectors, $\delta_\theta, \delta_\mu$ are error parameters in the QEM algorithm, $\mu(\Sigma)$ ( which is always $\leq \sqrt{d}$) is a factor appearing in quantum linear algebra and $\kappa$ is the condition number of a matrix. 
\end{result*}	
\noindent

This algorithm assumes quantum access to the dataset (see Definition \ref{def:quantumaccess}). This assumption can be satisfied with a preprocessing that can be carried on in time $\widetilde O(V)$ (i.e. upon reception of the dataset ).
It is typical for quantum algorithms to depend on a set of parameters, and they can achieve speedups with respect to classical algorithms when these parameters are within a certain range. Importantly, the value of these parameters can be estimated (as we do in this work for the VoxForge dataset ) and we expect it not to be too high on real datasets. Most of these parameters can be upper bounded, as described in the Experiment section. Remark that we have formal bounds for the condition number as $\kappa(V)\approx 1/\min( \{\theta_1, \cdots,  \theta_k\}\cup \{ d_{st}(\mathcal{N}(\mu_i, \Sigma_i), \mathcal{N}(\mu_j, \Sigma_j)) | i\neq j \in [k] \})$, where $d_{st}$ is the statistical distance between two Gaussian distributions. \cite{kalai2012disentangling}. Here we only kept the term in the running time that dominates for the range of parameters of interest, while in Theorem \ref{th:qgmm} we state explicitly the running times of each step of the algorithm. As in the classical case, QEM algorithm runs for a certain number of iterations until a stopping condition is met (defined by a parameter $\epsilon_\tau > 0$) which implies convergence to a (local) optimum. We remark that in the above result we performed tomography enough times to get an $\ell_2$ guarantee in the approximation, nevertheless, a lesser guarantee may be enough, for example using $\ell_\infty$ tomography (Theorem \ref{thm:tom1}), which can potentially remove the term $d^2$ from the running time.

Let's have a first high-level comparison of this result with the standard classical algorithms. The runtime of a single iteration in the standard implementation of the EM algorithm is  $O(knd^{2})$ \citep{scikit-learn,murphy2012machine}. The advantage of the quantum algorithm is an exponential improvement with respect to the number of elements in the training set, albeit with a worsening on other parameters. Note that we expect the number of iterations of the quantum algorithm to be similar to the number of iteration of the classical case, as the convergence rate is not expected to change, and this belief is corroborated by experimental evidence for the simpler case of univariate Gaussians of k-means \citep{kerenidis2019q}. In this work, we tested our algorithm on a non-trivial dataset for the problem of speaker recognition on the VoxForge dataset \cite{voxforge}. The experiment aimed to gauge the range of the parameter that affects the runtime, and test if the error introduced in the quantum procedures still allow the models to be useful. From the experiments reported in Section \ref{expevidence}, we believe that datasets where the number of samples is very large might be processed faster on a quantum computer. 
One should expect that some of the parameters of the quantum algorithm can be improved, especially the dependence on the condition numbers and the errors, which can  extend the number of datasets where QEM can offer a computational advantage. We also haven't optimized the parameters that govern the runtime in order to have bigger speedups, but we believe that hyperparameter tuning techniques, or a simple grid search, can improve the quantum algorithm even further. ML may not always be the best way to estimate the parameters of a model. For instance, in high-dimensional spaces, it is pretty common for ML estimates to overfit. MAP estimates inject into a ML model some external information, perhaps from domain experts. This usually avoids overfitting by having a kind of regularization effect on the model. For more information on how to use QEM for a MAP estimate we refer to \cite{murphy2012machine} and the Appendix.

This work is organized as such. First, we review previous efforts in quantum algorithms for unsupervised classification and classical algorithms for GMM. Then we present the classical EM algorithm for GMM. Then, in Section \ref{gmmsection} we describe QEM and its theoretical analysis, and we show how to use it to fit a GMM and other mixture models. We conclude by giving some experimental evidence on the expected runtime on real data.

\subsection{Related work}
Many classical algorithms for GMM exist. Already in 1895, Karl Pearson fitted manually a GMM of 2 features using the methods of moments \cite{pearson1895x}. Without resorting to heuristics - like the EM algorithm - other procedures with provable guarantees exist. For example, in \cite{dasgupta1999learning} they assume only one shared covariance matrix among mixtures, but they have a polynomial dependence on the number of elements in the training set, and in \cite{kannan2005spectral} they developed the spectral projection technique. Formal learnability of Gaussian mixtures has been studied \cite{kalai2012disentangling, moitra2010settling}. 
While these important results provide provable algorithms for the case when the Gaussians are well-separated, we think that the heuristic-based approaches, like the classical and the quantum EM, should not be directly compared to this class of algorithms.

In the quantum setting, a number of algorithms have been proposed in the context of unsupervised learning \cite{aimeur2013quantum, Lloyd2013a, otterbach2017unsupervised, kerenidis2019q}. Recently, classical machine learning algorithms were obtained by ``dequantizing'' quantum machine learning algorithms \cite{tang2018quantum, GLT18, tang2018quantum2, gilyen2018quantum, chia2018quantum}. This class of algorithms is of high theoretical interest, as runtime is poly-logarithmic in the dimensions of the dataset. However, the high polynomial dependence on the Frobenius norm, the error, and the condition number, makes this class of algorithms still impractical for interesting datasets, as shown experimentally \cite{arrazola2019quantum}. We believe there can be a ``dequantized'' version of QEM, but it seems rather unlikely that this algorithm will be more efficient than QEM or the classical EM algorithm.  

As the classical EM for GMM can be seen as a generalization of k-means, our work is a generalization of the $q$-means algorithm in \cite{kerenidis2019q}. Independently and simultaneously, Miyahara, Aihara, and Lechner also extended the $q$-means algorithm and applied it to fit a Gaussian Mixture Models \cite{miyahara2019expectation}. The main difference with this work is that the update step in \cite{miyahara2019expectation} is performed using a hard-clustering approach (as in the $k$-means algorithm): for updating the centroids and the covariance matrices of a cluster $j$, only the data points for which cluster $j$ is \emph{nearest} are taken into account. In our work, as in the classical EM algorithm, we use the soft clustering approach: that is, for updating the centroid and the covariance matrices of cluster $j$, all the data points \emph{weighted by their responsibility} (Defined in Eq. \ref{responsibility}) for cluster $j$ are taken into account. Both approaches have merits and can offer advantages \cite{kearns1998information}, albeit is more adherent to the original EM algorithm.

\section{Preliminaries}

Quantum computing provides a new way to encode information and perform algorithms. The basic carrier of quantum information is the \textit{qubit}, which can be in a superposition of two states $\ket{0}$ and $\ket{1}$ at the same time. More formally, a quantum bit can be written as: 
$ \ket{x} = \alpha \ket{0} + \beta\ket{1}$
and it corresponds to a unit vector in the Hilbert space $ \mathcal{H}_2=\text{span}\{\ket{0},\ket{1}\}$ with $\alpha,\beta \in \mathbb{C}$ and $|\alpha|^2+|\beta|^2=1$. An $n$-qubit state corresponds to a unit vector in $\mathcal{H}_n=\otimes_{i\in[n]} \mathcal{H}_2\sim\mathbb{C}^{2^n}$, there $\otimes$ is the tensor product. Denoting by $\{\ket{i}\}$ the standard basis of $\mathcal{H}_n$, an $n$-qubit state can be written as: $ \ket{x} = \sum_{i=0}^{2^n-1} \alpha_i \ket{i} $ with $\alpha_i \in \mathbb{C}$ and $\sum_i |\alpha_i|^2=1$. Quantum states evolve through unitary matrices, which are norm-preserving, and thus can be used as suitable mathematical description of pure quantum evolutions $U\ket{\psi}\mapsto\ket{\psi'}$. A matrix $U$ is said to be unitary if $UU^\dagger = U^\dagger U = I$. 
A quantum state $\ket{x}$ can be measured, and the probability that a measurement on $\ket{x}$ gives outcome $i$ is $|\alpha_i|^2$. The quantum state corresponding to a vector $v\in\mathbb{R}^m$ is defined as $\ket{v}=\frac{1}{\norm{v}}\sum_{j\in[m]} v_j \ket{j}$. Note that to build $\ket{v}$ we need $\lceil \log m\rceil$ qubits. In the following, when we say with high probability we mean a value which is inverse polynomially close to 1. The value of $\mu(V)$ which often compares in the runtimes comes from the procedure we use in the proof. While its value for real dataset is discussed in the manuscript, for a theoretical analysis we refer to \cite{kerenidis2017quantumsquares}. We denote as $V_{\geq \tau}$ the matrix  $\sum_{i=0}^\ell \sigma_i u_i v_i^T$ where $\sigma_\ell$ is the smallest singular value which is greater than $ \tau$. The dataset is represented by a matrix $V \in \R^{n \times d}$, i.e. each row is a vector $v_i \in \R^{d}$ for $i \in [n]$ that represents a single data point. The cluster centers, called centroids, at time $t$ are stored in the matrix $C^t \in \R^{k \times d}$, such that the $j^{th}$ row $c_{j}^{t}$ for $j\in [k]$ represents the centroid of the cluster $\mathcal{C}_j^t$. The number of non-zero elements of $V$ is $nnz(V)$. Let $\kappa(V)$ be the condition number of $V$: the ratio between the biggest and the smallest (non-zero) singular value. We recommend Nielsen and Chuang \citep{NC02} for an introduction to quantum information.

We will use a definition from probability theory.

\begin{definition}[Exponential Family \citep{murphy2012machine}] A probability density function or probability mass function $p(v|\nu)$ for $v = (v_1, \cdots, v_m) \in \mathcal{V}^m$, where $\mathcal{V} \subseteq \mathbb{R}$, %
	$\nu \in \mathbb{R}^p$ is said to be in the exponential family if can be written as:
 $$p(v|\nu) := 	h(v)\exp \{ o(\nu)^TT(v) - A(\nu) \}$$
 where:   
 \begin{itemize}
		\item $\nu \in \mathbb{R}^p$ is called the \emph{canonical or natural} parameter of the family,
		\item $o(\nu)$ is a function of $\nu$  (which often is just the identity function),  
		\item $T(v)$  is the vector of sufficient statistics: a function that holds all the information the data $v$ holds with respect to the unknown parameters,
		\item $A(\nu)$ is the cumulant generating function, or log-partition function, which acts as a normalization factor,
		\item $h(v) > 0$ is the \emph{base measure} which is a non-informative prior and de-facto is scaling constant.
 \end{itemize}
 \end{definition}

 \begin{theorem}[Multivariate Mean Value Theorem \citep{rudin1964principles}]
    Let $U$ be an open set of $\R^d$. For a differentiable functions $f : U \mapsto \mathbb{R}$ it holds that  $ \forall x,y \in U, \: \exists c$ such that  $f(x) - f(y) = \nabla f(c) \cdot (x-y) $.
 \end{theorem}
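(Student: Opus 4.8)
The plan is to reduce the multivariate statement to the classical one-dimensional mean value theorem by restricting $f$ to the line segment joining $x$ and $y$. Concretely, I would introduce the auxiliary function $g : [0,1] \to \R$ defined by $g(t) := f\big(y + t(x-y)\big)$, which traces out the segment from $y$ (at $t=0$) to $x$ (at $t=1$). By construction $g(0) = f(y)$ and $g(1) = f(x)$, so the quantity $f(x) - f(y)$ on the left-hand side is exactly $g(1) - g(0)$, reducing the problem to understanding the scalar function $g$.

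The next step is to show that $g$ is differentiable on $[0,1]$ and to compute its derivative via the chain rule. Since $f$ is differentiable on the open set $U$ and the map $t \mapsto y + t(x-y)$ is affine (hence differentiable with constant derivative $(x-y)$), the composition $g$ is differentiable, and its derivative is the directional derivative of $f$ along the segment, namely $g'(t) = \nabla f\big(y + t(x-y)\big) \cdot (x-y)$.

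With $g$ differentiable, I would apply the single-variable mean value theorem to $g$ on $[0,1]$: there exists $t^* \in (0,1)$ such that $g(1) - g(0) = g'(t^*)\,(1-0)$. Substituting the two computations above and setting $c := y + t^*(x-y)$, a point lying strictly between $x$ and $y$, yields $f(x) - f(y) = \nabla f(c) \cdot (x-y)$, the desired identity.

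The main obstacle is a subtlety in the hypotheses rather than in the argument itself: for $g$ to be well-defined and differentiable on all of $[0,1]$ we need the entire segment $\{y + t(x-y) : t \in [0,1]\}$ to be contained in $U$, so that $\nabla f$ exists at every point we evaluate. The bare assumption that $U$ is open does not guarantee this, so I would either supplement it with the standard \emph{convexity} assumption on $U$ (under which the segment between any two points of $U$ remains in $U$), or restrict the statement to pairs $x,y$ whose connecting segment lies in $U$. Once this containment holds, the remainder is the routine reduction and chain-rule computation sketched above.
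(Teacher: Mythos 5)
Your proof is correct and is exactly the standard argument behind the cited result: the paper states this theorem without proof (citing Rudin), and the canonical proof is precisely your reduction to the one-variable mean value theorem along the segment, with the chain rule giving $g'(t) = \nabla f\bigl(y + t(x-y)\bigr)\cdot(x-y)$. Your caveat about the hypotheses is also well taken — openness of $U$ alone does not guarantee the segment from $y$ to $x$ lies in $U$, so the statement implicitly needs convexity of $U$ (or the segment contained in $U$); this gap is harmless where the paper actually invokes the theorem (the Lipschitz bound for the softmax in Lemma~\ref{softmax}), since there the domain is all of $\R^d$, which is convex.
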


 \begin{lemma}[Componentwise \emph{Softmax} function $\sigma_j(v)$ is Lipschitz continuous]\label{softmax}
    For $d>2$, let $\sigma_j : \R^d \mapsto (0,1)$ be the \emph{softmax} function defined as $\sigma_j(v) = \frac{e^{v_j}}{\sum_{l=1}^de^{v_l}}$ Then $\sigma_j$ is Lipschitz continuous, with $K \leq \sqrt{2}$.
 \end{lemma}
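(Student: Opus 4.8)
The plan is to reduce Lipschitz continuity of $\sigma_j$ to a uniform bound on the Euclidean norm of its gradient. Indeed, by the Multivariate Mean Value Theorem stated above, for any $x, y \in \R^d$ there is a point $c$ on the segment between them with $\sigma_j(x) - \sigma_j(y) = \nabla \sigma_j(c) \cdot (x - y)$; applying the Cauchy--Schwarz inequality gives $|\sigma_j(x) - \sigma_j(y)| \leq \norm{\nabla \sigma_j(c)}\, \norm{x - y}$. Hence it suffices to show that $\norm{\nabla \sigma_j(v)} \leq \sqrt{2}$ for every $v$, after which the supremum of the gradient norm serves as a valid Lipschitz constant $K$.

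Next I would compute the partial derivatives of the softmax directly from the quotient rule. Writing $p_i := \sigma_i(v)$, a short calculation yields the standard expressions $\frac{\partial \sigma_j}{\partial v_j} = p_j(1 - p_j)$ for the diagonal entry and $\frac{\partial \sigma_j}{\partial v_i} = -p_j p_i$ for $i \neq j$. These are one row of the softmax Jacobian, and they come with the useful constraints $\sum_i p_i = 1$ and $p_i \in (0,1)$, which are what will keep the final constant under control.

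The heart of the argument is bounding the gradient norm using these constraints. Squaring and summing gives $\norm{\nabla \sigma_j(v)}^2 = p_j^2(1-p_j)^2 + \sum_{i \neq j} p_j^2 p_i^2$. The key point, and the step where care is needed, is to obtain a bound that does \emph{not} degrade with the dimension $d$: naively estimating each of the $d$ coordinates separately by $1/4$ would produce a constant growing like $\sqrt{d}$. Instead I would exploit the normalization: since each factor lies in $(0,1)$ the first term is at most $1$, while $p_j^2 \leq 1$ together with $\sum_{i \neq j} p_i^2 \leq \left(\sum_i p_i\right)^2 = 1$ bounds the second term by $1$ as well. Therefore $\norm{\nabla \sigma_j(v)}^2 < 2$ uniformly in $v$. (One can in fact sharpen this to $1/8$ via $\sum_{i\neq j}p_i^2 \leq (1-p_j)^2$ and $p_j(1-p_j)\leq 1/4$, but the crude estimate already delivers the claimed constant.)

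Combining the two parts, $\norm{\nabla \sigma_j(c)} < \sqrt{2}$ at the point $c$ produced by the Mean Value Theorem, so $|\sigma_j(x) - \sigma_j(y)| \leq \sqrt{2}\,\norm{x - y}$, establishing Lipschitz continuity with $K \leq \sqrt{2}$. The only genuine obstacle is the gradient bound of the third step; the reduction and the derivative computation are routine, and the entire difficulty lies in choosing an estimate that invokes $\sum_i p_i = 1$ so that the Lipschitz constant remains dimension-free.
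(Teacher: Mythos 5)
Your proof is correct and takes essentially the same approach as the paper's: both reduce Lipschitz continuity to a uniform gradient bound via the Multivariate Mean Value Theorem plus Cauchy--Schwarz, compute the same softmax partial derivatives, and exploit the normalization $\sum_i \sigma_i(v) = 1$ to get the dimension-free bound $\norm{\nabla \sigma_j(v)}^2 \leq 2$. The only cosmetic difference lies in the last estimate (the paper uses $x^2 \leq x$ for $x \in (0,1)$ termwise and then sums; you bound $\sum_{i \neq j} p_i^2$ by $\bigl(\sum_i p_i\bigr)^2 = 1$), and your parenthetical sharpening of the constant is also valid.
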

 \begin{proof}
    We need to find the $K$ such that for all $x,y \in \R^d$, we have that  $\norm{\sigma_j(y) - \sigma_j(x)} \leq K\norm{y-x}$. Observing that $\sigma_j$ is differentiable and that if we apply Cauchy-Schwarz to the statement of the Mean-Value-Theorem we derive that $ \forall x,y \in U, \: \exists c$ such that  $\norm{f(x) - f(y)} \leq \norm{\nabla f(c)}_F \norm{x-y} $.
    So to show Lipschitz continuity it is enough to select $K \leq\norm{\nabla \sigma_j}_F^{*} = \max_{c \in \R^d} \norm{\nabla \sigma_j(c)}$.

    The partial derivatives $\frac{d \sigma_j(v)}{d v_i}$ are $\sigma_j(v)(1-\sigma_j(v))$ if $i=j$ and $-\sigma_i(v)\sigma_j(v)$ otherwise. 
    So $\norm{\nabla \sigma_j}_F^2 = \sum_{i=1}^{d-1} (-\sigma(v)_i\sigma_j(v))^2 + \sigma_j(v)^2(1-\sigma_j(v))^2  \leq \sum_{i=1}^{d-1} \sigma(v)_i\sigma_j(v) + \sigma_j(v)(1-\sigma_j(v)) \leq \sigma_j(v) \sum_{i=0}^{d-1} \sigma_i(v) + 1 - \sigma_j(v) \leq 2\sigma_j(v) \leq 2$. 
    In our case we can deduce that:
    $\norm{\sigma_j(y) - \sigma_j(x)} \leq \sqrt{2} \norm{y-x} $ so $K\leq \sqrt{2}$.

 \end{proof}

     \begin{lemma}[Error in the responsibilities of the exponential family]\label{respsoftmaxed}
         Let $v_i \in \R^{n}$ be a vector, and let $\{ p(v_i | \nu_j)\}_{j=1}^k$ be a set of $k$ probability distributions in the exponential family, defined as $p(v_i | \nu_j):=h_j(v_i)exp\{o_j(\nu_j)^TT_j(v_i) - A_j(\nu_j)\}$. Then, if we have estimates 
         for each exponent with error $\epsilon$, then we can compute each $r_{ij}$ such that $|\overline{r_{ij}} - r_{ij}| \leq \sqrt{2k}\epsilon$ for $j \in [k]$.
     \end{lemma}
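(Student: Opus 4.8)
The plan is to recognize that, after taking logarithms, each responsibility $r_{ij}$ is exactly a \emph{softmax} applied to the vector of exponents, so that the whole error analysis collapses onto the Lipschitz bound already established in Lemma \ref{softmax}. Concretely, I would write the log-probability (the ``exponent'') of the $j$-th component as $z_j := \log h_j(v_i) + o_j(\nu_j)^T T_j(v_i) - A_j(\nu_j)$, so that $p(v_i\mid\nu_j) = e^{z_j}$ and the responsibility becomes
\[
r_{ij} = \frac{p(v_i\mid\nu_j)}{\sum_{l=1}^k p(v_i\mid\nu_l)} = \frac{e^{z_j}}{\sum_{l=1}^k e^{z_l}} = \sigma_j(z),
\]
where $z=(z_1,\dots,z_k)\in\R^k$ and $\sigma_j$ is the softmax of Lemma \ref{softmax}, now acting on the $k$ mixture coordinates.

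With this reformulation the error propagation is short. By hypothesis we hold an estimate $\overline{z_j}$ of each exponent with $|\overline{z_j}-z_j|\le\epsilon$; assembling these into $\overline{z}=(\overline{z_1},\dots,\overline{z_k})$, the quantity we actually compute is $\overline{r_{ij}} := \sigma_j(\overline{z})$. I would then apply the Lipschitz bound of Lemma \ref{softmax} with constant $K\le\sqrt{2}$ to get $|\sigma_j(\overline{z})-\sigma_j(z)| \le \sqrt{2}\,\norm{\overline{z}-z}$, the norm being Euclidean on $\R^k$. Since every coordinate of the error vector is at most $\epsilon$, we have $\norm{\overline{z}-z} = \big(\sum_{l=1}^k(\overline{z_l}-z_l)^2\big)^{1/2}\le\sqrt{k}\,\epsilon$, and chaining the two inequalities gives $|\overline{r_{ij}}-r_{ij}|\le\sqrt{2}\cdot\sqrt{k}\,\epsilon=\sqrt{2k}\,\epsilon$, which is the claim.

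The remaining points are bookkeeping rather than conceptual, and that is where I expect whatever small difficulty there is to live. First, one must check that everything independent of $j$ in the normalization (and, in the full mixture setting, any known log-mixing-weights $\log\theta_j$) can be absorbed into the $z_j$, so that the ratio is literally a softmax; this is immediate after taking logarithms, but it is the step that actually justifies invoking Lemma \ref{softmax}. Second, Lemma \ref{softmax} is stated for $d>2$, i.e. here for $k>2$ components, so I would add a remark that for $k\le 2$ the same constant follows directly from the explicit gradient computed in the proof of that lemma. Finally, the factor $\sqrt{k}$ — and hence the $\sqrt{2k}$ in the statement — comes precisely from the worst case in which all $k$ exponents carry the maximal error $\epsilon$ simultaneously; if the estimation guarantee were instead an $\ell_\infty$ bound on the whole exponent vector, one could potentially avoid this $\sqrt{k}$ loss.
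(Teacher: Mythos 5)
Your proposal is correct and follows essentially the same route as the paper's own proof: both rewrite $r_{ij}$ as a softmax of the vector of exponents, invoke the Lipschitz bound $K\le\sqrt{2}$ from Lemma~\ref{softmax}, and bound the $\ell_2$ error of the exponent vector by $\sqrt{k}\,\epsilon$ to conclude $|\overline{r_{ij}}-r_{ij}|\le\sqrt{2k}\,\epsilon$. Your bookkeeping remarks (absorbing $\log h_j(v_i)$ and $\log\theta_j$ into the exponents, and the $d>2$ caveat of Lemma~\ref{softmax}) are in fact slightly more careful than the paper's treatment, but they do not change the argument.
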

     
     \begin{proof}
         The proof follows from rewriting the responsibility of Equation \eqref{responsibility} as:
         
         \begin{equation}
         r_{ij}  := \frac{h_j(v_i)\exp \{ o_j(\nu_j)^TT(v_i) - A_j(\nu_j)  \  + \log \theta_j \}}{\sum\limits_{l=1}^k h_l(v_i)\exp \{ o_l(\nu_l)^TT(v_i) - A_l(\nu_l)  \  + \log \theta_l \} }
         \end{equation}  
         In this form, it is clear that the responsibilities can be seen a \emph{softmax} function, and we can use Theorem \ref{softmax} to bound the error in computing this value. 
         
         Let $T_i \in \mathbb{R}^k$ be the vector of the exponent, that is $t_{ij} = o_j(\nu_j)^TT(v_i) - A_j(\nu_j)  + \log \theta_j $. In an analogous way we define $\overline{T_i}$ the vector where each component is the estimate with error $\epsilon$. The error in the responsibility is defined as $|r_{ij} - \overline{r_{ij}}| = |\sigma_j(T_i) - \sigma_j(\overline{T_i}) | $. Because the function $\sigma_j$ is Lipschitz continuous, as we proved in Theorem \ref{softmax} with a Lipschitz constant $K\leq \sqrt{2}$, we have that, $|  \sigma_j(T_i) - \sigma_j(\overline{T_i})  | \leq \sqrt{2} \norm{T_i - \overline{T_i}}$. The result follows as $\norm{T_i - \overline{T_i}} < \sqrt{k}\epsilon $.
     \end{proof}

To prove our results, we are going to use the quantum procedures listed hereafter. 

\begin{claim}\label{kereclaim}\citep{kerenidis2017quantumsquares}
   Let $\theta$ be the angle between vectors $x,y$, and assume that $\theta < \pi/2$. Then, $\norm{x-y} \leq \epsilon$ implies $\norm{\ket{x} - \ket{y}} \leq \frac{\sqrt{2}\epsilon}{\norm{x}}$. Where $\ket{x}$ and $\ket{y}$ are two unit vectors in $\ell_2$ norm.
\end{claim}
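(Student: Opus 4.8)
The plan is to reduce everything to the angle $\theta$ together with the two norms $\norm{x}$ and $\norm{y}$, and then to isolate a clean scalar inequality. First I would record the two identities coming from the law of cosines. Since $\ket{x}$ and $\ket{y}$ are the unit vectors pointing in the same directions as $x$ and $y$, the angle between them is still $\theta$, so $\norm{\ket{x} - \ket{y}}^2 = 2 - 2\cos\theta = 2(1 - \cos\theta)$. On the other hand, writing $a = \norm{x}$, $b = \norm{y}$, and $c = \cos\theta$, the hypothesis reads $\norm{x - y}^2 = a^2 + b^2 - 2abc \leq \epsilon^2$.

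With these in hand, the target bound $\norm{\ket{x} - \ket{y}} \leq \sqrt{2}\epsilon / \norm{x}$ is equivalent, after squaring and multiplying by $a^2$, to $a^2(1 - c) \leq \epsilon^2$. So it suffices to prove the purely algebraic inequality $a^2(1 - c) \leq a^2 + b^2 - 2abc$ and then chain it with the hypothesis. Expanding the difference of the two sides gives $b^2 + ca(a - 2b)$, and I would show this is nonnegative by a short case analysis on the sign of $a - 2b$. When $a \geq 2b$, the term $ca(a - 2b)$ is nonnegative because $c \geq 0$ (this is exactly where $\theta < \pi/2$ enters), so the expression is at least $b^2 \geq 0$. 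When $a < 2b$, I would use $c \leq 1$ to bound $ca(2b - a) \leq a(2b - a)$, which leaves $b^2 + a^2 - 2ab = (a - b)^2 \geq 0$.

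Combining the pieces yields $a^2(1 - c) \leq a^2 + b^2 - 2abc \leq \epsilon^2$, and dividing by $a^2$ and taking square roots gives $\norm{\ket{x} - \ket{y}} = \sqrt{2(1 - c)} \leq \sqrt{2}\,\epsilon / a$, which is the claim. The main obstacle, and the only genuinely delicate point, is the case analysis establishing $b^2 + ca(a - 2b) \geq 0$: the inequality fails if $c$ is allowed to be negative while $a$ is large relative to $b$, so the hypothesis $\theta < \pi/2$ (equivalently $c \geq 0$) is used in an essential way rather than being cosmetic. Everything else is a direct substitution of the law-of-cosines identities.
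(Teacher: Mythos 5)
Your proof is correct, and every step checks out: the identity $\norm{\ket{x}-\ket{y}}^2 = 2(1-c)$, the reduction of the target to $a^2(1-c)\le \epsilon^2$, the expansion $a^2+b^2-2abc - a^2(1-c) = b^2 + ca(a-2b)$, and the two-case verification that this is nonnegative when $c\ge 0$ are all valid, and your observation that the claim genuinely fails for $\theta \ge \pi/2$ (e.g.\ nearly antiparallel vectors with $\norm{y}\ll\norm{x}$) is accurate. Note, however, that the paper itself offers no proof of this claim: it is imported verbatim, with citation, from Kerenidis--Prakash, so there is no in-paper argument to compare against. The argument in the cited source is geometric rather than algebraic: one bounds $\norm{\ket{x}-\ket{y}}^2 = 2(1-\cos\theta) \le 2(1-\cos^2\theta) = 2\sin^2\theta$ (using $\cos\theta \ge \cos^2\theta$, which is exactly where $\theta<\pi/2$ enters there), and separately notes that $\norm{x-y}$ is at least the perpendicular distance from $x$ to the line spanned by $y$, i.e.\ $\norm{x-y} \ge \norm{x}\sin\theta$, whence $\sin\theta \le \epsilon/\norm{x}$ and the claim follows. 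That route is shorter and makes the geometry transparent; yours buys a completely self-contained scalar computation with no appeal to projections, at the cost of a case analysis, and it has the virtue of pinpointing the exact algebraic inequality ($b^2 + ca(a-2b)\ge 0$) whose failure for $c<0$ shows the angle hypothesis is essential rather than cosmetic.
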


\noindent We will also use Claim 4.5 from \citep{kerenidis2019q}. 
\begin{claim}\label{quattrocinque}\citep{kerenidis2019q}
   Let $\epsilon_b$ be the error we commit in estimating $\ket{c}$ such that $\norm{ \ket{c} - \ket{\overline{c}}} < \epsilon_b$, and $\epsilon_a$ the error we commit in the estimating the norms,  $|\norm{c} - \overline{\norm{c}}| \leq \epsilon_a \norm{c} $. Then $\norm{\overline{c} - c} \leq \sqrt{\eta} (\epsilon_a +  \epsilon_b)$. 
\end{claim}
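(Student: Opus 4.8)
The plan is to reconstruct each full vector as the product of its (estimated) norm and its (estimated) direction, and then control the error with a single application of the triangle inequality. Writing $c = \norm{c}\ket{c}$ and $\overline{c} = \overline{\norm{c}}\ket{\overline{c}}$, the two sources of error---the one in the norm, controlled by $\epsilon_a$, and the one in the direction, controlled by $\epsilon_b$---can be cleanly decoupled if we introduce the hybrid vector $\norm{c}\ket{\overline{c}}$ that uses the \emph{true} norm but the \emph{estimated} direction.

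Concretely, I would add and subtract this hybrid term and apply the triangle inequality:
\[
\norm{\overline{c} - c} = \norm{\overline{\norm{c}}\ket{\overline{c}} - \norm{c}\ket{\overline{c}} + \norm{c}\ket{\overline{c}} - \norm{c}\ket{c}} \leq \norm{(\overline{\norm{c}} - \norm{c})\ket{\overline{c}}} + \norm{\norm{c}(\ket{\overline{c}} - \ket{c})}.
\]
For the first term, since $\ket{\overline{c}}$ is a unit vector I have $\norm{(\overline{\norm{c}} - \norm{c})\ket{\overline{c}}} = |\overline{\norm{c}} - \norm{c}| \leq \epsilon_a\norm{c}$ directly from the relative-norm hypothesis. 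For the second term, pulling the scalar out of the norm gives $\norm{c}\,\norm{\ket{\overline{c}} - \ket{c}} \leq \epsilon_b\norm{c}$ from the direction hypothesis. Summing the two bounds yields $\norm{\overline{c} - c} \leq \norm{c}(\epsilon_a + \epsilon_b)$.

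Finally I would invoke the bound on the norms of the data: since $\eta$ upper bounds the squared norm of the vectors in play, so that $\norm{c} \leq \sqrt{\eta}$, the conclusion $\norm{\overline{c} - c} \leq \sqrt{\eta}(\epsilon_a + \epsilon_b)$ follows at once. There is essentially no serious obstacle here; the only point requiring a little care is the choice of the hybrid vector, made precisely so that each error term is multiplied by a quantity we already control---the unit norm of $\ket{\overline{c}}$ in the first term, and the true norm $\norm{c}$ in the second---rather than leaving an uncontrolled cross term. Everything else is the triangle inequality together with the unit-norm normalization of the quantum state vectors.
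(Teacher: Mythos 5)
Your proof is correct and is essentially the same argument as the one given for this claim in its source: the paper does not prove the statement itself but imports it as Claim 4.5 of \citep{kerenidis2019q}, where the proof is precisely your hybrid-term decomposition $\overline{\norm{c}}\ket{\overline{c}} - \norm{c}\ket{\overline{c}} + \norm{c}\ket{\overline{c}} - \norm{c}\ket{c}$ followed by the triangle inequality and the bound $\norm{c} \leq \sqrt{\eta}$. Your closing remark correctly identifies the one contextual ingredient, namely that $c$ (a centroid, i.e.\ a responsibility-weighted average of data points) has norm at most $\sqrt{\eta} = \max_i \norm{v_i}$, which is exactly how the cited proof concludes.
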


\begin{theorem}[Amplitude estimation and amplification \citep{BHMT00}]
   If there is unitary operator $U$ such that $U\ket{0}^{l}= \ket{\phi} = \sin(\theta) \ket{x, 0} + \cos(\theta) \ket{G, 0^\bot}$ then  $\sin^{2}(\theta)$ can be estimated to multiplicative error $\eta$ in time $O(\frac{T(U)}{\eta \sin(\theta)})$ and 
   $\ket{x}$ can be generated in expected time $O(\frac{T(U)}{\sin (\theta)})$.
\end{theorem}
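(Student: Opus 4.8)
The plan is to prove both claims through the Grover-type iterate together with quantum phase estimation, following the standard BHMT construction. First I would define the amplitude-amplification operator $Q := -U S_0 U^\dagger S_\chi$, where $S_\chi$ applies a $(-1)$ phase to the ``good'' component $\ket{x,0}$ while fixing its orthogonal complement, and $S_0$ applies a $(-1)$ phase to $\ket{0}^{l}$. The key structural observation is that $Q$ leaves invariant the two-dimensional real subspace $\mathcal{H}_\phi = \mathrm{span}\{\ket{x,0}, \ket{G,0^\perp}\}$ and, restricted to it, acts as a rotation by angle $2\theta$. Hence its eigenvalues on $\mathcal{H}_\phi$ are $e^{\pm 2i\theta}$ with eigenvectors $\ket{\psi_\pm}$, and the initial state $\ket{\phi} = U\ket{0}^{l}$ decomposes into an equal-weight superposition of $\ket{\psi_+}$ and $\ket{\psi_-}$.

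For the estimation statement I would run quantum phase estimation on $Q$ using $\ket{\phi}$ as the eigenstate register. Since $\ket{\phi}$ carries equal weight on $\ket{\psi_\pm}$, phase estimation with $M$ controlled applications of $Q$ returns an estimate $\tilde\theta$ of $\theta$ (up to the harmless symmetry $\theta\leftrightarrow\pi-\theta$, which leaves $\sin^2\theta$ unchanged). Setting $\overline{a}=\sin^2(\tilde\theta)$ and propagating the additive phase-estimation error through the map $\theta\mapsto\sin^2\theta$ yields, with constant success probability, a bound of the form $|\overline a - a|\lesssim \frac{\sqrt{a(1-a)}}{M}+\frac{1}{M^2}$ with $a=\sin^2\theta$. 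To convert this into the claimed multiplicative guarantee $|\overline a - a|\le\eta a$, I would use $\sqrt{a(1-a)}\le\sqrt{a}=\sin\theta$ and solve for $M$; the dominant first term forces $M=O\!\left(\tfrac{1}{\eta\sin\theta}\right)$, and since each iterate invokes $U$ and $U^\dagger$ a constant number of times the running time is $O\!\left(\tfrac{T(U)}{\eta\sin\theta}\right)$.

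For the amplification statement I would instead apply $Q$ directly $m$ times to $\ket{\phi}$: by the rotation picture the amplitude on the good subspace after $m$ iterations is $\sin((2m+1)\theta)$, so choosing $m\approx\tfrac{\pi}{4\theta}=O(1/\sin\theta)$ drives the state essentially onto $\ket{x}$, and a measurement of the flag register returns $\ket{x}$ with high probability. The subtlety is that $\theta$ is typically unknown, so no fixed $m$ guarantees near-unit overlap; here I would invoke the standard exponential-schedule trick, drawing the iteration count uniformly from geometrically growing intervals, which succeeds after $O(1/\sin\theta)$ iterations in expectation and so explains the stated \emph{expected} time $O\!\left(\tfrac{T(U)}{\sin\theta}\right)$.

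The main obstacle I anticipate is the precision analysis in the estimation part: one must track the phase-estimation error carefully through the nonlinear reparametrization $a=\sin^2\theta$ to recover the clean $1/(\eta\sin\theta)$ scaling rather than a weaker additive bound, and confirm that the constant success probability of phase estimation can be boosted (e.g. by taking a median of repetitions) without changing the asymptotic cost. The amplification half is comparatively routine once the invariant-subspace rotation is established, the only real care being the randomized schedule required to handle unknown $\theta$.
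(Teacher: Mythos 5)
The paper offers no proof of this statement: it is imported verbatim as a known result from the cited reference \citep{BHMT00}, and your reconstruction is precisely the standard argument from that source (the Grover iterate $Q$ acting as a rotation by $2\theta$ on the invariant two-dimensional subspace, phase estimation on $Q$ with the error bound $|\overline{a}-a| \lesssim \sqrt{a(1-a)}/M + 1/M^2$ yielding $M = O(1/(\eta\sin\theta))$ for multiplicative error, and the exponential-schedule trick for amplification with unknown $\theta$ giving the expected-time guarantee). Your sketch is correct and matches the canonical proof, so there is nothing in the paper to contrast it with.
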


We also need some state preparation procedures. These subroutines are needed for encoding vectors in $v_{i} \in \R^{d}$ into quantum states $\ket{v_{i}}$. An efficient state preparation procedure is provided by the QRAM data structures. We stress the fact that our results continue to hold, no matter how the efficient quantum loading of the data is provided. For instance, the data can be accessed through a QRAM, through a block encoding, or when the data can be produced by quantum circuits.

\begin{theorem}[QRAM data structure \citep{KP16}]\label{qram}
   Let $V \in \mathbb{R}^{n \times d}$, there is a data structure to store the rows of $V$ such that, 
   \begin{enumerate}
      \item The time to insert, update or delete a single entry $v_{ij}$ is $O(\log^{2}(n))$. 	
      \item A quantum algorithm with access to the data structure can perform the following unitaries in time $T=O(\log^{2}N)$. 
      \begin{enumerate} 
         \item $\ket{i}\ket{0} \to \ket{i}\ket{v_{i}} $ for $i \in [n]$. 
         \item $\ket{0} \to \sum_{i \in [n]} \norm{v_{i}}\ket{i}$. 
      \end{enumerate}
   \end{enumerate}
\end{theorem}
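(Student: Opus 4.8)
The plan is to exhibit a classical tree data structure that supports the claimed updates and whose nodes can be queried in superposition in order to drive a sequence of controlled rotations realizing the two unitaries. First I would associate to each row $v_i \in \R^d$ a binary tree $B_i$ of depth $\lceil \log d\rceil$: the $j$-th leaf stores the pair $(v_{ij}^2, \operatorname{sgn}(v_{ij}))$, and every internal node stores the sum of the values held by the leaves of its subtree, so that the root of $B_i$ holds $\norm{v_i}^2$. On top of these I would place a single \emph{row tree} whose $i$-th leaf stores $\norm{v_i}^2$ and whose internal nodes again store subtree sums, so that its root holds $\sum_i \norm{v_i}^2 = \norm{V}_F^2$.

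For part~1, an update to a single entry $v_{ij}$ touches only the root-to-leaf path in $B_i$ together with the corresponding path in the row tree; both paths have length $O(\log(nd))$, and at each node one adds a quantity stored to $O(\log(nd))$ bits of precision, which gives the stated $O(\log^2(n))$ bound when $N$ is read as the total input size.

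For part~2(a), the idea is to prepare $\ket{v_i}$ by descending $B_i$ one level at a time. Conditioned on the index $i$ and on the address of the current node, I would query the two children's stored sums-of-squares and apply a rotation taking $\ket{0}$ to $\sqrt{\tfrac{L}{L+R}}\ket{0}+\sqrt{\tfrac{R}{L+R}}\ket{1}$, where $L,R$ are the left and right subtree masses; iterating this over all $\lceil\log d\rceil$ levels yields amplitudes equal to $|v_{ij}|/\norm{v_i}$, after which the signs recorded at the leaves are imposed by a final controlled phase to obtain $\ket{v_i}=\tfrac{1}{\norm{v_i}}\sum_j v_{ij}\ket{j}$. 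Part~2(b) is identical but run on the row tree, producing the norm-weighted superposition over row indices. Each unitary uses $O(\log d)$ (resp.\ $O(\log n)$) rotations, each preceded by an $O(\log(nd))$-time query into the structure, for the claimed $T=O(\log^2 N)$.

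The main obstacle is the quantum part: the rotations must be applied coherently in superposition over $i$ and over node addresses, so the data structure must reside in a quantum-addressable memory that answers node queries in superposition, and one must argue that truncating the stored values and the rotation angles to $O(\log(nd))$ bits introduces only inverse-polynomial error in the prepared states. Verifying that the conditional-rotation recursion produces amplitudes exactly proportional to $v_{ij}$, and controlling the precision error accumulated across the $O(\log d)$ levels, is the delicate step; the classical bookkeeping of the update times is routine by comparison.
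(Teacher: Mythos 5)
This theorem is imported by the paper from \citep{KP16} without proof, and your construction is precisely the one in that reference: per-row binary trees with squared entries and signs at the leaves, subtree sums at internal nodes, a top-level tree over the row norms, $O(\log^2)$ updates along root-to-leaf paths, and state preparation by conditional rotations descending the trees with signs applied at the leaves. Your proposal is correct and is essentially the same argument, including the standard caveats you note about coherent (superposed) addressing of the memory and finite-precision rotation angles.
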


In our algorithm we will also use subroutines for quantum linear algebra. 
For a symmetric matrix $M \in \R^{d\times d}$ with spectral norm $\norm{M}=1$, the running time of these algorithms depends linearly on the condition number $\kappa(M)$ of the matrix, that can be replaced by $\kappa_\tau(M)$, a condition threshold where we keep only the singular values bigger than $\tau$, and the parameter $\mu(M)$, a matrix dependent parameter defined as 
$$\mu(M)= \min_{p \in P} (\norm{M}_F, \sqrt{s_{2p}(M)s_{2(1-p)}(M^{T})}),$$ 
for $s_{p}(M) := \max_{i \in [n]} \norm{m_i}_p^p$ where $\norm{m_i}_p$ is the $\ell_p$ norm of the i-th row of $M$, and $P$ is a finite set of size $O(1) \in [0,1]$.
Note that $\mu(M) \leq \norm{M}_{F} \leq \sqrt{d}$ as we have assumed that $\norm{M}=1$. The running time also depends logarithmically on the relative error $\epsilon$ of the final outcome state.  \citep{CGJ18, GLSW18}. 

\begin{theorem}[Quantum linear algebra \citep{CGJ18, GLSW18} ]\label{qla}  Let $M \in \mathbb{R}^{d \times d}$ such that $\norm{M}_2 =1$ and $x \in \mathbb{R}^d$. Let $\epsilon,\delta>0$. If we have quantum access to $M$ and the time to prepare $\ket{x}$ is $T_{x}$, then there exist quantum algorithms that with probability at least $1-1/poly(d)$ return
   a state $\ket{z}$ such that $\norm{ \ket{z} - \ket{Mx}} \leq \epsilon$ in time $\widetilde{O}((\kappa(M)\mu(M) + T_{x} \kappa(M)) \log(1/\epsilon))$.      
\end{theorem}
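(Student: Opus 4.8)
This is a result of \citep{CGJ18, GLSW18}, so I will sketch the standard route to it through block encodings, quantum singular value transformation (QSVT), and amplitude amplification. The plan is to (i) turn quantum access to $M$ into a block encoding, (ii) apply this block encoding to $\ket{x}$ so that the vector $Mx$ appears, suitably subnormalized, inside a flagged subspace, and (iii) amplify that subspace to recover the normalized output $\ket{Mx}$, propagating the approximation error at every stage.

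First I would invoke the QRAM data structure of Theorem \ref{qram}: quantum access to $M$ provides, with only polylogarithmic cost per call, a unitary $U_M$ whose top-left block equals $M/\mu(M)$, i.e. a block encoding of $M$ with subnormalization $\mu(M) \le \norm{M}_F \le \sqrt{d}$ (the data-dependent parameter defined just before the statement). The essential observation is that one application of $U_M$ is cheap, so $\mu(M)$ will enter the final cost only through a success amplitude and not through the gate count of a single block-encoding call.

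Since $\norm{M}_2 = 1$, the non-zero singular values of $M/\mu(M)$ lie in $[1/(\kappa(M)\mu(M)),\, 1/\mu(M)]$. For the multiplication map, $U_M$ already applies $M$ up to its subnormalization, so no polynomial refinement is needed and the precision $\epsilon$ enters only through the $\widetilde{O}(\log(1/\epsilon))$ bits of accuracy of the block encoding and of the amplification; the closely related inversion map $x \mapsto M^{-1}x$, covered by the same framework, would instead use QSVT with an odd polynomial of degree $\widetilde{O}(\kappa(M)\log(1/\epsilon))$ approximating $1/\sigma$ on this interval. Applying the circuit to $\ket{0}\ket{x}$, with $\ket{x}$ prepared in time $T_x$, produces a state of the form $\ket{0}\otimes Mx/\Lambda + \ket{\perp}$ with flagged amplitude $\Theta(\norm{Mx}/\Lambda)$ and $\Lambda = \Theta(\mu(M))$. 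Because the smallest singular value of $M$ is $1/\kappa(M)$, we have $\norm{Mx} \ge \norm{x}/\kappa(M)$ whenever $x$ lies in the row space of $M$ (otherwise one works with the condition threshold $\kappa_\tau(M)$), so the flagged amplitude is at least $\Omega(1/(\kappa(M)\mu(M)))$. Amplitude amplification (the amplitude estimation and amplification theorem of \citep{BHMT00}) then boosts this subspace to constant weight and outputs a state $\epsilon$-close to $\ket{Mx}$; converting the additive $\ell_2$ error on the unnormalized vector $Mx$ into error on the unit vector $\ket{Mx}$ is handled by Claim \ref{kereclaim}, and a majority vote over $\widetilde{O}(1)$ repetitions raises the success probability to $1 - 1/\text{poly}(d)$.

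The step I expect to be the main obstacle is the runtime accounting, namely obtaining the additive form $\kappa(M)\mu(M) + T_x\,\kappa(M)$ rather than the naive product $\kappa(M)\mu(M)\,T_x$. A plain round of amplitude amplification reprepares $\ket{x}$ (cost $T_x$) and applies $U_M$ polylogarithmically many times, so iterating $\widetilde{O}(\kappa(M)\mu(M))$ times would charge $T_x$ on every round. To decouple the state-preparation cost from the $\mu(M)$ overhead of the linear-algebra core, I would use variable-time amplitude amplification \citep{CGJ18}, which organizes the computation so that only an $O(\kappa(M))$ fraction of the branches runs to full depth; this charges $T_x$ only $\widetilde{O}(\kappa(M))$ times while still paying $\widetilde{O}(\kappa(M)\mu(M))$ for the amplification part. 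Carrying out this bookkeeping carefully, while simultaneously propagating $\epsilon$ through the block-encoding precision and the final normalization of Claim \ref{kereclaim}, is the delicate part; the remainder is assembling standard primitives.
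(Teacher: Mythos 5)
You should first note that the paper does not prove Theorem \ref{qla} at all: it is imported as a black-box tool, with the citations \citep{CGJ18, GLSW18} built into the statement itself, and it is then invoked as a subroutine in the lemmas of Section \ref{gmmsection}. There is thus no in-paper proof to compare against; the relevant benchmark is the cited literature, and against that benchmark your sketch is essentially the standard argument. Your three stages --- QRAM access as in Theorem \ref{qram} yielding a block encoding of $M/\mu(M)$; applying it to $\ket{x}$ so that the flagged amplitude is $\norm{Mx}/\mu(M) \geq 1/(\kappa(M)\mu(M))$, using $\norm{M}_2=1$ so that the least nonzero singular value is $1/\kappa(M)$; then amplification, conversion of the error on $Mx$ into error on the unit vector via Claim \ref{kereclaim}, and repetition to reach success probability $1-1/\mathrm{poly}(d)$ --- form the correct skeleton, and you correctly isolate the only genuinely non-obvious point: plain amplitude amplification \citep{BHMT00} re-prepares $\ket{x}$ in each of $\widetilde{O}(\kappa(M)\mu(M))$ rounds and gives the multiplicative cost $\widetilde{O}(\kappa(M)\mu(M)\,T_x)$, not the additive one in the statement. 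One refinement to your fix: variable-time amplitude amplification in \citep{CGJ18} is formulated for the inversion map $x \mapsto M^{-1}x$, which is where the degree-$\widetilde{O}(\kappa(M)\log(1/\epsilon))$ QSVT polynomial you mention enters; for plain multiplication the additive form is obtained more directly by uniform singular value amplification \citep{GLSW18}, which converts the $\mu(M)$-subnormalized block encoding into a block encoding of $M$ itself at cost $\widetilde{O}(\mu(M)\log(1/\epsilon))$ per use, after which ordinary amplitude amplification needs only $O(\kappa(M))$ rounds, each costing $T_x$ plus the amplified encoding, giving $\widetilde{O}((\kappa(M)\mu(M)+\kappa(M)T_x)\log(1/\epsilon))$ directly. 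Either mechanism is faithful to the cited works, so your proposal stands as a correct reconstruction of a proof the paper itself omits.
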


\begin{theorem}[Quantum linear algebra for matrix products \citep{CGJ18} ]\label{qlap}  Let $M_1, M_2 \in \mathbb{R}^{d \times d}$ such that $\norm{M}_1 = \norm{M}_2 =1$ and $x \in \mathbb{R}^d$, and a vector $x \in \mathbb{R}^d$ for which we have quantum access. Let $\epsilon>0$. Then there exist quantum algorithms that with probability at least $1-1/poly(d)$ returns a state $\ket{z}$ such that $\norm{ \ket{z} - \ket{Mx}} \leq \epsilon$ in time $\widetilde{O}((\kappa(M)( \mu(M_1)T_{M_1}+ \mu(M_2)T_{M_2} ) ) \log(1/\epsilon))$, where $T_{M_1},T_{M_2}$ is the time needed to index the rows of $M_1$ and $M_2$.
\end{theorem}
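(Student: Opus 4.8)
The plan is to realize $\ket{M_1 M_2 x}$ by composing block-encodings of the two factors and extracting the result with amplitude amplification, taking care that the subnormalization of the product stays $O(1)$ so that only a single factor of $\kappa(M)$, with $M := M_1 M_2$, appears. First I would invoke the quantum-access framework of \citep{CGJ18, GLSW18} that underlies Theorem \ref{qla}: for each $M_i$ (with $\norm{M_i}_2 = 1$), quantum access built on the data structure of Theorem \ref{qram} yields, in time $\widetilde O(T_{M_i})$ per use, a block-encoding of $M_i$ with subnormalization factor $\mu(M_i)$ — that is, a unitary whose top-left block equals $M_i/\mu(M_i)$ up to inverse-polynomial error. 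Here $T_{M_i}$ is exactly the row-indexing time appearing in the statement.

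The key step is singular-value (uniform block) amplification from \citep{CGJ18}. Since $\mu(M_i) \geq \norm{M_i}_2 = 1$, the encoded matrix $M_i/\mu(M_i)$ has all singular values in $(0,1/\mu(M_i)]$, which lie safely below the amplification threshold; applying the amplification polynomial of degree $\widetilde O(\mu(M_i))$ therefore converts the $\mu(M_i)$-subnormalized encoding into one with subnormalization $O(1)$ (a block-encoding of $M_i$ itself, up to a constant), at a cost of $\widetilde O(\mu(M_i) T_{M_i})$ per use. I would then multiply the two amplified encodings using the product-of-block-encodings lemma of \citep{CGJ18}: one use of each factor yields a block-encoding of $M = M_1 M_2$ whose subnormalization is still $O(1)$ and whose per-use cost is the sum $\widetilde O(\mu(M_1) T_{M_1} + \mu(M_2) T_{M_2})$.

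Finally I would apply this product encoding to $\ket{x}$ and recover the normalized output state. Because the subnormalization is $O(1)$ and the unit vector $\ket{x}$ satisfies $\norm{M\ket{x}} \geq \sigma_{\min}(M) = 1/\kappa(M)$, the ``good'' component has amplitude $\Omega(1/\kappa(M))$, so $O(\kappa(M))$ rounds of amplitude amplification (the theorem of \citep{BHMT00}) suffice to prepare $\ket{Mx}$. Multiplying the per-use cost by the number of rounds gives the claimed $\widetilde O(\kappa(M)(\mu(M_1) T_{M_1} + \mu(M_2) T_{M_2}))$; the $\log(1/\epsilon)$ factor enters through the degrees of the amplification and block-encoding polynomials needed to reach final error $\epsilon$, and the $1-1/poly(d)$ success probability comes from the $O(\log d)$ flag qubits together with standard boosting.

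The main obstacle is precisely this subnormalization bookkeeping. Multiplying the raw $\mu$-subnormalized encodings directly would produce subnormalization $\mu(M_1)\mu(M_2)$, forcing $O(\mu(M_1)\mu(M_2)\kappa(M))$ amplification rounds and hence a runtime scaling with the \emph{product} $\mu(M_1)\mu(M_2)$ rather than the \emph{sum} in the statement. The pre-amplification of each factor to constant subnormalization is what moves the $\mu(M_i)$ factors inside the additive per-query cost and leaves only one $\kappa(M)$ outside; verifying that singular-value amplification is applicable (singular values below threshold, and that the amplification and multiplication errors can be driven to $\epsilon/poly$ without affecting the leading-order runtime) is the technical heart of the argument.
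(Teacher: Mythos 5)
You should first know that the paper never proves this statement: Theorem \ref{qlap} is imported from \citep{CGJ18} and used as a black box, so there is no internal proof to compare against. Judged on its own merits, your reconstruction follows the route of the cited source itself. QRAM access gives a $\mu(M_i)$-subnormalized block-encoding of each factor; uniform singular value amplification is applicable exactly as you say (the encoded singular values are at most $1/\mu(M_i)$, and $\mu(M_i)\geq\norm{M_i}=1$), and a degree-$\widetilde{O}(\mu(M_i))$ amplification brings each subnormalization to $O(1)$ at per-use cost $\widetilde{O}(\mu(M_i)T_{M_i})$; the product lemma then yields an $O(1)$-subnormalized encoding of $M_1M_2$ at additive cost, and amplitude amplification contributes the single factor of $\kappa$. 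Your diagnosis that pre-amplification is precisely what converts the product $\mu(M_1)\mu(M_2)$ into the sum $\mu(M_1)T_{M_1}+\mu(M_2)T_{M_2}$ is the correct key idea, and it is the same mechanism (uniform block-amplification) used in \citep{CGJ18}.

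There is, however, one step that does not hold as written: the amplitude lower bound $\norm{M\ket{x}} \geq \sigma_{\min}(M) = 1/\kappa(M)$. In fact $\sigma_{\min}(M) = \norm{M}/\kappa(M)$, and unit spectral norm of the two factors does not imply constant spectral norm of their product: taking $M_1 = \mathrm{diag}(1,\beta)$ and $M_2 = \mathrm{diag}(\beta,1)$ with small $\beta\in(0,1)$ gives $M_1M_2 = \beta I$, so $\kappa(M_1M_2) = 1$ while $\norm{M_1M_2\,x} = \beta$ for every unit $x$. In that regime your amplitude amplification needs $\Theta(1/\beta)$ rounds, not $O(\kappa(M))$, so what your argument actually establishes is the runtime with $\kappa(M)$ replaced by $\kappa(M)/\norm{M_1M_2}$, equivalently the stated bound under the additional hypothesis $\norm{M_1M_2} = \Omega(1)$. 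This imprecision is arguably inherited from the statement itself (the paper's paraphrase never even defines $M$ and implicitly treats it as normalized), but a self-contained proof should state the assumption explicitly or carry the $1/\norm{M_1M_2}$ factor through. The remaining bookkeeping in your write-up (driving the block-encoding and amplification errors to $\epsilon/\mathrm{poly}(\kappa)$ inside logarithms, and boosting the success probability to $1-1/poly(d)$) is standard and handled adequately.
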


The linear algebra procedures above can also be applied to any rectangular matrix $V \in \mathbb{R}^{n \times d}$ by considering instead the symmetric matrix $ \overline{V} = \left ( \begin{matrix}
0  &V \\ 
V^{T} &0 \\
\end{matrix}  \right )$.

The final component needed for the QEM algorithm is an algorithm for vector state tomography that will be used to recover classical information from the quantum states corresponding to the new centroids in each step. We report here  two kind of vector state tomography. The $\ell_2$ tomography has stronger guarantees on the output, while the $\ell_\infty$ is faster. 

\begin{theorem}[$\ell_{\infty}$ Vector state tomography]\cite{jonas} \label{thm:tom1}
	Given access to unitary $U$ such that $U\ket{0} = \ket{x}$ and its controlled version in time $T(U)$, there is a tomography algorithm with time complexity $O(T(U) \frac{ \log d  }{\delta^{2}})$ that produces unit vector $\widetilde{X} \in \R^{d}$ such that $\norm{\widetilde{X}  - x }_{\infty} \leq \delta$ with probability at least $(1-1/poly(d))$. 
\end{theorem}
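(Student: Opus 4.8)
The plan is to reconstruct $x$ coordinate by coordinate in two phases: first estimate the magnitudes $|x_i|$ by sampling $\ket{x}$ in the computational basis, and then recover the sign of each coordinate with a Hadamard-test that compares $\ket{x}$ against a reference state built from the magnitude estimates, which is precisely where the controlled version of $U$ is needed. Throughout I take $x$ to be a real unit vector $\sum_{i=1}^d x_i \ket{i}$ and fix the sample size at $N = O(\log d / \delta^2)$, invoking $U$ (or its controlled version) once per sample, so that the total running time is $O(T(U) \log d / \delta^2)$ as claimed.

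For the magnitudes, I would run $U$ to prepare $\ket{x}$ and measure in the standard basis $N$ times; if index $i$ is observed $n_i$ times I set $a_i := \sqrt{n_i/N}$, the positive estimate of $|x_i|$, since the outcome distribution is exactly $p_i = x_i^2$. The delicate point is to certify $|a_i - |x_i|| \le \delta$ for \emph{every} $i$ simultaneously using only $N = O(\log d/\delta^2)$ samples. A naive additive Hoeffding bound on $\tilde p_i = n_i/N$ would cost $O(\log d/\delta^4)$; instead I would use the multiplicative Chernoff bound, which guarantees relative error $\gamma$ in $\tilde p_i$ with $N = O(\log d / (p_i \gamma^2))$ samples. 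Writing $|a_i - |x_i|| = |x_i|\,|\sqrt{\tilde p_i/p_i} - 1| \le |x_i|\gamma$ and choosing $\gamma = \delta/|x_i|$ makes the sample requirement $O(\log d/\delta^2)$ independent of $i$; coordinates with $p_i \lesssim \delta^2$ are treated separately, since both $|x_i|$ and $a_i$ are then automatically $O(\delta)$. A union bound over the $d$ coordinates keeps the total failure probability at $1/\mathrm{poly}(d)$.

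For the signs, I would use quantum access to the classical estimates $a_i$ (via a data structure as in Theorem \ref{qram}) to prepare $\frac{1}{\sqrt 2}(\ket{0}\ket{x} + \ket{1}\sum_i a_i \ket{i})$, appending $\ket{x}$ on the $\ket{0}$ branch with the controlled version of $U$, and then apply a Hadamard on the control qubit to obtain $\frac{1}{2}\ket{0}\sum_i (x_i + a_i)\ket{i} + \frac{1}{2}\ket{1}\sum_i (x_i - a_i)\ket{i}$. Measuring, the outcomes $\ket{0}\ket{i}$ and $\ket{1}\ket{i}$ occur with probabilities $\tfrac14 (x_i \pm a_i)^2$, so observing $\ket{0}\ket{i}$ far more often than $\ket{1}\ket{i}$ indicates that $x_i$ agrees in sign with $a_i$, and the reverse indicates disagreement; I set the sign of $\widetilde{X}_i$ accordingly and take $|\widetilde{X}_i| = a_i$. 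The confidence of this test scales with $a_i^2 \approx x_i^2$, so with $N = O(\log d/\delta^2)$ repetitions the sign is recovered correctly for all coordinates with $|x_i| \ge \delta$; for coordinates with $|x_i| < \delta$ an incorrect sign incurs error only $a_i + |x_i| = O(\delta)$, so the $\ell_\infty$ guarantee survives regardless.

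Combining the two phases yields $\norm{\widetilde{X} - x}_\infty \le \delta$ (after absorbing constant factors into $\delta$) with probability $1 - 1/\mathrm{poly}(d)$, the running time being dominated by the $2N$ invocations of $U$ and its controlled version, i.e. $O(T(U)\log d/\delta^2)$. I expect the main obstacle to be the error analysis delivering the $\ell_\infty$ bound at the $\delta^{-2}$ (rather than $\delta^{-4}$) sample cost: this forces the multiplicative rather than additive concentration bound, and it requires carefully splitting the coordinates into a \emph{large} case ($|x_i| \ge \delta$) and a \emph{small} case ($|x_i| < \delta$) in both the magnitude and the sign steps, so that the small coordinates, where neither magnitude nor sign can be resolved to relative precision, contribute only $O(\delta)$ additive error.
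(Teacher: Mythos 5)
The paper itself offers no proof of this theorem---it is imported verbatim from the cited reference \cite{jonas}---and your proposal follows essentially the same route as that source: the two-phase Kerenidis--Prakash tomography (computational-basis sampling for the magnitudes, then a controlled-$U$ interference test against the QRAM-loaded estimate $\sum_i a_i\ket{i}$ for the signs), with the $\ell_\infty$, $O(\log d/\delta^{2})$ improvement coming exactly from the per-coordinate multiplicative Chernoff analysis and the large/small coordinate split you describe. Your sketch is correct up to constant-factor bookkeeping---e.g., the sign test is cleaner when thresholded on the observable $a_i$ (or on $a_i+|x_i|$) rather than on $|x_i|$, since for $|x_i|$ barely above $\delta$ an unlucky $a_i$ can make the test fail, but a wrong sign there still costs only $a_i+|x_i| = O(\delta)$---which you already flag as absorbed into the final rescaling of $\delta$.
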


\begin{theorem}  [Vector state tomography \citep{KP18}] \label{thm:tom2} Given access to unitary $U$ such that $U\ket{0} = \ket{x}$ and its controlled version in time $T(U)$, there is a tomography algorithm with time complexity $O(T(U) \frac{ d \log d  }{\epsilon^{2}})$ that produces unit vector $\widetilde{x} \in \R^{d}$ such that $\norm{\widetilde{x}  - x }_{2} \leq \epsilon$ with probability at least $(1-1/poly(d))$. 
\end{theorem}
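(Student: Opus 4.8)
The plan is to reconstruct the real unit vector $x$ coordinate by coordinate in two phases: first estimating the magnitudes $|x_i|$ by sampling $\ket{x}$ in the computational basis, and then recovering the signs of the $x_i$ by a single round of interference that uses the controlled version of $U$. Throughout I treat $x$ as real, so each $x_i = \pm|x_i|$, and the total sample budget will be $N = O(d\log d/\epsilon^2)$; since each sample costs $T(U)$ and the classical post-processing is cheap, this gives the claimed runtime $O(T(U)\,d\log d/\epsilon^2)$.

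For the magnitude phase I would measure $U\ket{0}=\ket{x}$ in the standard basis $N$ times, record the empirical frequencies $\widetilde{p_i} = n_i/N$ of each outcome $i$, and set $\widetilde{x_i} := \sqrt{\widetilde{p_i}}$. Since $p_i = x_i^2$, the goal is to show $\sum_i(\sqrt{\widetilde{p_i}} - \sqrt{p_i})^2 \le \epsilon^2/2$ with probability $1 - 1/\mathrm{poly}(d)$. The elementary bound $(\sqrt a - \sqrt b)^2 \le |a-b|$ reduces this to controlling $\sum_i|\widetilde{p_i}-p_i|$, but the crude $\ell_1$ estimate is too weak; instead I would argue directly on the second moment, using $\mathbb{E}[(\sqrt{\widetilde{p_i}}-\sqrt{p_i})^2] = O(1/N)$ per coordinate so that $\mathbb{E}[\sum_i(\sqrt{\widetilde{p_i}}-\sqrt{p_i})^2] = O(d/N)$, and then applying a per-coordinate Chernoff bound together with a union bound over the $d$ coordinates (this is where the $\log d$ factor enters) to promote the expectation to a high-probability guarantee. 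It will be convenient to split the coordinates into heavy ones, where $p_i$ is comparable to or larger than $\epsilon^2/d$, and light ones, whose contribution to the error is controlled by the magnitude itself, since the per-coordinate square-root approximation degrades for very small $p_i$.

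For the sign phase I would use the already-computed estimates $\widetilde{x_i}\ge 0$ to prepare, via the QRAM data structure of Theorem~\ref{qram} loaded with the (now classically known) $\widetilde{x_i}$, the state $\ket{p} = \sum_i \widetilde{x_i}\ket{i}$, and then build the superposition $\frac{1}{\sqrt 2}(\ket{0}\ket{x} + \ket{1}\ket{p})$ using controlled-$U$. Applying a Hadamard to the ancilla yields $\tfrac12\ket{0}(\ket{x}+\ket{p}) + \tfrac12\ket{1}(\ket{x}-\ket{p})$, so outcome $(0,i)$ occurs with probability $\tfrac14(x_i+\widetilde{x_i})^2$ and $(1,i)$ with probability $\tfrac14(x_i-\widetilde{x_i})^2$. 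When $x_i$ is positive these amplitudes are $|x_i|$ and $0$, and when it is negative they are $0$ and $|x_i|$; measuring $N$ times and comparing the frequency of $(0,i)$ against $(1,i)$ therefore reveals $\mathrm{sign}(x_i)$, which I assign to $\widetilde{x_i}$. Coordinates whose true magnitude is below the resolution threshold may be mis-signed, but each such error costs at most $(2|x_i|)^2$ in squared $\ell_2$ distance, and the sum of these contributions is again $O(\epsilon^2)$ by the same heavy/light split.

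Finally I would combine the two phases by the triangle inequality, $\norm{\widetilde{x}-x}_2 \le \norm{\, \lvert\widetilde{x}\rvert-\lvert x\rvert \,}_2 + (\text{sign error})$, rescaling the internal error parameters by constants so that the total is at most $\epsilon$, and collect the failure probabilities of both phases by a union bound to keep the overall success probability $1-1/\mathrm{poly}(d)$. The main obstacle is the concentration argument in the magnitude phase: proving that $O(d\log d/\epsilon^2)$ samples suffice for an $\ell_2$ guarantee on the vector of square roots, which requires the direct second-moment bound together with the careful treatment of the low-probability coordinates, rather than a naive total-variation estimate that would force a worse dependence on $\epsilon$.
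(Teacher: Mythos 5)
Your proposal is correct and is essentially a reconstruction of the proof this paper relies on: the paper itself does not prove Theorem~\ref{thm:tom2} but imports it from \citep{KP18}, and that reference's algorithm is exactly what you describe --- computational-basis sampling of $U\ket{0}$ to estimate the magnitudes $\sqrt{n_i/N}$, a controlled-$U$ interference (Hadamard-test) round against the state $\ket{p}$ built from the classically known estimates to recover the signs, and a concentration analysis that splits coordinates into heavy and light ones so that mis-signed or poorly resolved coordinates contribute only $O(\epsilon^2)$ to the squared error. Since both the algorithm and the analysis sketch match the original Kerenidis--Prakash argument, there is nothing substantively different to flag.
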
 

\begin{lemma}[Distance / Inner Products Estimation \citep{kerenidis2019q, WKS14, LMR13}]\label{th:innerproductestimation} 
   Assume for a data matrix $V \in \mathbb{R}^{n \times d}$ and a centroid matrix $C \in \mathbb{R}^{k \times d}$ that the following unitaries $
   \ket{i}\ket{0} \mapsto \ket{i}\ket{v_i}, $ and $\ket{j}\ket{0} \mapsto \ket{j}\ket{c_j}
   $ can be performed in time $T$ and the norms of the vectors are known. For any $\Delta > 0$ and $\epsilon>0$, there exists a quantum algorithm that  computes 

$$   \ket{i}\ket{j}\ket{0}  \mapsto  \ket{i}\ket{j}\ket{\overline{d^2(v_i,c_j)}}, $$
where$ |\overline{d^{2}(v_i,c_j)}-d^{2}(v_i,c_j)| \leqslant  \epsilon$ with probability at least$ 1-2\Delta,$
or $$ \ket{i}\ket{j}\ket{0}  \mapsto \ket{i}\ket{j}\ket{\overline{(v_i,c_j)}},  $$
      where $ |\overline{(v_i,c_j)}-(v_i,c_j)| \leqslant  \epsilon $ with probability at least $ 1-2\Delta$
      in time $\widetilde{O}\left(\frac{ \norm{v_i}\norm{c_j} T \log(1/\Delta)}{ \epsilon}\right)$. 
\end{lemma}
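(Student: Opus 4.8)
The plan is to reduce both the squared-distance and the inner-product estimations to estimating the \emph{normalized} overlap $\braket{v_i|c_j} = (v_i,c_j)/(\norm{v_i}\norm{c_j})$ and then rescaling by the norms, which are known classically. Since $d^2(v_i,c_j) = \norm{v_i}^2 + \norm{c_j}^2 - 2(v_i,c_j)$, once we have an estimate of $(v_i,c_j)$ the squared distance follows by adding known quantities; so it suffices to build a single subroutine for the overlap and postprocess its output in two different ways.

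First I would construct, for each pair $(i,j)$, a unitary whose success amplitude encodes $\braket{v_i|c_j}$. Appending an ancilla qubit and invoking the two given loading unitaries in a controlled fashion, I prepare $\ket{i}\ket{j}\tfrac{1}{\sqrt{2}}(\ket{0}\ket{v_i} + \ket{1}\ket{c_j})$ in time $O(T)$; a Hadamard on the ancilla yields $\ket{i}\ket{j}\tfrac{1}{2}(\ket{0}(\ket{v_i}+\ket{c_j}) + \ket{1}(\ket{v_i}-\ket{c_j}))$, so the probability of flagging the ancilla in $\ket{1}$ is $p = \tfrac{1}{2}(1 - \braket{v_i|c_j})$. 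This is exactly a unitary $U$ of the form required by the Amplitude estimation theorem, with $\sin^2(\theta)=p$ and $T(U)=O(T)$.

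Next I would apply amplitude estimation to write an estimate $\bar p$, hence $\bar s := 1 - 2\bar p$ of $\braket{v_i|c_j}$, into an output register, and then apply a reversible arithmetic circuit multiplying by the stored norms to obtain $\overline{(v_i,c_j)} = \norm{v_i}\norm{c_j}\,\bar s$, or combining with $\norm{v_i}^2,\norm{c_j}^2$ to obtain $\overline{d^2(v_i,c_j)}$. For the error bookkeeping, note $|\overline{(v_i,c_j)} - (v_i,c_j)| = \norm{v_i}\norm{c_j}\,|\bar s - s| = 2\norm{v_i}\norm{c_j}\,|\bar p - p|$, so an additive error $\epsilon$ on the inner product (and likewise on the distance, up to the factor $2$) requires estimating $p$ to additive error $\xi = \Theta(\epsilon/(\norm{v_i}\norm{c_j}))$. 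Feeding this into the Amplitude estimation theorem with multiplicative precision $\eta = \xi/p$ gives running time $O(T(U)/(\eta\sin\theta)) = O(T\sqrt{p}/\xi) = \widetilde{O}(\norm{v_i}\norm{c_j}\,T/\epsilon)$, using $\sqrt{p}\le 1$. Finally I would boost the success probability from a constant to $1-\Delta$ by running $O(\log(1/\Delta))$ independent copies and taking a median, which contributes the $\log(1/\Delta)$ factor and, via a union bound over the (at most two) estimation steps, accounts for the stated $1-2\Delta$.

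The main obstacle I expect is carrying out the amplitude estimation \emph{coherently}, so that the output is the clean map $\ket{i}\ket{j}\ket{0}\mapsto\ket{i}\ket{j}\ket{\overline{\,\cdot\,}}$ rather than a measured value: this requires uncomputing the phase-estimation ancillae and the loading registers, and implementing the median-of-copies step reversibly. The error analysis itself is routine once the $\norm{v_i}\norm{c_j}$ rescaling is tracked carefully, but the fact that the target error on the normalized overlap shrinks as the norms grow is exactly what produces the norm-dependent runtime; the delicate point is to confirm that $\sqrt{p}\le 1$ keeps the amplitude-estimation cost linear in $\norm{v_i}\norm{c_j}/\epsilon$ uniformly over all pairs.
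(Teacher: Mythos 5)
Your proposal is correct and takes essentially the same route as the proof the paper relies on from \citep{kerenidis2019q} (and mirrors in its own proof of Lemma~\ref{lemma:quadratic forms}): a controlled-loading Hadamard test that encodes the normalized overlap $\braket{v_i|c_j}$ into a measurement probability, coherent amplitude estimation with the additive precision rescaled by the known norms, and a median over $O(\log(1/\Delta))$ copies to boost the success probability. The error and runtime bookkeeping (additive error $\Theta(\epsilon/\norm{v_i}\norm{c_j})$ on the probability, cost $O(T\sqrt{p}/\xi)$ with $\sqrt{p}\le 1$) matches the reference argument, so there is no gap to report.
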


\begin{lemma}[Estimation of quadratic forms]\label{lemma:quadratic forms}
    Assume to have quantum access to a symmetric positive definite matrix $A \in \mathbb{R}^{n \times n}$ such that $\norm{A}\leq 1$, and quantum access to a matrix $V \in \mathbb{R}^{n \times d}$. For $\epsilon,\delta > 0$, there is a quantum algorithm that succeeds with probability at least $1-2\delta$, and perform the mapping $\ket{i} \mapsto \ket{i}\ket{\overline{s_i}}$, for $|s_i - \overline{s_i}| \leq \epsilon$, where $s_i$ is either:
    \begin{itemize}
    \item $(\ket{v_i},A\ket{v_i})$ in time $O(\frac{\mu(A)}{\epsilon})$
    \item $(\ket{v_i},A^{-1}\ket{v_i})$ in time $O(\frac{\mu(A)\kappa(A)}{\epsilon})$
    \end{itemize}
    The algorithm can return an estimate of $\overline{(v_i, Av_i)}$ such that $\overline{(v_i, Av_i)} - (v_i, Av_i) \leq \epsilon$ using quantum access to the norm of the rows of $V$ by increasing the runtime by a factor of $\norm{v_i}^2$. 
    \end{lemma}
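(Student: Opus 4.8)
The plan is to recognize both quantities as overlaps of the unit state $\ket{v_i}$ with the state obtained by applying $A$ (respectively $A^{-1}$) to it, and then to extract these overlaps by combining the block-encoding primitives behind quantum linear algebra (Theorem~\ref{qla}) with amplitude estimation. Writing $s_i = (\ket{v_i}, A\ket{v_i}) = \langle v_i | A | v_i\rangle$, the state $\ket{v_i}$ is preparable in polylogarithmic time from quantum access to $V$, and since $A$ is symmetric positive definite with $\norm{A}\leq 1$, quantum access to $A$ yields a block encoding whose top-left block is $A/\mu(A)$ and which is implementable in time $\widetilde{O}(1)$. The reason to target amplitude estimation rather than the $\ell_2$ tomography of Theorem~\ref{thm:tom2} is that $s_i$ is a single scalar: amplitude estimation costs only $\widetilde{O}(1/\epsilon)$, whereas tomography would pay $1/\epsilon^2$ and an extra factor of $d$.

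For the first item I would assemble, out of this block encoding and the preparation of $\ket{v_i}$, a single unitary together with a one-qubit control (a Hadamard/swap-test construction) so that the amplitude to be estimated equals $\tfrac12\big(1 + s_i/\mu(A)\big)$; realness of this amplitude follows from $A = A^{T}$ and $v_i$ real. The key point here is that one must feed the \emph{block encoding} directly into amplitude estimation rather than first preparing the normalized state $\ket{A v_i}$: the latter would invoke amplitude amplification and pay a spurious factor $\kappa(A)$, whereas estimating the block-encoded amplitude does not. Calibrating amplitude estimation to additive error $\epsilon/\mu(A)$ in the amplitude (so that the induced error on $s_i$ is at most $\epsilon$) costs $\widetilde{O}(\mu(A)/\epsilon)$ applications of a polylogarithmic-cost unitary, giving the stated $O(\mu(A)/\epsilon)$.

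For the second item the only change is $A \mapsto A^{-1}$. Here I would use the quantum linear algebra routines (Theorem~\ref{qla} and its inversion analogue) to turn quantum access to $A$ into access to a block encoding of $A^{-1}$: since $\norm{A}\leq 1$ and the smallest singular value is $\geq 1/\kappa(A)$, inversion is realized by a matrix function whose evaluation carries condition-number cost, so one application costs $\widetilde{O}(\kappa(A)\mu(A))$ and the encoded block is $A^{-1}$ rescaled by $\Theta(\kappa(A))$; equivalently, the inversion routine produces the normalized $\ket{A^{-1}v_i}$ in time $\widetilde{O}(\kappa(A)\mu(A))$ and an estimate of $\norm{A^{-1}\ket{v_i}}$. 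Feeding this into the same Hadamard-test-plus-amplitude-estimation scheme, or estimating the overlap $\langle v_i | \widehat{A^{-1}v_i}\rangle$ and multiplying by the norm, and calibrating the precision to additive error $\epsilon$ on $s_i$, targets $\widetilde{O}(\kappa(A)\mu(A)/\epsilon)$. The success probability $1-2\delta$ comes from the failure probabilities of amplitude estimation and of the linear-algebra subroutine, which I boost by $O(\log(1/\delta))$ repetitions and a median (absorbed into $\widetilde{O}$). For the unnormalized quadratic form, $(v_i, Av_i) = \norm{v_i}^2\,(\ket{v_i}, A\ket{v_i})$, so given quantum access to the row norms I multiply by the known $\norm{v_i}^2$, which tightens the internal precision by that factor and multiplies the runtime accordingly.

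The main obstacle is the $A^{-1}$ case. One must carefully balance the subnormalization $\Theta(\kappa(A))$ of the inverse block encoding against the additive-error target, and control the error propagation when an $O(1)$ overlap estimate is rescaled by a norm $\norm{A^{-1}\ket{v_i}}$ that can be as large as $\kappa(A)$; the delicate point is arranging the amplitude-estimation precision and the variable-time amplification inside the inversion routine so that the condition-number factors combine to exactly the single stated power rather than accumulating an extra $\kappa(A)$. Getting the Hadamard/swap test set up so that the estimated amplitude is real and \emph{directly} proportional to $s_i$ is precisely what keeps the dependence on $\epsilon$ linear and lets the cited amplitude-estimation theorem apply cleanly.
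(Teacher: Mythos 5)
Your proposal follows essentially the same route as the paper's proof: both apply the unamplified quantum linear algebra primitive to $\ket{v_i}$ (rather than preparing the normalized $\ket{Av_i}$), interfere the flagged branch with $\ket{v_i,0}$ via a controlled Hadamard-test-style construction so that an ancilla's outcome probability equals $\tfrac{1}{2}\left(1+\mathrm{const}\cdot s_i\right)$, and read this off with amplitude estimation boosted by a median lemma, paying the extra $\kappa(A)$ factor for the $A^{-1}$ case and the $\norm{v_i}^2$ factor for unnormalized vectors exactly as you describe. The differences are cosmetic: you keep the $\mu(A)$ subnormalization explicit in the estimated amplitude, whereas the paper suppresses it in the states and accounts for it only in the runtime.
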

    
    \begin{proof}
    Let's analyze first the case where we want to compute the quadratic form with $A$, and after the case for $A^{-1}$. We can use Theorem \ref{qla} to perform the following mapping:
    
\begin{align}\label{eq:quadatic A}
        \ket{i}\ket{v_i} & \mapsto  \ket{i} \Big(\norm{Av_i}\ket{Av_i,0} +  \sqrt{1-\gamma^2}\ket{G,1} \Big)\\
        &= \ket{i}\ket{\psi_i}
\end{align}

For the case where we want to compute the quadratic form for $A^{-1}$, we define the constant $C = O(1/\kappa(A))$, and define $\ket{\psi_i}$ as the result of the mapping:
     \begin{align}\label{eq:quadatic A minus 1}
      \ket{i}\ket{v_i}\mapsto  \ket{i} \Big( C\norm{A^{-1}v_i}\ket{A^{-1}v_i,0} + \sqrt{1-\gamma^2}\ket{G,1} \Big)
 \end{align}

We also define a second register $\ket{i}\ket{v_i,0}$. Using controlled operations, we can create the state:
    
\begin{align}\label{eq:measure}
    \frac{1}{2}\ket{i}\left(\ket{0}(\ket{v_i,0}+\ket{\psi_i}) + \ket{1}(\ket{v_i,0}-\ket{\psi_i})   \right)
\end{align}
    It is simple to check that, for a given register $\ket{i}$, the probability of measuring $0$ is:
    $$p_i(0) = \frac{1+\norm{Av_i}\braket{Av_i|v_i}}{2}$$
    
For the case of $A^{-1}$, the probability of measuring $0$ in state of Equation \ref{eq:measure} is 
    $$p_i(0) = \frac{1+C\norm{A^{-1}v_i}\braket{A^{-1}v_i|v_i}}{2} $$
    
    For both cases, we are left with the task of coherently estimate the measurement probability in a quantum register and boost the success probability of this procedure. The unitaries that create the states in Equation \ref{eq:quadatic A} and \ref{eq:quadatic A minus 1} plugged into Equation \ref{eq:measure} (i.e before a measurement on the ancilla qubit) describe a mapping: $U_1:\ket{i} \mapsto \frac{1}{2}\ket{i} \left(\sqrt{p_i(0)}\ket{y_i,0} + \sqrt{1-p_i(0)}\ket{G_i,1} \right)$. We follow similarly the steps of the proof for inner product estimation in \cite{kerenidis2019q}. We use amplitude estimation in its original formulation, i.e. Theorem 12 of \cite{BHMT00}, along with median evaluation Lemma 8 of \cite{wiebe2014quantum}, called with failure probability $\delta$. Amplitude estimation gives a unitary that perform
\begin{equation}
            U_2 \ket{i} \mapsto \frac{1}{2}\ket{i} \left(\sqrt{\alpha}\ket{p_i(0),y_i,0} + \sqrt{1-\alpha}\ket{G'_i,1} \right)
\end{equation}
            and estimate $p_i(0)$ such that $|p_i(0) - \overline{p_i(0)}| < \epsilon$ for the case of $v_i^TAv_i$ and we choose a precision $\epsilon/C$ for the case of $v_i^TA^{-1}v_i$ to get the same accuracy. The version of amplitude estimation Theorem we used fails with probability $\leq \frac{8}{\pi^2}$, and thus, is suitable to be used in the median evaluation Lemma, which boost the success probability to $1-\delta$. The runtime of this procedure is given by combining the runtime of creating state $\ket{\psi_i}$, amplitude estimation, and the median Lemma. Since the error in the matrix multiplication step is negligible, and assuming quantum access to the vectors is polylogarithmic, the final runtime is of $O(\log(1/\delta)\mu(A)\log(1/\epsilon_2) / \epsilon)$, with an additional factor $\kappa(A)$ for the case of quadratic form of $A^{-1}$.				
            Note that if we want to estimate a quadratic form of two unnormalized vectors, we can just multiply this result by their norms. Note also that the absolute error $\epsilon$ now becomes relative w.r.t the norms, i.e. $\epsilon \norm{v_i}^2$.  If we want to obtain an absolute error $\epsilon'$, as in the case with normalized unit vectors, we have to run amplitude estimation with precision $\epsilon'=O(\epsilon/\norm{v_i}^2)$.
            To conclude, this subroutines succeeds with probability $1-\gamma$ and requires time $O(\frac{\mu(A) \log (1/\gamma) \log (1/\errmult)}{\epsilon_1})$, with an additional factor of $\kappa(A)$ if we were to consider the quadratic form for $A^{-1}$ and an additional factor of $\norm{v_i}^2$ if we were to consider the non-normalized vectors $v_i$. This concludes the proof of the Lemma. 
\end{proof}

\section{EM and Gaussian Mixture Models}

    GMM are probably the most used mixture model used to solve unsupervised classification problems. In unsupervised settings, we are given a training set of unlabeled vectors $v_1 \dots v_n \in \mathbb{R}^d$ which we represent as rows of a matrix $V \in \mathbb{R}^{n \times d}$. 
Let $y_i \in [k]$ one of the $k$ possible labels for a point $v_i$. We posit that the joint probability distribution of the data $p(v_i, y_i)=p(v_i | y_i)p(y_i)$, is defined as follow: $y_i \sim \text{Multinomial}(\theta)$ for $\theta \in \mathbb{R}^k$,  and $p(v_i|y_i = j) \sim \mathcal{N}(\mu_j, \Sigma_j) $. The $\theta_j$ such that $\sum_j \theta_j=1$ are called \emph{mixing weights}, i.e. the probabilities that $y_i = j$, and $ \mathcal{N}(\mu_j, \Sigma_j) $ is the Gaussian distribution centered in $\mu_j \in \mathbb{R}^d$ with covariance matrix $\Sigma_j \in \mathbb{R}^{d \times d}$  \citep{ng2012cs229}.  %
We use the letter $\phi$ to represent our \emph{base distribution}, which in this case is the probability density function of a multivariate Gaussian distribution $\mathcal{N}(\mu, \Sigma)$. Using this formulation, a GMM is expressed as: $		p(v) = \sum_{j=1}^k \theta_j \phi(v; \mu_j, \Sigma_j) $.  Fitting a GMM to a dataset reduces to finding an assignment for the parameters of the model
$\gamma = (\theta, \bm \mu, \bm \Sigma) = (\theta, \mu_1, \cdots, \mu_k, \Sigma_1, \cdots, \Sigma_k) $
that best maximize the log-likelihood for a given dataset. Note that the algorithm used to fit GMM can return a local minimum which might be different than $\gamma^*$: the model that represents the global optimum of the likelihood function. For a given $\gamma$, the probability for an observation $v_i$ to be assigned to the component $j$ is given by: 
\begin{equation}\label{responsibility}
r_{ij} = \frac{\theta_j \phi(v_i; \mu_j, \Sigma_j )}{\sum_{l=1}^k \theta_l \phi(v_i; \mu_l, \Sigma_l).}\end{equation}
This value is called \emph{responsibility}, and corresponds to the posterior probability of the sample $i$ being assigned label $j$ by the current model. As anticipated, to find the best parameters of our generative model, we
	maximize the log-likelihood of the data. Alas, it is seldom possible to solve maximum likelihood estimation analytically
	(i.e. by finding the zeroes of the derivatives of the log-like function) for mixture models like the GMM. To complicate things, the likelihood function for GMM is not convex, and thus we might find some local minima \citep{friedman2001elements}.

	EM is an iterative algorithm that solves numerically the optimization problems linked to ML and MAP estimations.
    The classical EM algorithm works as follows: in the expectation step all the responsibilities are calculated, and we estimate the missing variables $y_i$'s given the current guess of the parameters $(\theta, \bm \mu, \bm \Sigma )$ of the model. 
	Then, in the maximization step, we use the estimate of the latent variables 
	obtained in the expectation step to update the estimate of the parameters: $\gamma^{t+1} = (\theta^{t+1}, \bm \mu^{t+1}, \bm \Sigma^{t+1} )$. 
	While in the Expectation step we calculate a lower bound on the likelihood, in the maximization step we maximize it. Since at each iteration the likelihood can only increase, the algorithm is guaranteed to converge, albeit possibly to a local optimum (see \citep{friedman2001elements} for the proof). 
    The stopping criterion is usually a threshold on the increment of the log-likelihood: if it changes less than a $\epsilon_\tau$ between two iterations, then the algorithm stops. As the value of the log-likelihood depends on the amount of data points in the training sets, it is often preferable to adopt a scale-free stopping criterion, which does not depend on the number of samples. For instance, in scikit-learn \citep{scikit-learn} the stopping criterion is given by a tolerance on the average increment of the log-probability, chosen to be some $10^{-3}$. More precisely, the stopping criterion in the quantum and classical algorithm is 
    $ | \mathbb{E}[ \log p(v_i;\gamma^{t})] - \mathbb{E}[\log p(v_i;\gamma^{t+1})] | < \epsilon_\tau.$

    \section{Quantum EM for GMM}\label{gmmsection}
    In this section, we present a quantum EM algorithm to fit a GMM. The algorithm can also be adapted fit other mixtures models where the probability distributions belong to the exponential family. As the GMM is both intuitive and one of the most widely used mixture models, our results are presented for the GMM case. As in the classical algorithm, we use some subroutines to compute the responsibilities and update our current guess of the parameters which resemble the E and the M step of the classical algorithm. While the classical algorithm has clearly two separate steps for Expectation and maximization, the quantum algorithm uses a subroutine to compute the responsibilities inside the step that performs the maximization. During the quantum M step, the algorithm updates the model by creating quantum states corresponding to parameters $\gamma^{t+1}$ and then recovering classical estimates for these parameters using quantum tomography or amplitude amplification. In order for the subroutines to be efficient, we build quantum access (as in Definition \ref{def:quantumaccess}) to the current estimate of the model, and we update it at each maximization step.

    \paragraph{Dataset assumptions in GMM} \label{assume} 
    In the remainig of the paper we make an assumption on the dataset, namely that all elements of the mixture contribute proportionally to the total responsibility:
    \begin{equation}\label{assump} \frac{\sum_{i=1}^n r_{ij}}{\sum_{i=1}^n r_{il}} = \Theta(1) \quad \forall j,l \in [k] \end{equation}
     This is equivalent to assuming that $\theta_j/\theta_l = \Theta(1) \quad \forall j,l \in [k]$.  
     The algortihm we propose can of course be used even in cases where this assumption does not hold. In this case, the running time will include a factor as in Eq. \ref{assump} which for simplicity we have taken as constant in what follows. Note that classical algorithms would also find it difficult to fit the data in certain cases, for example when some of the clusters are very small. In fact, it is known (and not surprising) that if the statistical distance between the probability density function of two different Gaussian distributions is smaller than $1/2$, then we can not tell for a point $v$ from which Gaussian distribution it belongs to, even if we knew the parameters \cite{moitra2018algorithmic}. Only for convenience in the analysis, we also assume the dataset as being normalized such that the shortest vector has norm $1$ and define $\eta:=max_i \norm{v_i}^2$ to be the maximum norm squared of a vector in the dataset.

    \paragraph{An approximate version of GMM}
  Here we define an approximate version of GMM, that we fit with QEM algorithm. The difference between this formalization and the original GMM is simple. Here we make explicit in the model the \emph{approximation error} introduced during the training algorithm. 
    
  \begin{definition}[Approximate GMM]\label{def:approxgmm}
    Let $\gamma^{t}=(\theta^{t}, \bm \mu^{t}, \bm \Sigma^{t}) = (\theta^{t}, \mu^{t}_1 \cdots \mu^{t}_k, \Sigma^{t}_1 \cdots \Sigma^{t}_k) $   a model fitted by the standard EM algorithm from $\gamma^{0}$ an initial guess of the parameters, i.e. $\gamma^{t}$ is the error-free model that standard EM would have returned after $t$ iterations. Starting from the same choice of initial parameters $\gamma^{0}$, fitting a GMM with the QEM algorithm with $\Delta=(\delta_\theta, \delta_\mu)$ means returning a model $\overline{\gamma}^{t} = (\overline{\theta}^{t}, \overline{\bm \mu}^{t}, \overline{\bm \Sigma}^{t})$ such that:
    \begin{itemize}
        \item  $\norm{\overline{\theta}^{t} - \theta^{t}} < \delta_\theta$,	
        \item  $\norm{\overline{\mu_j}^{t} - \mu_j^{t}} < \delta_\mu$\: for all $j \in [k]$, 
	\item $\norm{\overline{\Sigma_j}^{t} - \Sigma_j^{t}}  \leq \delta_\mu\sqrt{\eta}$ \: for all $j \in [k]$.
     \end{itemize}
    \end{definition}

    \paragraph{Quantum access to the mixture model} 

    Here we explain how to load into a quantum computer a GMM and a dataset represented by a matrix $V$. This is needed for a quantum computer to be able to work with a machine learning model. The definition of quantum access to other kind of models is analogous. 

        \begin{definition}[Quantum access to a GMM]\label{def:quantumaccess}
            We say that we have quantum access to a GMM of a dataset $V \in \R^{n\times d}$ and model parameters $\theta_j \in \mathbb{R}$, $\mu_{j} \in \R^{d}, \Sigma_{j} \in \R^{d\times d}$ for all $j\in [k]$  if we can perform in time  $O(\text{polylog}(d))$ the following mappings: 
            \begin{itemize}
                \item $\ket{j} \ket{0} \mapsto \ket{j}\ket{\mu_j}$,
                \item $\ket{j} \ket{i} \ket{0} \mapsto \ket{j} \ket{i}\ket{\sigma_i^j}$ for $i \in [d]$ where $\sigma_i^{j} $ is the $i$-th rows of $\Sigma_j \in \mathbb{R}^{d \times d}$, 
                \item $\ket{i}\ket{0}\mapsto\ket{i}\ket{v_i}$ for all $i \in [n]$,
                \item $\ket{i}\ket{0}\ket{0}\mapsto\ket{i}\ket{\text{vec}[v_iv_i^T]} = \ket{i}\ket{v_i}\ket{v_i}$ for $i \in [n]$,  
                \item $\ket{j} \ket{0}\mapsto \ket{j}\ket{\theta_j}$.
    \end{itemize}
        \end{definition}
      
For instance, one may use a QRAM data structure as in \cite{kerenidis2016recommendation,kerenidis2017quantumsquares} and Definition \ref{qram}. Quantum access to a matrix can be reduced to being able to perform the unitary mapping $\ket{i}\ket{j}\ket{0} \xrightarrow{} \ket{i}\ket{j}\ket{a_{ij}}$, in other words being able to map the indices $(i,j)$ to the element of the matrix $a_{ij}$, which is the usual quantum oracle model.

    \subsection{QEM algorithm and analysis}
        
        \begin{algorithm}[ht]
            \caption{QEM for GMM  } \label{qGMMEM}
            \begin{algorithmic}[1]
                
                \REQUIRE Quantum access to a GMM model, precision parameters $\delta_\theta, \delta_\mu,$ and threshold $\epsilon_\tau$. 
                \ENSURE  A GMM $\overline{\gamma}^t$ that maximizes locally the likelihood $\ell(\gamma;V)$, up to tolerance $\epsilon_\tau$. 
                \vspace{10pt} 
                \STATE Use a heuristic (Supp. Material) to determine the initial guess  $\gamma^0=(\theta^0, \bm \mu^0, \bm \Sigma^0)$, and build quantum access as in Definition \ref{def:quantumaccess} those parameters. 
                \STATE Use Lemma \ref{lemma:det} to estimate the log determinant of the matrices $\{ \Sigma_j^0 \}_{j=1}^k$.
                \STATE t=0
                \REPEAT 
                 \STATE \textbf{Step 1:} Get an estimate of $\theta^{t+1}$ using Lemma \ref{lemma:theta} such that 
                 $\norm{\overline{\theta}^{t+1} - \theta^{t+1}} \leq\delta_\theta. $ %
                \STATE \textbf{Step 2:} Get an estimate $\{ \overline{\mu_j}^{t+1} \}_{j=1}^k$ by using Lemma \ref{lemma:centroids} to estimate  each $\norm{\mu_j^{t+1}} $ and $\ket{\mu_j^{t+1}}$ such that 
                $\norm{\mu_j^{t+1} - \overline{\mu_j}^{t+1}} \leq \delta_\mu $. 
    
                \STATE \textbf{Step 3:} Get an estimate $\{ \overline{\Sigma_j}^{t+1} \}_{j=1}^k$ by using Lemma \ref{lemma:sigma} to estimate $\norm{\Sigma_j^{t+1}}_F$ and $\ket{\Sigma_j^{t+1}}$ such that 
                $ \norm{\Sigma_j^{t+1} - \overline{\Sigma_j}^{t+1} } \leq \delta_{\mu} \sqrt{\eta} $. 
                \STATE \textbf{Step 4:} Estimate $\mathbb{E}[ \overline{p(v_i;\gamma^{t+1})}]$  up to error $\epsilon_\tau/2$ using Theorem \ref{lemma:likelihoodestimation}.  
                \STATE \textbf{Step 5:} Build quantum access to $\gamma^{t+1}$, and use Lemma \ref{lemma:det} to estimate the determinants $\{ \overline{\log det(\Sigma_j^{t+1})} \}_{j=0}^k$.
                \STATE $t=t+1$
                \UNTIL 
                $$| \mathbb{E}[ \overline{p(v_i;\gamma^{t})}] - \mathbb{E}[\overline{p(v_i;\gamma^{t-1})}] | < \epsilon_\tau$$
                \STATE Return $\overline{\gamma}^{t}=(\overline{\theta}^t, \overline{\bm \mu}^t, \overline{\bm \Sigma}^t )$
            \end{algorithmic}
        \end{algorithm}

We describe the different steps of QEM as Algorithm \ref{qGMMEM}, and analyze its running time, by showing the runtime and the error for each of the steps of the algorithm. Quantum initialization strategies exists, and are described in the Supp. Material. In the statement of the following Lemmas, $\Sigma$ is defined as one of the covariance matrices of the mixture. It appears in the runtime of the algorithms because is introduced during the proofs, which can be found in the Supp. Material.

\subsubsection{Expectation}
        In this step of the algorithm we are just showing how to compute efficiently the responsibilities as a quantum state. First, we compute the responsibilities in a quantum register, and then we show how to put them as amplitudes of a quantum state. We start by a classical algorithm used to efficiently approximate the log-determinant of the covariance matrices of the data. At each iteration of QEM we need to approximate the log-determinant of the updated covariance matrices using Lemma \ref{lemma:det}.  We will see from the error analysis that in order to get an estimate of the GMM, we need to call Lemma \ref{lemma:det} with precision for which the runtime of Lemma \ref{lemma:det} gets subsumed by the running time of finding the updated covariance matrices through Lemma \ref{lemma:sigma}. Thus, we do not explicitly write the time to compute the log-determinant from now on in the running time of the algorithm and when we say that we update $\Sigma$ we include an update on the estimate of $\log(\det(\Sigma))$ as well. The method can be extended to fit other kind of mixture models with base distribution in exponential family, by replacing the function to compute the posterior probabilities $p(v_i|j)$ (Lemma \ref{lemma:responsibility}) by a different procedure for computing the responsibility.  This  technique can be used to bound the error of Lemma \ref{lemma:responsibility} for distributions in the exponential family using the smoothness properties of the softmax function. The log-determinant in the Gaussian case generalizes to the cumulant generating function $A(\nu)$ for the exponential families. A fast quantum algorithm for the estimation of the log-determinant has been put forward by \cite{boutsidis2017randomized}. Afterwards \cite{han2015large} improved the dependence upon the condition number by a factor of square root.

\begin{theorem}[\citep{han2015large}]\label{cdet}
    Let $M\in \mathbb{R}^{d \times d}$ be a positive definite matrix with eigenvalues in the interval $(\sigma_{min}, 1)$. Then for all $\delta \in (0,1)$ and $\epsilon >0$ there is a classical algorithm that outputs $\overline{\log (det(M))}$ such that $| \overline{\log (det(M))} - \log(det(M))| \leq 2\epsilon |\log(det(M))|$
    with probability at least $1-\delta$ in time
    $$ T_{Det, \epsilon} := O \left( \sqrt{\kappa(M)}\frac{\log(1/\epsilon)\log(1/\delta) }{\epsilon^2}nnz(M) \right).  $$
 \end{theorem}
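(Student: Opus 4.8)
The plan is to reduce the log-determinant to a trace of a matrix function and then combine a polynomial approximation of $\log$ with a stochastic trace estimator, keeping the cost proportional to $nnz(M)$. First I would write $\log(\det(M)) = \mathrm{tr}(\log(M)) = \sum_{i=1}^d \log(\lambda_i)$, where the $\lambda_i$ are the eigenvalues of $M$, all lying in $(\sigma_{min},1)$. The goal is to estimate this trace without ever diagonalizing $M$ or forming a dense matrix function, so that only matrix-vector products with $M$ are ever required.

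The first ingredient is a Chebyshev approximation. Since the spectrum is bounded away from $0$, I would affinely rescale $[\sigma_{min},1]$ onto $[-1,1]$ and approximate the (rescaled) scalar function $t\mapsto \log t$ by a degree-$N$ Chebyshev expansion $p_N$. The key quantitative fact is that $\log$ is analytic on a Bernstein ellipse whose size is governed by the distance of the interval from the origin; the standard Chebyshev truncation bound then shows that to make $\norm{p_N - \log}_\infty$ smaller than a relative error $\epsilon$ on $[\sigma_{min},1]$ it suffices to take $N = O(\sqrt{\kappa(M)}\,\log(1/\epsilon))$. This $\sqrt{\kappa}$ dependence, rather than the $\kappa$ one obtains from a naive Taylor/Neumann series, is precisely the improvement over \cite{boutsidis2017randomized}, and it is the step I expect to be the technical heart of the argument.

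The second ingredient is Hutchinson's stochastic trace estimator: for a symmetric matrix $B$ and random vectors $z_j$ with i.i.d.\ Rademacher entries, $\frac{1}{m}\sum_{j=1}^m z_j^\top B z_j$ is an unbiased estimator of $\mathrm{tr}(B)$, and a variance/concentration argument shows that $m = O(\log(1/\delta)/\epsilon^2)$ samples suffice to reach relative error $\epsilon$ with probability at least $1-\delta$. Applying this with $B = p_N(M)$, each quadratic form $z_j^\top p_N(M) z_j$ is evaluated through the three-term Chebyshev recurrence, which only needs $N$ matrix-vector products with $M$; each such product costs $O(nnz(M))$, so one sample costs $O(N\cdot nnz(M))$.

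Finally I would combine the deterministic polynomial-approximation error and the stochastic trace-estimation error by a triangle inequality and a union bound, choosing the degree and sample count as above. Here the relative bound falls out cleanly: since every eigenvalue is below $1$, each $\log(\lambda_i)$ is negative, so $\sum_i |\log(\lambda_i)| = |\log(\det(M))|$, and a relative per-eigenvalue error $\epsilon$ aggregates to $\epsilon\,|\log(\det(M))|$; absorbing the estimator error gives the stated $2\epsilon|\log(\det(M))|$ bound. Multiplying the number of samples $O(\log(1/\delta)/\epsilon^2)$ by the per-sample cost $O(\sqrt{\kappa(M)}\log(1/\epsilon)\cdot nnz(M))$ then yields the claimed runtime $T_{Det,\epsilon}$. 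The main obstacle, as noted, is establishing the $\sqrt{\kappa}$ bound on the Chebyshev degree via the Bernstein-ellipse estimate, rather than the loose $\kappa$ factor that the elementary series expansions produce.
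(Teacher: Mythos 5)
Your reconstruction is correct and matches the argument behind this result: the paper itself imports Theorem~\ref{cdet} from \citep{han2015large} without proof, and that reference establishes it exactly as you describe, by combining a Chebyshev expansion of $\log$ on the spectral interval (with degree $O(\sqrt{\kappa(M)}\log(1/\epsilon))$ via the Bernstein-ellipse bound, which is precisely the square-root improvement over the Taylor-series approach of \citep{boutsidis2017randomized}) with Hutchinson's trace estimator at $O(\log(1/\delta)/\epsilon^2)$ samples, each costing $N$ matrix-vector products at $O(nnz(M))$ apiece. Your handling of the relative-error bookkeeping (negativity of the $\log\lambda_i$ so that per-eigenvalue relative error aggregates to $\epsilon|\log(\det(M))|$, plus the triangle inequality giving the factor $2$) is also the same as in the source.
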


As a consequence of the previous Theorem, we have the following:

 \begin{theorem}[Determinant estimation]\label{lemma:det}
    There is an algorithm that, given as input a matrix $\Sigma$ and a parameter $0 < \delta < 1$,  outputs an estimate $ \overline{\log(det(\Sigma) )}$ such that $| \overline{\log (det(\Sigma))} - \log(det(\Sigma))| \leq \epsilon$ with probability $1-\delta $ in time:
    $$T_{Det,\epsilon} = \widetilde{O} \left( \epsilon^{-2}\sqrt{\kappa(\Sigma)}\log(1/\delta)nnz(\Sigma)|\log(\det(\Sigma))| \right) $$
    
    \end{theorem}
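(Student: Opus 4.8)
The plan is to derive Theorem \ref{lemma:det} as a corollary of Theorem \ref{cdet}, reconciling two mismatches between the statements: Theorem \ref{cdet} requires the spectrum to lie in $(\sigma_{min},1)$ and delivers a \emph{relative} error $2\epsilon_{rel}|\log(\det(M))|$, whereas here $\Sigma$ is an arbitrary covariance matrix and we want an \emph{absolute} error $\epsilon$. So the two steps are a normalization of $\Sigma$ to meet the spectral hypothesis, followed by a rescaling of the internal precision parameter to turn the relative guarantee into an absolute one.

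First I would handle the spectral hypothesis. Since $\Sigma$ is symmetric positive definite, I set $M = \Sigma/\norm{\Sigma}$, so that the eigenvalues of $M$ lie in $(1/\kappa(\Sigma),1)$; note $\kappa(M)=\kappa(\Sigma)$ and $nnz(M)=nnz(\Sigma)$, so the rescaling leaves the runtime parameters unchanged. The top eigenvalue $\norm{\Sigma}$ can be estimated to high precision cheaply (e.g.\ by power iteration), at a cost dominated by the subsequent steps. The determinants are linked by $\log(\det(\Sigma)) = \log(\det(M)) + d\log\norm{\Sigma}$, so an estimate of $\log(\det(M))$ yields one of $\log(\det(\Sigma))$ after adding the (essentially exact) known shift $d\log\norm{\Sigma}$, which does not degrade the additive error.

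Next I would convert relative error to absolute error. Running Theorem \ref{cdet} on $M$ with internal precision $\epsilon_{rel}$ returns an estimate with error at most $2\epsilon_{rel}|\log(\det(M))|$; to bound this by the target $\epsilon$ I set $\epsilon_{rel} = \epsilon/(2|\log(\det(M))|)$. Substituting into the runtime of Theorem \ref{cdet}, the $\epsilon_{rel}^{-2}$ factor becomes $\epsilon^{-2}|\log(\det(M))|^{2}$, with the $\log(1/\epsilon_{rel})$ factor absorbed into $\widetilde O$, producing the $\sqrt{\kappa(\Sigma)}\,\log(1/\delta)\,nnz(\Sigma)$ terms together with the $|\log(\det(\Sigma))|$ dependence of the statement.

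The main obstacle is precisely this relative-to-absolute conversion: the naive substitution yields a \emph{quadratic} dependence $|\log(\det(M))|^{2}$ on the log-determinant, whereas the statement lists only a single power, and moreover the runtime naturally involves the log-determinant of the normalized matrix $M$ rather than of $\Sigma$ itself (the two differ by the shift $d\log\norm{\Sigma}$). I would therefore verify carefully which power of $|\log(\det(\Sigma))|$ is genuinely required, reconciling the gap either by folding one slowly-growing factor into $\widetilde O$ in the regime of interest or by a tighter reading of the relative-error guarantee, and I would check that the $1-\delta$ success probability combines correctly across the rescaling and the estimate of $\norm{\Sigma}$.
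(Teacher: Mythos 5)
Your proposal follows essentially the same route as the paper's proof: rescale $\Sigma$ so that its spectrum lies in the required interval, use the identity $\log\det(\Sigma/c) = \log\det(\Sigma) - d\log c$ to transfer the estimate back, and convert the relative guarantee of Theorem \ref{cdet} into an absolute one by choosing the internal precision inversely proportional to the magnitude of the log-determinant. There is, however, one operational gap in your plan: you set $\epsilon_{rel} = \epsilon/(2|\log(\det(M))|)$, but $|\log(\det(M))|$ is precisely the quantity being estimated, so this choice is circular as written. The paper resolves this with a two-stage bootstrap: it first calls Theorem \ref{cdet} with constant precision ($\epsilon = 1/4$) to obtain a rough estimate $\gamma$ satisfying $\frac{1}{2} \leq \gamma/\log(\det(\Sigma)) \leq \frac{3}{2}$, and then calls it a second time with precision $\epsilon' = \epsilon/(4\gamma)$; the first call is cheap since its precision is constant, and the constant-factor uncertainty in $\gamma$ only affects constants in the final error bound. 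You would need to add this (or an equivalent guess-and-double scheme) to make your precision choice well-defined.

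On what you call the main obstacle: your concern is well-founded, and you should not expect to remove it by a tighter reading of the relative-error guarantee. Substituting $\epsilon' = \Theta(\epsilon/|\log\det|)$ into the $1/\epsilon'^2$ factor of Theorem \ref{cdet} gives $|\log\det|^2/\epsilon^2$, and the paper's own proof incurs exactly the same quadratic factor, yet its statement reports a runtime with a single power of $|\log(\det(\Sigma))|$ (and is additionally inconsistent between the $\sqrt{\kappa(\Sigma)}$ of the statement and the $\kappa(\Sigma)$ appearing in the display at the end of its proof). So the discrepancy you identified is an inaccuracy in the paper's bookkeeping rather than a defect of your argument; the honest runtime obtainable by this method is $\widetilde{O}\left( \epsilon^{-2}\sqrt{\kappa(\Sigma)}\log(1/\delta)\,nnz(\Sigma)\,|\log(\det(\Sigma))|^{2} \right)$, up to the (minor, and also glossed over by the paper) distinction between the log-determinant of $\Sigma$ and that of the rescaled matrix.
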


    \begin{proof}
    In order to apply Theorem \ref{cdet}, we need to be sure that all the eigenvalues lie in $(\sigma_{min}, 1)$. In order to satisfy this condition, we can scale the matrix by a constant factor $c$, such that $\Sigma'=\Sigma/c$. In this way, $\log det(\Sigma') = \log \prod_i^d (\sigma_i/c)$. Therefore, $\log (det(\Sigma')) = \log (det (\Sigma)) - \log(c^d)$. This will allow us to recover the value of $\log det(\Sigma)$ by using Theorem \ref{cdet}. We apply the Theorem with precision $\epsilon=1/4$ to get an estimate $\gamma$ such that $\frac{1}{2} \leq \frac{\gamma} { \log(\det(\Sigma)) }  \leq \frac{3}{2}.$  Then, to have an estimate with absolute error $\epsilon$, we apply Theorem  \ref{cdet} with precision $\epsilon' =  \frac{\epsilon}{ 4 \gamma}$.  This gives us an estimate for $\log(\det(\Sigma))$ with error $2 \epsilon' \log(\det(\Sigma)) \leq \epsilon$ in time:
    $$ \widetilde{O} \left( \epsilon^{-2}\kappa(\Sigma)\log(1/\delta)nnz(\Sigma) {|\log(\det(\Sigma))|} \right).$$
    \end{proof}

    Now we can state the Lemma used to compute the responsibilities: a quantum algorithm for evaluating the exponent of a Gaussian distribution. 
    
        \begin{lemma}[Quantum Gaussian Evaluation]\label{lemma:gaussian}
            Suppose we have  stored in the QRAM a matrix $V \in \mathbb{R}^{n \times d}$, the centroid $\mu \in \mathbb{R}^d$ and the covariance matrix $\Sigma \in \mathbb{R}^{d \times d}$  of a multivariate Gaussian distribution $\phi(v | \mu, \Sigma)$, as well as an estimate for $\log(\det(\Sigma))$. Then for $\errgauss>0$, there exists a quantum algorithm that with probability $1-\gamma$ performs the mapping, 
            \begin{itemize}
                \item $U_{G, \errgauss } : \ket{i} \ket{0} \to \ket{i}\ket{\overline{s_i}}$  such that $|s_i - \overline{s_i}| < \epsilon_{1}$, where $s_i = - \frac{1}{2} ((v_i-\mu)^T\Sigma^{-1}(v_i-\mu) + d \log 2\pi + \log (det(\Sigma)))$  is the exponent for the Gaussian probability density function. 
            \end{itemize}
            The running time is $T_{G,\errgauss} = O\left(\frac{\kappa(\Sigma)\mu(\Sigma) \log (1/\gamma) }{\errgauss}\eta \right).$
            
        \end{lemma}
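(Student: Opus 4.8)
The plan is to observe that the exponent $s_i$ splits into three pieces,
$$s_i = -\tfrac{1}{2}\left( (v_i-\mu)^T\Sigma^{-1}(v_i-\mu) + d\log 2\pi + \log(\det(\Sigma))\right),$$
of which only the Mahalanobis quadratic form $Q_i := (v_i-\mu)^T\Sigma^{-1}(v_i-\mu)$ depends on $i$ and is nontrivial to compute. The term $d\log 2\pi$ is a known constant, and a sufficiently accurate estimate of $\log(\det(\Sigma))$ is available as an input (obtained once per iteration through Lemma \ref{lemma:det}); both are added in classically. Hence it suffices to produce, in superposition over $i$, an estimate $\overline{Q_i}$ with $|Q_i - \overline{Q_i}|$ of order $\errgauss$, and then set $\overline{s_i} = -\tfrac12(\overline{Q_i} + d\log 2\pi + \overline{\log(\det(\Sigma))})$. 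The final bound $|s_i - \overline{s_i}| < \errgauss$ follows by the triangle inequality after allotting a constant fraction of $\errgauss$ to each estimated quantity.

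The heart of the proof is therefore to estimate $Q_i$, which I would do by feeding the centered vectors $w_i := v_i-\mu$ into the quadratic-form estimator of Lemma \ref{lemma:quadratic forms} with $A=\Sigma$ (rescaled so that $\norm{\Sigma}\le 1$, tracking the scale in $Q_i$ and in the log-determinant). The one nonstandard ingredient is building quantum access to $w_i$ from quantum access to $V$ and to $\mu$ \emph{without} rebuilding a QRAM for the centered data, which would cost $\Omega(n)$ and recur at every iteration since $\mu$ changes. I would do this on the fly: using controlled queries, prepare, conditioned on $\ket{i}$, the norm-weighted superposition $\frac{1}{\sqrt{\norm{v_i}^2+\norm{\mu}^2}}\left(\norm{v_i}\ket{0}\ket{v_i} + \norm{\mu}\ket{1}\ket{\mu}\right)$, apply a Hadamard on the ancilla, and postselect on outcome $\ket{1}$; the surviving branch is proportional to $\sum_j (v_{ij}-\mu_j)\ket{j}$, i.e. $\ket{w_i}$. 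Amplitude estimation on the ancilla simultaneously yields $\norm{w_i}$. The success probability is $\norm{w_i}^2/\left(2(\norm{v_i}^2+\norm{\mu}^2)\right)$, so one copy of $\ket{w_i}$ costs $T_{w_i}=O\!\left(\sqrt{\norm{v_i}^2+\norm{\mu}^2}/\norm{w_i}\right)$ after amplitude amplification.

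With quantum access to $\ket{w_i}$ and the norms in hand, Lemma \ref{lemma:quadratic forms} (in its $A^{-1}$ mode, with the non-normalized correction factor $\norm{w_i}^2$) returns $\overline{Q_i}$ with additive error $\errgauss$ and success probability $1-\gamma$, in time $O\!\left(\mu(\Sigma)\kappa(\Sigma)\, T_{w_i}\,\norm{w_i}^2\,\log(1/\gamma)/\errgauss\right)$, where the extra $T_{w_i}$ factor accounts for the non-polylogarithmic preparation of $\ket{w_i}$ inside Theorem \ref{qla}. The key cancellation is that $T_{w_i}\norm{w_i}^2 = \sqrt{\norm{v_i}^2+\norm{\mu}^2}\,\norm{w_i}$: the dangerous $1/\norm{w_i}$ from the postselection overhead is absorbed by one power of $\norm{w_i}$ from the norm correction, so no blow-up occurs when $v_i$ is close to $\mu$. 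Using the dataset normalization ($\norm{v_i}^2\le\eta$, and $\norm{\mu}^2=O(\eta)$ since $\mu$ lies in the convex hull of the data, whence $\norm{w_i}^2=O(\eta)$), this product is $O(\eta)$, giving the claimed runtime $T_{G,\errgauss}=O\!\left(\kappa(\Sigma)\mu(\Sigma)\eta\log(1/\gamma)/\errgauss\right)$.

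The step I expect to be the main obstacle is exactly the centering: realizing quantum access to $v_i-\mu$ cheaply and, above all, proving that the postselection overhead telescopes with the norm-correction factor rather than producing an uncontrolled $1/\norm{v_i-\mu}$ dependence, a point where a small Mahalanobis distance would otherwise be fatal. A secondary, purely bookkeeping, obstacle is the error propagation: one must track how the rescaling of $\Sigma$ to unit spectral norm shifts both $Q_i$ and $\log(\det(\Sigma))$, and choose the precisions of the quadratic-form estimate and of the log-determinant estimate so that their combined contribution to $\overline{s_i}$ stays below $\errgauss$.
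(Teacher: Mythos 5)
Your proposal takes a genuinely different route from the paper on exactly the point you flag as the main obstacle. The paper never forms the centered vectors $w_i = v_i-\mu$ at all: it expands the Mahalanobis form algebraically as $v_i^T\Sigma^{-1}v_i - 2v_i^T\Sigma^{-1}\mu + \mu^T\Sigma^{-1}\mu$ and estimates each of the three terms to error $\errgauss/4$ via Lemma \ref{lemma:quadratic forms}, using only the states $\ket{v_i}$ and $\ket{\mu}$ that quantum access already provides; the norm-correction factors are bounded by $\eta$ directly, the log-determinant estimate is added in as you describe, and no quantity like $1/\norm{v_i-\mu}$ ever appears. (The cross term is a bilinear rather than a quadratic form, so strictly it needs a cosmetic extension of that lemma --- a detail the paper glosses over.) This costs three estimation calls instead of one, but it sidesteps entirely the centering machinery your proof spends most of its effort on.

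There is, moreover, a genuine gap in your telescoping argument. It is valid for a single fixed $i$, but the lemma requires a unitary $U_{G,\errgauss}:\ket{i}\ket{0}\to\ket{i}\ket{\overline{s_i}}$ acting in superposition over $i$ (and it is consumed that way downstream, e.g.\ in Lemma \ref{lemma:responsibility}). A single circuit must fix its gate count once for all branches: the amplitude-amplification stage needs $\max_i O\left(\sqrt{\eta}/\norm{w_i}\right)$ rounds (worst when $\norm{w_i}$ is small), while the amplitude-estimation stage inside Lemma \ref{lemma:quadratic forms} needs $\max_i O\left(\norm{w_i}^2/\errgauss\right)$ rounds (worst when $\norm{w_i}$ is large). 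These maxima are attained at different $i$, so the runtime is governed by their product rather than by the per-branch products $T_{w_i}\norm{w_i}^2$, and the cancellation you rely on does not survive: a single data point with $v_i\approx\mu$ makes the cost blow up. The repair is to not amplify at all: your pre-postselection state already encodes the unnormalized component $w_i/N_i$ with $N_i=\sqrt{2(\norm{v_i}^2+\norm{\mu}^2)}=\Theta(\sqrt{\eta})$ \emph{uniformly} over $i$; apply the block-encoding of $\Sigma^{-1}$ to that component and run one coherent amplitude estimation to additive precision $O\left(\errgauss/(\eta\,\kappa(\Sigma))\right)$, then rescale by the classically known $N_i^2$. The factor $1/\norm{w_i}$ then never appears --- the cancellation happens inside amplitude estimation automatically --- and the runtime is $O\left(\kappa(\Sigma)\mu(\Sigma)\eta\log(1/\gamma)/\errgauss\right)$ uniformly over branches, as claimed. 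With that modification your centering route is sound; as written, the amplify-then-correct composition is the step that fails.
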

        
        \begin{proof}
            We use quantum linear algebra and inner product estimation to estimate the quadratic form $(v_i-\mu)^T\Sigma^{-1}(v_i-\mu)$ 
            to error $\epsilon_{1}$. First, we decompose the quadratic form as $v_i^T\Sigma^{-1}v_i - 2v_i^T\Sigma^{-1}\mu + \mu^T\Sigma^{-1}\mu $
            and separately approximate each term in the sum to error $\epsilon_{1}/4$, using Lemma \ref{lemma:quadratic forms}. and obtain an estimate for $\frac{1}{2} ((v_i-\mu)^T\Sigma^{-1}(v_i-\mu)$ within error $\epsilon_{1}$. 
            
            Recall that (through Lemma  \ref{lemma:det}) we also have an estimate of the log-determinant to error $\errgauss$. 				
            Thus we obtain an approximation 
            for $- \frac{1}{2} ((v_i-\mu)^T\Sigma^{-1}(v_i-\mu) + d \log 2\pi + \log (\det(\Sigma)))$ within error $2\epsilon_{1}$. 
            The running time for this computation is
            $ O\left(\frac{\kappa(\Sigma)\mu(\Sigma) \log (1/\gamma) \log (1/\errmult)}{\errgauss}\eta \right)$. %

        \end{proof}

        Using controlled operations it is simple to extend the previous Theorem to work with multiple Gaussians distributions $(\mu_j,\Sigma_j)$. That is, we can control on a register $\ket{j}$ to do $\ket{j}\ket{i} \ket{0} \mapsto \ket{j}\ket{i}\ket{\phi(v_i|\mu_j,\Sigma_j)}$. 
        In the next Lemma we will see how to obtain the responsibilities $r_{ij}$ using the previous Theorem and standard quantum circuits for doing arithmetic, controlled rotations, and amplitude amplification.
        
        \begin{lemma}[Calculating responsibilities]\label{lemma:responsibility}
            Suppose we have quantum access to a GMM with parameters $\gamma^{t}=(\theta^{t}, \bm \mu^{t}, \bm \Sigma^{t})$.
    There are quantum algorithms that can:
            \begin{enumerate}
                \item  Perform the mapping $ \ket{i}\ket{j}\ket{0} \mapsto \ket{i}\ket{j}\ket{\overline{r_{ij}}}$   
                such that $|\overline{r_{ij}} - r_{ij}| \leq \errgauss$ with high probability in time:
                $$T_{R_1,\errgauss} =  \widetilde{O}(k^{1.5} \times T_{G,\errgauss})  $$
                \item For a given $j \in [k]$, construct state $\ket{\overline{R_j}}$ such that  $\norm{\ket{\overline{R_j}} - \frac{1}{\sqrt{Z_j}}\sum\limits_{i=0}^n r_{ij}\ket{i}} < \errgauss$ where $Z_j=\sum\limits_{i=0}^n r_{ij}^2$ with high probability in time:
                $$T_{R_2,\errgauss} = \widetilde{O}(k^{2} \times T_{R_1,\errgauss}) $$ 
            \end{enumerate}
        \end{lemma}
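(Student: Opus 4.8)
The plan is to read off the responsibility $r_{ij}$ of Eq.~\eqref{responsibility} as a \emph{softmax} of the $k$ log-weighted Gaussian exponents and then lean on the Lipschitz bound of Lemma~\ref{respsoftmaxed}. Writing $t_{il} = s_i^l + \log\theta_l$ with $s_i^l = -\tfrac12\big((v_i-\mu_l)^T\Sigma_l^{-1}(v_i-\mu_l) + d\log 2\pi + \log\det\Sigma_l\big)$, one has $r_{ij} = e^{t_{ij}}/\sum_{l=1}^k e^{t_{il}} = \sigma_j(T_i)$. For Part 1 I would first use the controlled version of Lemma~\ref{lemma:gaussian}, together with quantum access to $\theta$ (Definition~\ref{def:quantumaccess}), to realize $\ket{i}\ket{l}\ket{0}\mapsto\ket{i}\ket{l}\ket{\overline{t_{il}}}$. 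To synthesize the normalizing constant coherently I would build the uniform superposition $\tfrac1{\sqrt k}\sum_l\ket{l}\ket{\overline{t_{il}}}$, apply a controlled rotation loading an amplitude proportional to $e^{\overline{t_{il}}/2}$ onto an ancilla, and invoke amplitude estimation \citep{BHMT00} to estimate $\tfrac1k\sum_l e^{\overline{t_{il}}}$ and hence $\sum_l e^{\overline{t_{il}}}$. A final arithmetic circuit forms $\overline{r_{ij}} = e^{\overline{t_{ij}}}/\sum_l e^{\overline{t_{il}}}$, after which all scratch registers are uncomputed to leave the clean map $\ket{i}\ket{j}\ket{\overline{r_{ij}}}$.

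The error analysis is where Lemma~\ref{respsoftmaxed} carries the argument: to guarantee $|\overline{r_{ij}}-r_{ij}|\le\errgauss$ it suffices to estimate each exponent $t_{il}$ to additive error $\errgauss/\sqrt{2k}$. Feeding this precision into Lemma~\ref{lemma:gaussian} costs $T_{G,\errgauss/\sqrt{2k}} = O(\sqrt k\, T_{G,\errgauss})$, which supplies the first $\sqrt k$ factor. The remaining factor $k$ comes from amplitude-estimating the normalizing sum: under the balanced-mixture assumption~\eqref{assump} its success probability can be as small as $\Theta(1/k)$, so the combination of the $1/\sin\theta=\Theta(\sqrt k)$ amplification and the estimation precision needed to keep the normalization within budget costs $O(k)$ applications of the state-preparation unitary. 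Multiplying gives $T_{R_1,\errgauss} = \widetilde O(k^{1.5}\,T_{G,\errgauss})$. I expect the main obstacle here to be the error bookkeeping: one must check that the rotation, the arithmetic, and the amplitude-estimation error all charge against the single budget $\errgauss$, which is exactly what the softmax-Lipschitz bound of Lemma~\ref{respsoftmaxed} makes possible.

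For Part 2 the goal is to load the unnormalized responsibility vector $R_j=(r_{ij})_i$ into amplitudes. I would run Part 1 in superposition over $i$ to get $\tfrac1{\sqrt n}\sum_i\ket{i}\ket{\overline{r_{ij}}}$, apply a controlled rotation $\ket{\overline{r_{ij}}}\ket{0}\mapsto\ket{\overline{r_{ij}}}\big(\overline{r_{ij}}\ket{0}+\sqrt{1-\overline{r_{ij}}^2}\ket{1}\big)$ (legal since $r_{ij}\in(0,1)$), uncompute the value register, and amplitude-amplify on the $\ket{0}$ flag to produce $\ket{\overline{R_j}}\propto\sum_i\overline{r_{ij}}\ket{i}$. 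The flag succeeds with probability $\tfrac1n\sum_i\overline{r_{ij}}^2=\overline{Z_j}/n$, so amplification costs $O(\sqrt{n/Z_j})$ rounds, and the crucial quantitative input is a lower bound on $Z_j$. Since $\sum_{l}r_{il}=1$ gives $\sum_{i,l}r_{il}=n$, assumption~\eqref{assump} forces $\sum_i r_{ij}=\Theta(n/k)$, and Cauchy--Schwarz yields $Z_j\ge(\sum_i r_{ij})^2/n=\Omega(n/k^2)$, so $\sqrt{n/Z_j}=O(k)$. For the precision, Part 1 with per-entry error $\epsilon'$ gives $\norm{\overline{R_j}-R_j}\le\sqrt n\,\epsilon'$, and the normalized-state bound of Claims~\ref{kereclaim} and~\ref{quattrocinque} turns this into $\norm{\ket{\overline{R_j}}-\ket{R_j}}=O(\sqrt n\,\epsilon'/\sqrt{Z_j})=O(k\epsilon')$; choosing $\epsilon'=\Theta(\errgauss/k)$ meets the target and, since $T_{R_1}$ scales as $1/\epsilon'$, contributes the second factor $k$, for a total $T_{R_2,\errgauss}=\widetilde O(k^2\,T_{R_1,\errgauss})$. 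The main obstacle here is precisely the $Z_j$ lower bound: without the balance assumption~\eqref{assump} the amplification factor $\sqrt{n/Z_j}$ could grow as large as $\sqrt n$ and destroy the polylogarithmic dependence on $n$.
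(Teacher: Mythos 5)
Your Part 2 is essentially the paper's own argument (controlled rotation on an ancilla, uncompute, amplitude amplification, Cauchy--Schwarz plus assumption \eqref{assump} to bound $\sqrt{n}/\norm{R_j} = O(k)$, and precision $\epsilon' = \Theta(\errgauss/k)$ fed back into $T_{R_1}$); indeed your bookkeeping there is cleaner than the paper's, which contains a typo ($O(1/k)$ where $O(k)$ is meant). The divergence, and the problem, is in Part 1.

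The paper does \emph{not} use amplitude estimation to form the softmax denominator. It writes all $k$ exponent estimates simultaneously into separate registers, $\ket{i}\bigl(\bigotimes_{j=1}^k \ket{j}\ket{\overline{\log\phi(v_i|\mu_j,\Sigma_j)}}\bigr)$, and then evaluates the weighted softmax --- exponentials, sum, and division --- by reversible quantum arithmetic, which is absorbed into the runtime; the $k^{1.5}$ arises as ($k$ calls to Lemma \ref{lemma:gaussian}) $\times$ ($\sqrt{2k}$ from running each call at precision $\errgauss/\sqrt{2k}$). Your route instead holds the exponents only in superposition over $l$ and amplitude-estimates $\tfrac1k\sum_l e^{\overline{t_{il}}}$, and this fails on two counts. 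First, your claimed success probability $\Theta(1/k)$ is unjustified: $\sum_l e^{t_{il}}$ is the mixture density $p(v_i;\gamma)$ at the point $v_i$ (times a normalizing constant needed to make $e^{\overline{t_{il}}/2}$ a legal amplitude, since Gaussian densities are not bounded by $1$), and assumption \eqref{assump} constrains the aggregate responsibilities $\sum_i r_{ij}$ across components, not the per-point density, which can be arbitrarily small for points far from all centroids --- exactly the regime where the responsibilities are still well defined and must be computed. Second, even granting that bound, the runtime accounting is wrong: the denominator enters $r_{ij}$ multiplicatively, so amplitude estimation must achieve relative error $O(\errgauss)$, costing $O\bigl(1/(\errgauss\sqrt{p})\bigr)$ invocations of the state-preparation unitary rather than the $O(k)$ you charge. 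This introduces an extra factor $1/\errgauss$, giving an overall $1/\errgauss^2$ dependence instead of the $1/\errgauss$ in the claimed $T_{R_1,\errgauss} = \widetilde{O}(k^{1.5}\, T_{G,\errgauss})$. The fix is simply to abandon amplitude estimation here: since the controlled version of Lemma \ref{lemma:gaussian} can deposit all $k$ exponents in a tensor product of registers, the normalization is ordinary in-register arithmetic, and your own Lipschitz error analysis (which is correct, and is the same use of Lemma \ref{respsoftmaxed} as the paper's) then yields the stated bound.
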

    
        \begin{proof}
            For the first statement, let's recall the definition of responsibility: $ r_{ij} = \frac{\theta_j \phi(v_i; \mu_j, \Sigma_j )}{\sum_{l=1}^k \theta_l \phi(v_i; \mu_l, \Sigma_l)}$. With the aid of $U_{G, \errgauss}$ of Lemma \ref{lemma:gaussian} we can estimate $\log (\phi(v_i|\mu_j, \Sigma_j)) $ for all $j$ up to additive error $\errgauss$, and then using the current estimate of $\theta^t$,  we can calculate the responsibilities and create the state, 
            $$ \frac{1}{\sqrt{n}} \sum_{i=0}^{n} \ket{i} \Big( \bigotimes_{j=1}^k \ket{j}\ket{ \overline{ \log (\phi(v_i|\mu_j, \Sigma_j) }} \Big) \otimes  \ket{\overline{r_{ij}} } . $$
            The estimate $\overline{r_{ij}}$ is computed by evaluating a weighted softmax function with arguments $\overline{ \log (\phi(v_i|\mu_j, \Sigma_j) }$ for $j\in [k]$. The estimates $\overline{ \log (\phi(v_i|\mu_j, \Sigma_j) }$ are then uncomputed. 
            The runtime of the procedure is given by calling $k$ times Lemma $\ref{lemma:gaussian}$ for Gaussian estimation (the arithmetic operations to calculate the responsibilities are absorbed). 
            
            Let us analyze the error in the estimation of $r_{ij}$. The responsibility $r_{ij}$ is a softmax function with arguments $\log (\phi(v_i|\mu_j, \Sigma_j)) $ that are computed upto error $\errgauss$ using Lemma \ref{lemma:gaussian}. As the softmax function has a Lipschitz constant $K\leq \sqrt{2}$ by Lemma \ref{respsoftmaxed}, we choose precision for Lemma \ref{lemma:gaussian} to be $\errgauss/\sqrt{2k}$ to get 
            the guarantee $|\overline{r_{ij}} - r_{ij} | \leq \errgauss$. Thus, the total cost of this step is $T_{R_1,\errgauss} = k^{1.5} T_{G,\errgauss} $.

            Now let's see how to encode this information in the amplitudes, as stated in the second claim of the Lemma. We estimate the responsibilities $r_{ij}$ to some precision $\epsilon$ and perform a controlled rotation on an ancillary qubit to obtain, 
            
            \begin{equation}\label{postselectme}
                \frac{1}{\sqrt{n}} \ket{j} \sum_{i=0}^{n} \ket{i} \ket{\overline{r_{ij}}}\Big(\overline{r_{ij}}\ket{0} + \sqrt{1-\overline{r_{ij}}^2  }\ket{1}\Big).
            \end{equation}
            We then undo the circuit on the second register and perform amplitude amplification on the rightmost auxiliary qubit being $\ket{0}$ to get
                     $\ket{\overline{R_j}}:=\frac{1}{\norm{\overline{R_j} }}\sum_{i=0}^n \overline{r_{ij}}\ket{i}$. The runtime for amplitude amplification on this task is $O(T_{R_1,\epsilon} \cdot \frac{\sqrt{n}}{ \;\norm{\overline{R_j} } \; })$. 
            
            Let us analyze the precision $\epsilon$ required to prepare $\ket{\overline{R_j}}$ such that $\norm{\ket{R_j} - \ket{\overline{R_j}}} \leq \epsilon_{1}$. As we have estimates $|r_{ij}-\overline{r_{ij}}|<\epsilon$ for all $i, j$, the 
            $\ell_{2}$-norm error $\norm{ R_j - \overline{R_{j}}} = \sqrt{\sum_{i=0}^n | r_{ij} -  \overline{r_{ij}} |^2 }< \sqrt{n}\epsilon $. 
            Applying Claim \ref{kereclaim}, the error for the normalized vector $\ket{R_j}$ can be bounded as $ \norm{\ket{R_j} - \ket{\overline{R_j}}} < \frac{\sqrt{2n}\epsilon}{\norm{R_j}}  $. 
            By the Cauchy-Schwarz inequality we have that $\norm{R_j} \geq  \frac{\sum_i^n r_{ij}}{\sqrt{n}}$. We can use this to obtain a bound $\frac{\sqrt{n}}{\norm{R_j}}<\frac{\sqrt{n}}{\sum_i r_{ij}}\sqrt{n} = O(1/k)$, 
            using the dataset assumptions in the main manuscript. If we choose $\epsilon$ such that $\frac{\sqrt{2n}\epsilon}{\norm{R_j}} < \errgauss$, that is $\epsilon \leq \errgauss/k$ then our runtime becomes $T_{R_2,\errgauss} := \widetilde{O}(k^{2} \times T_{R_1,\errgauss})$.

        \end{proof}

    \subsubsection{Maximization}
        Now we need to get an updated estimate for the parameters of our model. At each iteration of QEM we build a quantum state proportional to the updated parameters of the model, and then recover them. Once the new model has been obtained, we update the QRAM such that we get quantum access to the model $\gamma^{t+1}$. The possibility  to estimate $\theta$ comes from a call to the unitary we built to compute the responsibilities, and amplitude amplification.
        
        \begin{lemma}[Computing $\theta^{t+1}$]\label{lemma:theta}
            We assume quantum access to a GMM with parameters $\gamma^{t}$ and let $\delta_\theta > 0$ be a precision parameter. There exists an algorithm that estimates $ \overline{\theta}^{t+1} \in \mathbb{R}^k$ such that $\norm{\overline{\theta}^{t+1}-\theta^{t+1}}\leq \delta_\theta$ in time 
            $T_\theta = O\left(k^{3.5} \eta^{1.5} \frac{ \kappa(\Sigma)\mu(\Sigma) }{\delta_\theta^2}  \right)$. 
        \end{lemma}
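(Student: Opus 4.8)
The plan is to reduce the problem to the classical maximization-step update for the mixing weights, $\theta_j^{t+1} = \frac{1}{n}\sum_{i=1}^n r_{ij}$, and then estimate each of these $k$ averages by amplitude estimation on a circuit that produces the responsibilities. Concretely, I would first prepare the uniform superposition $\frac{1}{\sqrt n}\sum_{i} \ket{i}$, then invoke the responsibility subroutine of Lemma \ref{lemma:responsibility} (part 1) to write $\overline{r_{ij}}$ into an ancilla register for a fixed $j$, and finally perform a controlled rotation encoding $\sqrt{\overline{r_{ij}}}$ into one qubit, so that the probability of measuring that qubit in $\ket{0}$ equals $\frac{1}{n}\sum_i \overline{r_{ij}} = \overline{\theta}_j^{t+1}$. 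Running amplitude estimation \citep{BHMT00} on this unitary then yields an estimate of $\theta_j^{t+1}$, and repeating (or controlling) over the $k$ components assembles the full vector $\overline{\theta}^{t+1}$.

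The delicate part is the error budget, which has two independent sources: the additive error $\epsilon$ with which Lemma \ref{lemma:responsibility} produces each $\overline{r_{ij}}$, and the additive error of amplitude estimation. Because the measured probability equals $\frac{1}{n}\sum_i \overline{r_{ij}}$ \emph{exactly}, the responsibility error only shifts the target, $|\overline{\theta}_j - \theta_j| = |\frac{1}{n}\sum_i(\overline{r_{ij}}-r_{ij})| \le \epsilon$, so no square-root amplification occurs at the level of the probability. To obtain the stated $\ell_2$ guarantee $\norm{\overline{\theta}^{t+1}-\theta^{t+1}} \le \delta_\theta$ over all $k$ coordinates, I would demand per-coordinate accuracy $O(\delta_\theta/\sqrt k)$ and split it between the two sources, choosing the responsibility precision $\epsilon = O(\delta_\theta/\sqrt k)$ and an amplitude-estimation precision of the same order. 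Here the dataset assumption of Eq.~\eqref{assump}, which forces $\theta_j = \Theta(1/k)$ for every $j$, is what lets me control $\sin\theta = \sqrt{\theta_j}$ inside the amplitude-estimation runtime.

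For the running time I would compose the pieces: each amplitude-estimation round costs one application of the responsibility unitary, whose cost is $T_{R_1,\epsilon} = \widetilde{O}(k^{1.5}\,T_{G,\epsilon})$ with $T_{G,\epsilon} = O(\kappa(\Sigma)\mu(\Sigma)\eta/\epsilon)$ by Lemma \ref{lemma:responsibility} and Lemma \ref{lemma:gaussian}, the factor $\eta$ being inherited from the quadratic-form estimation of Lemma \ref{lemma:quadratic forms}. Substituting $\epsilon = O(\delta_\theta/\sqrt k)$ gives $T_{R_1} = \widetilde{O}(k^{2}\kappa(\Sigma)\mu(\Sigma)\eta/\delta_\theta)$; multiplying by the $O(1/\xi)$ amplitude-estimation rounds needed for additive accuracy $\xi = O(\delta_\theta/\sqrt k)$ and then by the $k$ components yields a bound of the stated shape $\widetilde{O}(k^{3.5}\eta^{?}\kappa(\Sigma)\mu(\Sigma)/\delta_\theta^2)$. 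The main obstacle is exactly this error propagation rather than any single hard lemma: one must track how the per-coordinate $1/\sqrt k$ precision, the $k^{1.5}$ overhead of evaluating a softmax over $k$ Gaussians (via the Lipschitz bound of Lemma \ref{softmax} and Lemma \ref{respsoftmaxed}), and the norm factor $\eta$ from the quadratic forms all combine, and in particular verify that converting the amplitude-estimation precision into an $\ell_2$ guarantee on $\theta^{t+1}$ — where the interplay between $\sin\theta = \sqrt{\theta_j}$ and the norm scaling governs the exact power of $\eta$ — does not degrade the claimed accuracy.
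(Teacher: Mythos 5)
Your proposal is correct and takes essentially the same route as the paper's proof: compute the responsibilities with part 1 of Lemma \ref{lemma:responsibility}, load $\sqrt{\overline{r_{ij}}}$ into an ancilla amplitude by a controlled rotation so that the $\ket{0}$-probability equals $\frac{1}{n}\sum_i \overline{r_{ij}}$, and then amplitude-estimate each coordinate to precision $O(\delta_\theta/\sqrt{k})$, splitting the error budget between the responsibility error (which, as you note, enters additively without amplification) and the amplitude-estimation error exactly as the paper does. The only immaterial difference is that the paper works with a single superposition over both $i$ and $j$ and uses amplitude amplification to prepare the joint state $\ket{\sqrt{R}}=\sum_j \sqrt{\overline{\theta_j}^{t+1}}\ket{\sqrt{R_j}}\ket{j}$ before estimating the label-register probabilities, whereas you fix $j$ and estimate the ancilla probability directly; both yield the same $\widetilde{O}\left(k^{3.5}\,\kappa(\Sigma)\mu(\Sigma)/\delta_\theta^2\right)$ scaling (and your flagged uncertainty on the power of $\eta$ is fair, since the paper's own derivation most naturally produces a single factor of $\eta$ from $T_{G,\errgauss}$).
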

        	\begin{proof}
		An estimate of $\theta^{t+1}_j$ can be recovered from the following operations. First, we use Lemma \ref{lemma:responsibility} (part 1) to compute the responsibilities to error $\errgauss$, and then perform the following mapping, which consists of a controlled rotation on an auxiliary qubit:
    \begin{align}    \frac{1}{\sqrt{nk}}\sum_{\substack{i =1 \\j =1}}^{n,k}\ket{i}\ket{j}\ket{\overline{r_{ij}}^{t} } \mapsto \nonumber \\  \frac{1}{\sqrt{nk}}\sum_{\substack{i =1 \\j =1}}^{n,k}\ket{i}\ket{j}\Big( \sqrt{\overline{r_{ij}}^{t}}\ket{0} + \sqrt{1-\overline{r_{ij}}^{t}}\ket{1} \Big)  \end{align}
		The previous operation has a cost of $T_{R_1,\errgauss}$, and the probability of getting $\ket{0}$ is
		$p(0) = \frac{1}{nk} \sum_{i=1}^{n}\sum_{j=1}^k r_{ij}^{t} = \frac{1}{k}$.
		
		Recall that $\theta_j^{t+1} = \frac{1}{n}\sum_{i=1}^n r^{t}_{ij}$ by definition. 
		Let $Z_j = \sum_{i=1}^n \overline{r_{ij}}^{t}$ and define state $\ket{\sqrt{R_j}} = \left(\frac{1}{\sqrt{Z_j}} \sum_{i=1}^n \sqrt{ \overline{r_{ij}}^{t} }\ket{i}\right)\ket{j}  $. 
		After amplitude amplification on $\ket{0}$ we have the state,
		\begin{align}\label{eq:theta} 
			\ket{\sqrt{R}} &:= \frac{1}{\sqrt{n}} \sum_{\substack{i =1 \\j =1}}^{n,k} \sqrt{  \overline{r_{ij}}^{t}  }\ket{i}\ket{j}\nonumber \\
			&= \sum_{j=1}^k\sqrt{\frac{Z_j}{n}} \left(\frac{1}{\sqrt{Z_j}} \sum_{i=1}^n \sqrt{ \overline{r_{ij}}^{t} }\ket{i} \right) \ket{j}   \nonumber\\
			&=  \sum_{j=1}^k\sqrt{\overline{\theta_j}^{t+1} }\ket{\sqrt{R_j}}\ket{j}. %
		\end{align}
		The probability of obtaining outcome $\ket{j}$ if the second register is measured in the standard basis is $p(j)=\overline{\theta_j}^{t+1}$. 
		
		An estimate for $\theta_j^{t+1}$ with precision $\epsilon$ can be obtained by either sampling  the last register, or by performing amplitude estimation to estimate each of the values $\theta^{t+1}_j$ for $j \in [k]$. Sampling requires $O(\epsilon^{-2})$ samples by the Chernoff bounds, but does not incur any dependence on $k$. In this case, as the number of cluster $k$ is relatively small compared to $1/\epsilon$, we chose to do amplitude estimation to estimate all $\theta^{t+1}_j$ for $j \in [k]$ to error $\epsilon/\sqrt{k}$ in time, 
		
	\begin{equation} T_\theta := O\left(k\cdot \frac{\sqrt{k}T_{R_1,\errgauss}}{\epsilon} \right).
	\end{equation}
		
		Let's analyze the error in the estimation of $\theta_j^{t+1}$. For the error due to responsibility estimation by Lemma \ref{lemma:responsibility} we have  $|\overline{\theta_j}^{t+1} - \theta_j^{t+1}| 
		= \frac{1}{n} \sum_i | \overline{r_{ij}}^{t} - r_{ij}^{t}| \leq \epsilon_{1}$ for all $j \in [k]$, implying that  $\norm{ \overline{\theta}^{t+1} - \theta^{t+1} }  \leq \sqrt{k}  \epsilon_{1}$. 
		The total error in $\ell_{2}$ norm due to Amplitude estimation is at most $\epsilon$ as it estimates each coordinate of $\overline{\theta_j}^{t+1}$ to error $\epsilon/\sqrt{k}$.

		Using the triangle inequality, we have the total error is at most $\epsilon  + \sqrt{k}\errgauss$. 
		As we require that the final error be upper bounded as $\norm{\overline{\theta}^{t+1} - \theta^{t+1}} < \delta_\theta$, we choose parameters 
		$\sqrt{k}\errgauss < \delta_\theta/2  \Rightarrow \errgauss < \frac{\delta_\theta}{2\sqrt{k}}$ and $\epsilon < \delta_\theta/2$. With these parameters, the overall running 
		time of the quantum procedure is
		$ T_\theta = O(k^{1.5} \frac{T_{R_1,\errgauss}}{\epsilon}) = O\left(k^{3.5} \frac{ \eta^{1.5}\cdot \kappa(\Sigma)\mu(\Sigma) }{\delta_\theta^2}  \right)$.

			\end{proof}
        
        We use quantum linear algebra to transform the uniform superposition of responsibilities of the $j$-th mixture into the new centroid of the $j$-th Gaussian. Let $R_j^{t} \in \R^{n}$ be the vector of responsibilities for a Gaussian $j$ at iteration $t$. The following claim relates the vectors  $R_j^{t}$ to the centroids $\mu_j^{t+1}$ and its proof is straightforward. 
        
        \begin{claim}\label{claimz}
            Let $R_j^t \in \mathbb{R}^n$ be the vector of responsibilities of the points for the Gaussian $j$ at time $t$, i.e. $(R_j^t)_{i} = r_{ij}^{t}$. Then $\mu_j^{t+1} \leftarrow \frac{\sum_{i=1}^n r^{t}_{ij} v_i }{ \sum_{i=1}^n r^{t}_{ij}} = \frac{V^T R^t_j}{n\theta_j}$.
        \end{claim}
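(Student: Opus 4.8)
The plan is to read the displayed chain of equalities as two separate assertions and check each: the left one is the familiar EM mean-update rule, and the right one is merely its rewriting in matrix form. So the work splits into recalling the classical M-step and then a one-line linear-algebra reformulation, with a notational convention clarified up front.

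First I would justify $\mu_j^{t+1} = \frac{\sum_{i=1}^n r_{ij}^t v_i}{\sum_{i=1}^n r_{ij}^t}$. This is the standard maximizer of the expected complete-data log-likelihood $Q(\gamma) = \sum_{i=1}^n \sum_{l=1}^k r_{il}^t \log\bigl(\theta_l\,\phi(v_i;\mu_l,\Sigma_l)\bigr)$ with respect to $\mu_j$. Using $\log\phi(v_i;\mu_j,\Sigma_j) = -\tfrac12 (v_i-\mu_j)^T\Sigma_j^{-1}(v_i-\mu_j) + \text{const}$ and differentiating, the stationarity condition is $\sum_{i=1}^n r_{ij}^t \,\Sigma_j^{-1}(v_i - \mu_j) = 0$; left-multiplying by the invertible $\Sigma_j$ gives $\sum_{i=1}^n r_{ij}^t (v_i - \mu_j) = 0$, which rearranges to the weighted average. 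Alternatively, this is exactly the classical EM update quoted in the background, so it can simply be cited rather than re-derived.

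Next I would verify the second equality by direct index manipulation. Since the rows of $V$ are the $v_i$, the $\ell$-th entry of $V^T R_j^t$ is $\sum_{i=1}^n V_{i\ell}(R_j^t)_i = \sum_{i=1}^n (v_i)_\ell\, r_{ij}^t$, so $V^T R_j^t = \sum_{i=1}^n r_{ij}^t v_i$, which is precisely the numerator. For the denominator I would invoke the definition of the updated mixing weight from Lemma \ref{lemma:theta}, namely $\theta_j^{t+1} = \frac1n \sum_{i=1}^n r_{ij}^t$, so that $\sum_{i=1}^n r_{ij}^t = n\theta_j^{t+1}$. Substituting both expressions yields $\mu_j^{t+1} = \frac{V^T R_j^t}{n\theta_j^{t+1}}$, as claimed.

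There is no genuine obstacle here — as the authors note, the proof is straightforward. The only point worth flagging is notational: the $\theta_j$ appearing in the denominator must be read as the \emph{updated} weight $\theta_j^{t+1}$, not $\theta_j^t$, since it is $n\theta_j^{t+1}$ (and not $n\theta_j^t$) that equals $\sum_i r_{ij}^t$. I would state this convention explicitly at the outset so the identity is unambiguous, and everything else is definitional.
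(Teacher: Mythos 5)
Your proof is correct and is exactly the straightforward argument the paper intends (the paper omits the proof entirely, remarking only that it is straightforward): the first equality is the classical M-step update quoted in Algorithm \ref{GMMEM}, and the second follows from the entrywise identity $V^T R_j^t = \sum_{i=1}^n r_{ij}^t v_i$ together with $\sum_{i=1}^n r_{ij}^t = n\theta_j^{t+1}$. Your flag that the denominator's $\theta_j$ must be read as the \emph{updated} weight $\theta_j^{t+1}$ is a valid clarification of the paper's slightly loose notation.
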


        From this Claim, we derive the procedure to estimate the new centroids $\mu_j^{t+1}$: we use Lemma \ref{lemma:responsibility} along with quantum access to the matrix $V$. 
    
        \begin{lemma}[Computing $\bm \mu_j^{t+1}$]\label{lemma:centroids}
            We assume we have quantum access to a GMM with parameters $\gamma^{t}$. For a precision parameter $\delta_\mu > 0$, there is a quantum algorithm that calculates $\{ \overline{\mu_j}^{t+1} \}_{j=1}^k$ such that for all $j\in [k]$ $\norm{\overline{\mu_j}^{t+1} -\mu_j^{t+1}}\leq \delta_\mu$  in time
    $ T_\mu = \widetilde{O}\left(    \frac{k d\eta \kappa(V) (\mu(V) + k^{3.5}\eta^{1.5}\kappa(\Sigma)\mu(\Sigma))}{\delta_{\mu}^3}  \right)$				
    
        \end{lemma}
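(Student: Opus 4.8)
The plan is to realize the update formula of Claim \ref{claimz}, $\mu_j^{t+1} = \frac{V^T R_j^t}{n\theta_j}$, as a quantum linear-algebra computation and then read out the resulting vector by combining a norm estimate with tomography on its direction. Since $\mu_j^{t+1}$ is a convex combination of the rows $v_i$, its norm is bounded by $\sqrt{\eta}$, so it suffices to estimate the direction $\ket{\mu_j^{t+1}}$ and the norm $\norm{\mu_j^{t+1}}$ separately and recombine them through Claim \ref{quattrocinque}. First I would invoke Lemma \ref{lemma:responsibility} (part 2) to prepare, for each $j \in [k]$, the normalized responsibility state $\ket{\overline{R_j}}$ in time $T_{R_2,\epsilon_1}$. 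Observing that $V^T \ket{\overline{R_j}}$ is proportional to $\sum_i r_{ij} v_i = n\theta_j \mu_j^{t+1}$, I would then apply the quantum linear algebra routine of Theorem \ref{qla} to the matrix $V$ (symmetrized as $\overline{V}$) and the state $\ket{\overline{R_j}}$ to obtain a state close to $\ket{\mu_j^{t+1}}$, at a cost of $\widetilde{O}(\kappa(V)(\mu(V) + T_{R_2,\epsilon_1}))$ per run. Amplitude estimation on this circuit yields the norm, while Theorem \ref{thm:tom2} ($\ell_2$ tomography) applied to the same circuit yields the unit vector.

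For the error analysis I would set the tomography precision $\epsilon_b$, the relative norm precision $\epsilon_a$, and the matrix-multiplication precision all to be of order $\delta_\mu/\sqrt{\eta}$. By Claim \ref{quattrocinque}, since $\norm{\mu_j^{t+1}} \leq \sqrt{\eta}$, the reconstructed centroid then satisfies $\norm{\overline{\mu_j}^{t+1} - \mu_j^{t+1}} \leq \sqrt{\eta}(\epsilon_a + \epsilon_b) \leq \delta_\mu$, as required. The precision $\epsilon_1$ with which the responsibility state is prepared must be chosen small enough that the error it induces in $\ket{\mu_j^{t+1}}$, after passing through the multiplication by $V^T$, is also $O(\delta_\mu/\sqrt{\eta})$; I would take $\epsilon_1 = \Theta(\delta_\mu/\sqrt{\eta})$.

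For the runtime, the circuit $U$ preparing $\ket{\mu_j^{t+1}}$ costs $T(U) = \widetilde{O}(\kappa(V)(\mu(V) + T_{R_2,\epsilon_1}))$, and tomography repeats it $O(d\log d/\epsilon_b^2) = \widetilde{O}(d\eta/\delta_\mu^2)$ times. Substituting $T_{R_2,\epsilon_1} = \widetilde{O}(k^{3.5} T_{G,\epsilon_1}) = \widetilde{O}(k^{3.5}\eta\kappa(\Sigma)\mu(\Sigma)/\epsilon_1) = \widetilde{O}(k^{3.5}\eta^{1.5}\kappa(\Sigma)\mu(\Sigma)/\delta_\mu)$, multiplying by the tomography repetitions, and running over all $k$ centroids gives $\widetilde{O}\left(\frac{k^{4.5}d\eta^{2.5}\kappa(V)\kappa(\Sigma)\mu(\Sigma)}{\delta_\mu^3} + \frac{kd\eta\kappa(V)\mu(V)}{\delta_\mu^2}\right)$. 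Bounding $1/\delta_\mu^2 \leq 1/\delta_\mu^3$ and factoring $\kappa(V)$ out yields exactly the claimed expression; the amplitude-estimation cost for the norm is dominated by the tomography and contributes nothing new.

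The main obstacle I anticipate is the error propagation through the matrix-multiplication step: a state prepared with $\ell_2$ error $\epsilon_1$ need not map to a state of comparable error after applying $V^T$, since poorly-conditioned directions can be amplified. Controlling this requires relating the error on the normalized output $\ket{\mu_j^{t+1}}$ to $\epsilon_1$ and to the conditioning of $V$ (a perturbation bound of the flavor of Claim \ref{kereclaim}), and checking that the $\kappa(V)$ factor already present in the cost absorbs it. The remaining bookkeeping — tracking how the tomography, norm-estimation, multiplication, and responsibility errors combine through Claim \ref{quattrocinque} without inflating any exponent — is routine but must be carried out carefully to land precisely on the stated running time.
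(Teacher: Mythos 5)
Your proposal follows essentially the same route as the paper's proof: prepare $\ket{\overline{R_j}}$ via part 2 of Lemma \ref{lemma:responsibility}, multiply by $V^T$ with Theorem \ref{qla}, read out the result with $\ell_2$ tomography (Theorem \ref{thm:tom2}) plus norm estimation, combine the errors through Claim \ref{quattrocinque} with all precisions set to $\Theta(\delta_\mu/\sqrt{\eta})$, and your runtime bookkeeping (including the observation that the $\mu(V)$ term only carries $1/\delta_\mu^2$) reproduces the paper's derivation exactly. The one caveat: the ``main obstacle'' you anticipate—error amplification when the responsibility-state error passes through the multiplication by $V^T$—is not handled more rigorously in the paper either; the paper simply bounds $\norm{\widehat{\ket{\mu}} - \ket{\mu}}$ by the input error $\epsilon_1$ via the triangle inequality, with no $\kappa(V)$ amplification factor, so your proposal is not missing anything relative to the published argument.
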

        
        \begin{proof}
            The new centroid $\mu_j^{t+1}$ is estimated by first creating an approximation of the state $\ket{R_j^t}$ up to error $\errgauss$ in the $\ell_{2}$-norm using part 2 of Lemma \ref{lemma:responsibility}. We then use the quantum linear algebra algorithms in Theorem \ref{qla} to multiply $R_j$ by $V^T$, and compute a state $\ket{\overline{\mu_j}^{t+1}}$ along with an estimate for the norm $\norm{V^TR_j^t} = \norm{\overline{\mu_j}^{t+1}}$ with error $\errnorms$. The last step of the algorithm consists in estimating the unit vector $\ket{\overline{\mu_j}^{t+1}}$ with precision $\errtom$ using tomography. Considering that the tomography depends on $d$, which we expect to be bigger than the precision required by the norm estimation, we can assume that the runtime of the norm estimation is absorbed. Thus, we obtain: $ \widetilde{O}\left( k  \frac{d}{\errtom^2} \cdot  \kappa(V) \left( \mu(V) + T_{R_2,\epsilon_1} \right)  \right) $.
                    
            Let's now analyze the total error in the estimation of the new centroids, which we want to be $\delta_{\mu}$. For this purpose, we use Claim \ref{quattrocinque}, and choose parameters such that $2\sqrt{\eta}(\epsilon_{tom}+\epsilon_{norm})=\delta_\mu$. Since the error $\errmult$ for quantum linear algebra appears as a logarithmic factor in the running time, we can choose $\errmult \ll\errtom$ without affecting the runtime. 
            
    Let $\overline{\mu}$ be the classical unit vector obtained after quantum tomography, and $\widehat{\ket{\mu}}$ be the state produced by the quantum linear algebra procedure starting with an approximation of $\ket{R_j^t}$. 
    Using the triangle inequality we have $ \norm{\ket{\mu} - \overline{\mu} } < \norm{ \overline{\mu} - \widehat{\ket{\mu}}} + \norm{\widehat{\ket{\mu}} - \ket{\mu}} < \errtom + \errgauss < \delta_{\mu}/2\sqrt{\eta} $. 
    The errors for the norm estimation procedure can be bounded similarly as $ | \norm{\mu} - \overline{\norm{\mu}} | < | \norm{\mu} - \widehat{\norm{\mu}} | + | \overline{\norm{\mu}} - \widehat{\norm{\mu}} |  <  \errnorms + \errgauss \leq \delta_{\mu}/2\sqrt{\eta}$. We therefore choose parameters  $\errtom = \errgauss  = \errnorms \leq \delta_{\mu}/4\sqrt{\eta}$. Since the amplitude estimation step we use for estimating the norms does not depends on $d$, which is expected to dominate the other parameters, we omit the amplitude estimation step. Substituting for $T_{R_{2}, \delta_{\mu}}$, we have the more concise expression for the running time of:
    \begin{equation}
        \widetilde{O}\left(    \frac{k d\eta \kappa(V) (\mu(V) + k^{3.5}\eta^{1.5}\kappa(\Sigma)\mu(\Sigma))}{\delta_{\mu}^3}  \right)
    \end{equation}

        \end{proof}

        From the ability to calculate responsibility and indexing the centroids, we derive the ability to reconstruct the covariance matrix of the Gaussians as well. Again, we use quantum linear algebra subroutines and tomography to recover an approximation of each $\Sigma_j$. Recall that we have defined the matrix $V' \in \mathbb{R}^{n \times d^2}$ where the $i$-th row of $V'$ is defined as $\text{vec}[v_iv_i^T]$. Note that the quantum states corresponding to the rows of $V'$ can be prepared as $\ket{i} \ket{0} \ket{0} \to \ket{i} \ket{v_{i}} \ket{v_{i}}$, using twice the procedure for creating the rows of $V$.  
    
    \begin{lemma}[Computing $ \Sigma_j^{t+1}$]\label{lemma:sigma}
        Given quantum access to a GMM with parameters $\gamma^{t}$. We also have computed estimates $\overline{\mu_j}^{t+1}$ of all centroids such that $\norm{ \overline{\mu_j}^{t+1} - \mu_j^{t+1}  }\leq \delta_\mu$ for precision parameter $\delta_\mu > 0$. Then, there exists a quantum algorithm that outputs estimates for the new covariance matrices $\{ \overline{\Sigma}^{t+1}_j \}_{j=1}^k$ such that $\norm{\Sigma_j^{t+1}- \overline{\Sigma}^{t+1}_j}_F \leq  \delta_\mu\sqrt{\eta}$ with high probability, in time, 
        $$
        T_\Sigma := \widetilde{O} \Big( \frac{kd^2 \eta\kappa(V)(\mu(V')+\eta^2k^{3.5}\kappa(\Sigma)\mu(\Sigma))}{\delta_{\mu}^3} \Big)	
        $$
    
    \end{lemma}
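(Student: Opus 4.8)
The plan is to reduce the covariance update to the same matrix--vector primitive already used for the means in Lemma~\ref{lemma:centroids}, applied now to the matrix $V'$ whose $i$-th row is $\text{vec}[v_iv_i^T]$. First I would expand the classical update $\Sigma_j^{t+1} = \frac{\sum_i r_{ij}^t (v_i-\mu_j^{t+1})(v_i-\mu_j^{t+1})^T}{\sum_i r_{ij}^t}$; the cross terms collapse against the definition of $\mu_j^{t+1}$, leaving $\Sigma_j^{t+1} = \Sigma_j' - \mu_j^{t+1}(\mu_j^{t+1})^T$ with $\Sigma_j' := \frac{\sum_i r_{ij}^t v_iv_i^T}{n\theta_j}$. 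The key observation, analogous to Claim~\ref{claimz}, is that $\text{vec}[\Sigma_j'] = \frac{(V')^T R_j^t}{n\theta_j}$, so computing the quadratic part of the covariance is exactly the centroid computation run on $V'$ instead of $V$, followed by a rank-one correction.

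Concretely I would prepare $\ket{R_j^t}$ to $\ell_2$-error $\errgauss$ via part~2 of Lemma~\ref{lemma:responsibility}, apply Theorem~\ref{qla} using quantum access to $V'$ (obtained by running the row preparation of $V$ twice, $\ket{i}\ket{0}\ket{0}\mapsto\ket{i}\ket{v_i}\ket{v_i}$) to produce a state close to $\ket{\text{vec}[\Sigma_j']}$ together with an estimate of $\norm{(V')^T R_j^t}$ to relative error $\errnorms$, and recover the unit vector with the $\ell_2$ tomography of Theorem~\ref{thm:tom2}, now run on a state of dimension $d^2$, to error $\errtom$. Reshaping the recovered $d^2$-vector into a $d\times d$ matrix gives $\overline{\Sigma_j'}$; since $\norm{\text{vec}[M]}_2 = \norm{M}_F$, the $\ell_2$ tomography guarantee is automatically a Frobenius-norm guarantee. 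Finally I would output $\overline{\Sigma_j}^{t+1} := \overline{\Sigma_j'} - \overline{\mu_j}^{t+1}(\overline{\mu_j}^{t+1})^T$, subtracting the rank-one correction built from the centroid estimate already available by hypothesis.

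The error would be split between the two terms. For the quadratic part I would invoke Claim~\ref{quattrocinque}, whose norm factor is now $\norm{\Sigma_j'}_F \leq \max_i\norm{v_i}^2 \leq \eta$ (rather than $\sqrt{\eta}$ as for the means), since $\Sigma_j'$ is a convex combination of the $v_iv_i^T$; this is the one genuinely new feature and it is what pushes the required precisions $\errgauss,\errtom,\errnorms$ down by an extra power of $\sqrt{\eta}$. For the rank-one part, the perturbation bound $\norm{\overline{\mu}\,\overline{\mu}^T - \mu\mu^T}_F \leq \norm{\overline{\mu}-\mu}(\norm{\mu}+\norm{\overline{\mu}}) \leq 2\delta_\mu\sqrt{\eta}$, using $\norm{\mu_j^{t+1}}\leq\sqrt{\eta}$, shows the centroid error $\delta_\mu$ is amplified only by $O(\sqrt{\eta})$. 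Choosing the three precisions as the appropriate inverse powers of $\sqrt{\eta}$ and $\delta_\mu$ makes every contribution $O(\delta_\mu\sqrt{\eta})$, matching the claimed guarantee. For the runtime, the tomography on a $d^2$-dimensional state contributes the factor $d^2/\errtom^2$, and substituting the resulting value of $\errgauss$ into $T_{R_2,\errgauss}$ (hence into $T_{G,\errgauss}$, which carries the factors $\eta\,\kappa(\Sigma)\mu(\Sigma)$) reproduces the stated bound with $\mu(V)$ replaced by $\mu(V')$. The main obstacle is precisely this bookkeeping of the powers of $\eta$: one must check that vectorizing to dimension $d^2$ raises the effective norm bound from $\sqrt{\eta}$ to $\eta$, track how that feeds back into the responsibility precision and the Gaussian-evaluation cost, and confirm that subtracting the approximate rank-one term does not spoil the Frobenius guarantee — everything else is a faithful transcription of the centroid argument.
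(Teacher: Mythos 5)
Your proposal is correct and, in all essentials, is the paper's own proof: the same decomposition $\Sigma_j^{t+1} = \Sigma_j' - \mu_j^{t+1}(\mu_j^{t+1})^T$, the same error split (Claim \ref{quattrocinque} applied with the norm bound $\norm{\Sigma_j'}_F \leq \eta$, which forces the precisions $\errgauss, \errtom, \errnorms$ down to $O(\delta_\mu/\sqrt{\eta})$, plus the $O(\sqrt{\eta}\delta_\mu)$ rank-one perturbation from the centroid estimates), and the same $d^2$-dimensional $\ell_2$ tomography dominating the runtime. The one place you diverge is the matrix--vector step. You prepare $\ket{R_j^t}$ via part 2 of Lemma \ref{lemma:responsibility} and multiply by $(V')^T$ using Theorem \ref{qla}; the paper instead uses part 1 of Lemma \ref{lemma:responsibility} together with quantum access to the norms $\norm{v_i}$ and amplitude amplification to build the state $\frac{1}{\norm{V_R}}\sum_i r_{ij}\norm{v_i}\ket{i}\ket{v_i}$, and then applies $V^T$ to the index register only. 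The two routes produce the same vector, since $(V')^TR_j^t = \sum_i r_{ij}\norm{v_i}^2\,\ket{v_i}\otimes\ket{v_i}$, but the bookkeeping differs: your route charges $\kappa(V')\mu(V')$ and requires genuine quantum access to $V'$ (including the norm superposition over $\norm{v_i}^2$), whereas the paper's route charges $\kappa(V)\mu(V)$ and only ever inverts/multiplies by $V$. Note that $\kappa(V')$ is a parameter the paper never defines or bounds, and it does not obviously coincide with $\kappa(V)$; the paper itself is inconsistent here (the lemma statement says $\kappa(V)\mu(V')$, the proof's final expression says $\kappa(V)\mu(V)$, and Theorem \ref{th:qgmm} says $\kappa^2(V)\mu(V')$), so your variant is defensible, but if you want the stated $\kappa(V)$ dependence you should adopt the paper's preparation trick rather than invoking Theorem \ref{qla} directly on $V'$. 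A last cosmetic point: your rank-one bound $\norm{\overline{\mu}\,\overline{\mu}^T - \mu\mu^T}_F \leq 2\delta_\mu\sqrt{\eta}$ drops the second-order term; it should read $2\sqrt{\eta}\delta_\mu + \delta_\mu^2 \leq 3\sqrt{\eta}\delta_\mu$, which changes nothing in the conclusion.
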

    
    \begin{proof}
        It is simple to check, that the update rule of the covariance matrix during the maximization step can be reduced to \cite{murphy2012machine}{[Exercise 11.2]}:
            
            \begin{align}\label{covmat}\Sigma_j^{t+1} \leftarrow \frac{\sum_{i=1}^n r_{ij}  (v_i - \mu_j^{t+1})(v_i - \mu_j^{t+1})^T}{\sum_{i=1}^n r_{ij}} & \\
            = \frac{\sum_{i=1}^n r_{ij}v_iv_i^T }{n\theta_j} - \mu_j^{t+1}(\mu_j^{t+1})^T & \\
            =\Sigma_j'- \mu_j^{t+1}(\mu_j^{t+1})^T &
            \end{align}
            
    First, note that we can use the previously obtained estimates of the centroids to compute the outer product $\mu_j^{t+1} (\mu_j^{t+1})^T$ with error $\delta_\mu \norm{ \mu} \leq \delta_\mu \sqrt{\eta}$. The error in the estimates of the centroids is $\overline{\mu} = \mu + e$ where $e$ is a vector of norm $\delta_\mu$. Therefore $\norm{\mu \mu^T - \overline{\mu}\:\overline{\mu}^T} < 2\sqrt{\eta}\delta_\mu + \delta_\mu^2 \leq 3\sqrt{\eta}\delta_\mu$. Because of this, we allow an error of $\sqrt{\eta}\delta_\mu$ also for the term $\Sigma'_j$. 
    Now we discuss the procedure for estimating $\Sigma_j'$, which we split into an estimate of $\ket{\text{vec}[\Sigma_j']}$ and its norm $\norm{\text{vec}[\Sigma_j']}$. We start by using quantum access to the norms and part 1 of Lemma \ref{responsibility} with error $\epsilon_1$. For a cluster $j$, we create the state:
    
    $$ \ket{j}\frac{1}{\sqrt{n}}\sum_i^n \ket{i}\ket{r_{ij}}\ket{\norm{v_i}}\left(\frac{r_{ij}\norm{v_i}}{\sqrt{\eta}} \ket{0} + \gamma\ket{1} \right) $$ 
    
    We proceed by using amplitude amplification on the right-most qubit being $\ket{0}$, which takes time $O(R_{R_1,\epsilon_1}\frac{\sqrt{n\eta}}{\norm{V_R}})$, where $\norm{V_R}$ is $\sqrt{\sum_i r_{ij}^2\norm{v_i}^2}$. Successively, we use quantum access on the vectors $v_i$ and we obtain the following state:
    $$
    \frac{1}{V_R}\sum_i r_{ij}\norm{v_i}\ket{i}\ket{v_i}
    $$
    Now we can apply quantum linear algebra subroutine, multiplying the first register with the matrix $V^T$. This will lead us to the desired state $\ket{\Sigma_j'}$, along with an estimate of its norm with relative precision $\epsilon_2$
    
    As the runtime for the norm estimation $\frac{\kappa(V)(\mu(V) + T_{R_2,\epsilon_1}))\log(1/\epsilon_{mult})}{\epsilon_{norms}}$ does not depend on $d$, we consider it smaller than the runtime for performing tomography. Thus, the runtime for this operation is:
            $$O(\frac{d^2\log d}{\epsilon^2_{tom}}\kappa(V)(\mu(V) + T_{R_2, \epsilon_1}))\log(1/\epsilon_{mult})). $$
        
    Let's analyze the error of this procedure. We want a matrix $\overline{\Sigma_j'}$ that is $\sqrt{\eta}\delta_\mu$-close to the correct one: $\norm{\overline{\Sigma_j'} - \Sigma'_j}_F = \norm{\text{vec}\overline{[\Sigma_j']} - \text{vec}[\Sigma'_j]}_2 < \sqrt{\eta}\delta_\mu$. Again, the error due to matrix multiplication can be taken as small as necessary, since is inside a logarithm. From Claim \ref{quattrocinque}, we just need to fix the error of tomography and norm estimation such that 		$\eta(\epsilon_{unit} + \epsilon_{norms}) < \sqrt{\eta}\delta_{\mu} $ where we have used $\eta$ as an upper bound on $\norm{\Sigma_j}_F$. 
    For the unit vectors, we require
            $ \norm{\ket{\Sigma'_j} - \overline{\ket{\Sigma'_j}} } \leq \norm{ \overline{\ket{\Sigma'_j}} - \widehat{\ket{\Sigma'_j}}} + \norm{\widehat{\ket{\Sigma'_j}} - \ket{\Sigma'_j}} < \errtom + \epsilon_1 \leq \eta\epsilon_{unit} \leq \frac{\delta_{\mu}\sqrt{\eta}}{2}$, where 
            $\overline{\ket{\Sigma'_j}} $ is the error due to tomography and $\widehat{\ket{\Sigma'_j}}$ is the error due to the responsibilities in Lemma \ref{lemma:responsibility}. For this inequality to be true, we choose $\errtom = \epsilon_1 < \frac{\delta_\mu/\sqrt{\eta}}{4}$. 
            
            The same argument applies to estimating the norm $\norm{\Sigma_j'}$ with relative error :
            $ |\norm{\Sigma'_j} - \overline{\norm{\Sigma'_j}} |  \leq  | \overline{\norm{\Sigma'_j}} - \widehat{\norm{\Sigma'_j}}| + |\widehat{\norm{\Sigma'_j}} - \norm{\Sigma'_j}| < \epsilon_2 + \epsilon_1 \leq \delta_{\mu}/2\sqrt{\eta}$ (where here $\epsilon_2$ is the error of the amplitude estimation step used in Theorem \ref{qla} and $\epsilon_1$ is the error in calling Lemma \ref{lemma:responsibility}. Again, we choose $\epsilon_2=\epsilon_1 \leq \frac{\delta_\mu/\sqrt{\eta}}{4}$. 
            
                Since the tomography is more costly than the amplitude estimation step, we can disregard the runtime for the norm estimation step. As this operation is repeated $k$ times for the $k$ different covariance matrices, the total runtime of the whole algorithm is given by $\widetilde{O} \Big( \frac{kd^2 \eta\kappa(V)(\mu(V)+\eta^2k^{3.5}\kappa(\Sigma)\mu(\Sigma))}{\delta_{\mu}^3} \Big)		$.		
    Let us also recall that for each of new computed covariance matrices, we use Lemma \ref{lemma:det} to compute an estimate for their log-determinant and this time can be absorbed in the time $T_\Sigma$.
        \end{proof}

        \subsubsection{Quantum estimation of log-likelihood}
        Now we are going to state how to get an estimate of the log-likelihood using a quantum procedure and access to a GMM model. Because of the error analysis, in the quantum algorithm is more conveniente to estimate $\mathbb{E}[ p(v_i ; \gamma^t)] = \frac{1}{n}\sum_{i=1}^n  p(v_i; \gamma) $.  From this we can estimate an upper bound on the $\log$-likelihood as $n \log \mathbb{E}[p(v_i)] = \sum_{i=1}^n \log \mathbb{E}[p(v_i)] \geq \sum_{i=1}^n \log p(v_i) = \ell(\gamma; V)$.

        \begin{lemma}[Quantum estimation of likelihood]\label{lemma:likelihoodestimation}
             We assume we have quantum access to a GMM with parameters $\gamma^{t}$. For $\errlikelihood > 0$, there exists a quantum algorithm that		
                estimates  $\mathbb{E}[p(v_i  ; \gamma^{t} )]$ with absolute error $\errlikelihood$ 
             in time
            $$ T_\ell = \widetilde{O}\left( k^{1.5}\eta^{1.5}  \frac{\kappa(\Sigma)\mu(\Sigma)}{\errlikelihood^2} \right)$$		
        \end{lemma}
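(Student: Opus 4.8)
The plan is to estimate the scalar $\mathbb{E}[p(v_i;\gamma^t)] = \frac1n\sum_{i=1}^n p(v_i)$, where $p(v_i) = \sum_{j=1}^k \theta_j\,\phi(v_i|\mu_j,\Sigma_j)$, by coherently computing each $p(v_i)$ into a register, encoding it as the amplitude of an ancilla, and reading off the resulting average probability with amplitude estimation. First I would prepare the uniform superposition $\frac{1}{\sqrt n}\sum_i\ket i$ from quantum access (Definition \ref{def:quantumaccess}) and, controlling on $\ket i$, run the multi-Gaussian version of the Quantum Gaussian Evaluation of Lemma \ref{lemma:gaussian} for each of the $k$ components to obtain the exponents $\overline{s_{ij}}$ with additive error $\errgauss$. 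This is exactly the computation performed inside part 1 of Lemma \ref{lemma:responsibility}, so its cost is $T_{R_1,\errgauss}=\widetilde O(k^{1.5}T_{G,\errgauss})$. By reversible arithmetic I would then exponentiate, weight by the $\theta_j$ supplied by quantum access, and sum to obtain $\ket i\ket{\overline{p(v_i)}}$, uncomputing all intermediate registers.

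The error analysis is the crux. Since $p(v_i)=\sum_j\theta_j e^{s_{ij}}$ is a convex combination, an additive error $\errgauss$ in each exponent induces only a multiplicative $(1\pm O(\errgauss))$ error in $p(v_i)$ (equivalently, $\log p$ is the log-sum-exp of the $s_{ij}+\log\theta_j$, whose gradient is a softmax with unit $\ell_1$ norm, so $|\,\overline{\log p}-\log p\,|\le\errgauss$). Hence $|\mathbb{E}[\overline p]-\mathbb{E}[p]|\le O(\errgauss)\,\mathbb{E}[p]$, and choosing $\errgauss=\Theta(\errlikelihood)$ suffices for the target absolute error $\errlikelihood$. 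I would then adjoin an ancilla and apply a controlled rotation to create a state in which the probability of measuring $\ket 0$ equals $\frac1n\sum_i\overline{p(v_i)}/B$, where $B=O(\eta)$ is a known upper bound on $p(v_i)$ under the dataset normalization, used to rescale the value into $[0,1]$; this rescaling is precisely what produces the $\sqrt\eta$ factor below.

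Finally I would apply the amplitude estimation and amplification theorem of \cite{BHMT00}, boosting the success probability to $1-\gamma$ with the median lemma of \cite{wiebe2014quantum} as in Lemma \ref{lemma:quadratic forms}, to estimate this probability and hence $\mathbb{E}[p(v_i)]$ to additive error $\errlikelihood$. Because the encoded probability is $\Theta(\mathbb{E}[p]/\eta)$, attaining absolute error $\errlikelihood$ on $\mathbb{E}[p]$ requires only $O(\sqrt\eta/\errlikelihood)$ rounds. Multiplying the preparation cost $T_{R_1,\errgauss}=\widetilde O(k^{1.5}\kappa(\Sigma)\mu(\Sigma)\eta/\errlikelihood)$ by these $O(\sqrt\eta/\errlikelihood)$ rounds and absorbing the $\log(1/\gamma)$ and the other logarithmic factors into $\widetilde O$ gives $T_\ell=\widetilde O\!\big(k^{1.5}\eta^{1.5}\kappa(\Sigma)\mu(\Sigma)/\errlikelihood^2\big)$.

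The main obstacle I expect is the joint tuning of $\errgauss$ and the amplitude-estimation precision so that the two sources of error each stay below $\errlikelihood$ while the round count remains $O(\sqrt\eta/\errlikelihood)$ rather than $O(\eta/\errlikelihood)$; this in turn hinges on justifying the rescaling bound $B=O(\eta)$ and on the (mild) regime where $\mathbb{E}[p]=O(1)$, so that the $\sqrt{a}/\epsilon$ branch of amplitude estimation dominates. The coherent arithmetic and uncomputation in the first step are routine and contribute only logarithmic overhead.
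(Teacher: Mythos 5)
Your proposal follows essentially the same route as the paper's proof: coherently compute $\overline{p(v_i)}$ in a register by calling the Gaussian evaluation of Lemma \ref{lemma:gaussian} for all $k$ components plus quantum arithmetic, encode it as the amplitude of an ancilla via a controlled rotation, and read off $\mathbb{E}[p(v_i;\gamma^t)]$ with amplitude estimation, with matching error splitting and the same final runtime. In fact your write-up is somewhat more explicit than the paper's on two points it leaves implicit --- the log-sum-exp/softmax argument for propagating the exponent error into $p(v_i)$, and the rescaling bound $B=O(\eta)$ together with the regime $\mathbb{E}[p]=O(1)$ that account for the $\eta^{1.5}$ and the behaviour of the amplitude-estimation round count --- so no gap is introduced.
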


        \begin{proof}
            We obtain the likelihood from the ability to compute the value of a Gaussian distribution and quantum arithmetic. Using the mapping of Lemma $\ref{lemma:gaussian}$ with precision $\errgauss$, we can compute $\phi(v_i|\mu_j, \Sigma_j)$ for all the Gaussians, that is $ \ket{i} \bigotimes_{j=0}^{k-1} \ket{j}\ket{\overline{p(v_i|j;\gamma_j)}}$. Then, by knowing $\theta$, and by using quantum arithmetic we can compute in a register the mixture of Gaussian's: $p(v_i; \gamma)  = \sum_{j\in[k]} \theta_j p(v_i|j;\gamma )$. We now drop the notation for the model $\gamma$ and write $p(v_i)$ instead of $p(v_i; \gamma)$. Doing the previous calculations quantumly leads to the creation of the state $\ket{i}\ket{p(v_i)}$. 
            We perform the mapping $\ket{i}\ket{p(v_i)}\mapsto \ket{i}\left( \sqrt{p(v_i)|}\ket{0} + \sqrt{1 - p(v_i)}\ket{1} \right)$ and estimate $p(\ket{0}) \simeq \mathbb{E}[p(v_i)]$ with amplitude estimation on the ancilla qubit. 
            To get a $\errlikelihood$-estimate of $p(0)$ we need to decide the precision parameter we use for estimating $\overline{p(v_i|j;\gamma)}$ and the precision required by amplitude estimation. 
            Let $\overline{p(0)}$ be the $\errgauss$-error introduced by using Lemma \ref{lemma:gaussian} and $\widehat{p(0)}$ the error introduced by amplitude estimation. Using triangle inequality we set $\norm{p(0) - \widehat{p(0)}} < \norm{\widehat{p(0)} - \overline{p(0)}} + \norm{\overline{p(0)} - p(0)} < \errlikelihood $.  
            
            To have $ | p(0) - \widehat{p(0)}| < \errlikelihood$, we should set $\errgauss$ such that $ | \overline{p(0)} - p(0) | < \errlikelihood/4$, and we set the error in amplitude estimation and in the estimation of the probabilities to be $\errlikelihood/2$. The runtime of this procedure is therefore:
            $$ \widetilde{O}\left(k \cdot T_{G,\errlikelihood} \cdot \frac{1}{\errlikelihood \sqrt{p(0)}}\right) = \widetilde{O}\left(k^{1.5}\eta^{1.5} \cdot \frac{\kappa(\Sigma)\mu(\Sigma)}{\errlikelihood^2 }\right)$$

          \end{proof}

        \subsubsection{QEM for GMM}
        Putting together all the previous Lemmas, we write the main result of the work.
    
    \begin{theorem}[QEM for GMM]\label{th:qgmm}
        We assume we have quantum access to a GMM with parameters $\gamma^t$. For parameters $\delta_\theta, \delta_\mu, \epsilon_\tau > 0$, the running time of one iteration of the Quantum Expectation-Maximization (QEM) algorithm is 
    $$O( T_\theta + T_\mu + T_\Sigma + T_\ell),$$
    for $	T_\theta   =  \widetilde{O}\left(k^{3.5} \eta^{1.5} \frac{ \kappa(\Sigma)\mu(\Sigma) }{\delta_\theta^2}  \right)$,\\  $T_\mu  =  \widetilde{O}\left(    \frac{k d\eta \kappa(V) (\mu(V) + k^{3.5}\eta^{1.5}\kappa(\Sigma)\mu(\Sigma))}{\delta_{\mu}^3}  \right)$,\\ $T_\Sigma  =  \widetilde{O} \Big( \frac{kd^2 \eta\kappa^2(V)(\mu(V')+\eta^2k^{3.5}\kappa(\Sigma)\mu(\Sigma))}{\delta_{\mu}^3} \Big)$ and\\ $T_\ell  =  \widetilde{O}\left( k^{1.5}\eta^{1.5}  \frac{\kappa(\Sigma)\mu(\Sigma)}{\errlikelihood^2} \right).$
    
    For parameter that are expected to be predominant in the runtime, $d$ and $\kappa(V)$, the dominant term is $T_\Sigma$.
    \end{theorem}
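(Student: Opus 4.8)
The plan is to recognize that Theorem \ref{th:qgmm} is essentially a composition statement: a single iteration of Algorithm \ref{qGMMEM} is nothing more than the sequential execution of the four estimation subroutines proved above, so the cost of one iteration is simply the sum of their individual running times, provided the precisions are chosen consistently. First I would match each step of the algorithm to its corresponding lemma and fix the precision at which each is invoked so that the output satisfies the approximate-GMM guarantee of Definition \ref{def:approxgmm}. Concretely, Step 1 calls Lemma \ref{lemma:theta} with precision $\delta_\theta$ to obtain $\norm{\overline{\theta}^{t+1} - \theta^{t+1}} \leq \delta_\theta$ in time $T_\theta$; Step 2 calls Lemma \ref{lemma:centroids} with precision $\delta_\mu$ to obtain $\norm{\overline{\mu_j}^{t+1} - \mu_j^{t+1}} \leq \delta_\mu$ for all $j \in [k]$ in time $T_\mu$; Step 3 calls Lemma \ref{lemma:sigma} with Frobenius precision $\delta_\mu\sqrt{\eta}$ in time $T_\Sigma$; and Step 4 calls Lemma \ref{lemma:likelihoodestimation} with precision $\epsilon_\tau/2$ in time $T_\ell$, which is exactly what is needed to evaluate the stopping criterion $|\mathbb{E}[\overline{p(v_i;\gamma^t)}] - \mathbb{E}[\overline{p(v_i;\gamma^{t-1})}]| < \epsilon_\tau$ to tolerance $\epsilon_\tau$.

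Next I would dispose of the remaining bookkeeping steps. The log-determinant estimations of Lemma \ref{lemma:det}, invoked in Steps 2 and 5, can be run at a precision whose cost is dominated by $T_\Sigma$, exactly as argued at the start of the Expectation subsection, so they contribute nothing to the asymptotic running time; this is why the covariance update and its associated $\log\det$ estimate are bundled together. Rebuilding quantum access to $\gamma^{t+1}$ via the QRAM data structure of Theorem \ref{qram} costs only $O(\mathrm{polylog})$ per updated entry and is therefore absorbed into the $\widetilde{O}$ notation. Summing the four surviving contributions then yields the claimed per-iteration cost $O(T_\theta + T_\mu + T_\Sigma + T_\ell)$ with the expressions substituted verbatim from the respective lemmas.

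Finally I would establish the dominance claim by comparing the four terms in the regime where $d$ and $\kappa(V)$ are the large parameters. Both $T_\theta$ and $T_\ell$ are independent of $d$ and of $\kappa(V)$; $T_\mu$ carries only a single factor of $d$ and of $\kappa(V)$; whereas $T_\Sigma$ carries $d^2$ together with the highest power of $\kappa(V)$ among the four terms, since the covariance update in Lemma \ref{lemma:sigma} operates on the $d^2$-dimensional vectorized matrix $V'$ and applies quantum linear algebra on top of the responsibility-state preparation. Hence $T_\Sigma$ asymptotically dominates.

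There is no genuinely hard mathematical obstacle here, because all the analytic content lives in the preceding lemmas and the theorem only aggregates them. The only point requiring care is the consistent propagation of the error parameters, so that the composed output provably lands inside Definition \ref{def:approxgmm}, and the verification that the subsumed steps (the $\log\det$ estimations and the QRAM reconstruction) are indeed dominated by $T_\Sigma$ at the precisions chosen above.
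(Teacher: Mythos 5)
Your proposal is correct and follows essentially the same route as the paper, whose proof is simply the observation that the theorem follows directly from composing Lemmas \ref{lemma:theta}, \ref{lemma:centroids}, \ref{lemma:sigma}, and \ref{lemma:likelihoodestimation}, with the log-determinant and quantum-access bookkeeping absorbed and $T_\Sigma$ dominating due to the tomography on the $d^2$-dimensional vectorized covariance matrices. Your write-up merely makes explicit the precision-matching and absorption arguments that the paper leaves implicit.
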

    
    The proof follows directly from the previous lemmas. Note that the cost of the whole algorithm is given by repeating the expectation and the maximization steps several times, until the threshold on the log-likelihood is reached. The expression of the runtime can be simplified by observing that the cost of performing tomography on the covariance matrices $\Sigma_j$ dominates the runtime.

\section{Experimental Results}\label{expevidence}
    In this section, we present the results of some experiments on real datasets to estimate the runtime of the algorithm, and bound the value of the parameters that governs the runtime, like $\kappa(\Sigma)$, $\kappa(V)$, $\mu(\Sigma)$, $\mu(V)$, $\delta_\theta$, and $\delta_\mu$, and we give heuristic for dealing with the condition number. We can put a threshold on the condition number of the matrices $\Sigma_j$, by discarding singular values which are smaller than a certain threshold. This might decrease the runtime of the algorithm without impacting its performances. This is indeed done often in classical machine learning models, since discarding the eigenvalues smaller than a certain threshold might even improve upon the metric under consideration (i.e. often the accuracy), by acting as a form of regularization \citep[Section 6.5]{murphy2012machine}. This is equivalent to limiting the eccentricity of the Gaussians. We can do similar considerations for putting a threshold on the condition number of the dataset $\kappa(V)$. Recall that the value of the condition number of the matrix $V$ is approximately
    $1/\min( \{\theta_1,\cdots, \theta_k\}\cup \{ d_{st}(\mathcal{N}(\mu_i, \Sigma_i), \mathcal{N}(\mu_j, \Sigma_j)) | i\neq j \in [k] \})$, where $d_{st}$ is the statistical distance between two Gaussian distributions \cite{kalai2012disentangling}. We have some choice in picking the definition for $\mu$: in previous experiments it has been found that choosing the maximum $\ell_1$ norm of the rows of $V$ lead to values of $\mu(V)$ around $10$ for the MNIST dataset \cite{KL18, kerenidis2019q}. Because of the way $\mu$ is defined, its value will not increase significantly as we add vectors to the training set. In case the matrix $V$ can be clustered with high-enough accuracy by distance-based algorithms like k-means, it has been showed that the Frobenius norm of the matrix is proportional to $\sqrt{k}$, that is, the rank of the matrix depends on the number of different classes contained in the data. Given that EM is just a more powerful extension of k-means, we can rely on similar observations too. Usually, the number of features $d$ is much more than the number of components in the mixture, i.e. $d \gg k$, so we expect $d^2$ to dominate the $k^{3.5}$ term in the cost needed to estimate the mixing weights, thus making $T_\Sigma$ the leading term in the runtime. We expect this cost to be be mitigated by using $\ell_\infty$ form of tomography but we defer further experiment for future research.
    
As we said, the quantum running time saves the factor that depends on the number of samples and introduces a number of other parameters. Using our experimental results we can see that when the number of samples is large enough one can expect the quantum running time to be faster than the classical one. One may also save some more factors from the quantum running time with a more careful analysis.

   To estimate the runtime of the algorithm, we need to gauge the value of the parameters $\delta_\mu$ and $\delta_\theta$, such that they are small enough so that the likelihood is perturbed less than $\epsilon_\tau$, but big enough to have a fast algorithm. We have reasons to believe that on well-clusterable data, the value of these parameters will be large enough, such as not to impact dramatically the runtime. A quantum version of k-means algorithm has already been simulated on the MNIST dataset under similar assumptions \citep{kerenidis2019q}. The experiment concluded that, for datasets that are expected to be clustered nicely by this kind of clustering algorithms, the value of the parameters $\delta_\mu$ did not decrease by increasing the number of samples nor the number of features. There, the value of $\delta_\mu$ (which in their case was called just $\delta$) has been kept between $0.2$ and $0.5$, while retaining a classification accuracy comparable to the classical k-means algorithm. We expect similar behaviour in the GMM case, namely that for large datasets the impact on the runtime of the errors $(\delta_\mu, \delta_\theta)$ does not cancel out the exponential gain in the dependence on the number of samples, and we discuss more about this in the next paragraph. The value of $\epsilon_\tau$ is usually (for instance in scikit-learn \citep{scikit-learn} ) chosen to be $10^{-3}$. We will see that the value of $\eta$ has always been 10 on average, with a maximum of 105 in the experiments.  

    \begin{table}[h]
        \centering
     
        \begin{tabular}{c|c|c|c|c|}
        \cline{2-5}
 & \multicolumn{2}{c|}{MAP}  & \multicolumn{2}{c|}{ML}   \\ \cline{2-5} 
     & avg          & max        & avg         & max         \\ \hline
        \multicolumn{1}{|c|}{$\norm{\Sigma}_2$}                & 0.22         & 2.45       & 1.31        & 3.44        \\ \hline
        \multicolumn{1}{|c|}{$|\log \det(\Sigma)|$}                                 & 58.56        & 70.08      & 14.56       & 92.3        \\ \hline
        \multicolumn{1}{|c|}{$\kappa^*(\Sigma)$} & 4.21         & 50         & 15.57       & 50          \\ \hline
        \multicolumn{1}{|c|}{$\mu(\Sigma)$}                       & 3.82         & 4.35       & 2.54        & 3.67        \\ \hline
        \multicolumn{1}{|c|}{$\mu(V)$}                                           & 2.14         & 2.79       & 2.14        & 2.79        \\ \hline
        \multicolumn{1}{|c|}{$\kappa(V)$}                                        & 23.82        & 40.38      & 23.82       & 40.38       \\ \hline
        \end{tabular}
            \caption{We estimate some of the parameters of the VoxForge \citep{voxforge} dataset.  Each model is the result of the best of 3 different initializations of the EM algorithm. The first and the second rows are the maximum singular values of all the covariance matrices, and the absolute value of the log-determinant. The column $\kappa^*(\Sigma)$ shows the condition number after the smallest singular values have been discarded.}        \label{tab:results}

        \end{table}

        \section{Experiments} 
        We  analyzed a dataset which can be fitted well with the EM algorithm \citep{reynolds2000speaker,appliedgmm,voxforge}. Specifically, we used EM to do speaker recognition: the task of recognizing a speaker from a voice sample, having access to a training set of recorded voices of all the possible speakers. The training set consist in 5 speech utterances for 38 speakers (i.e. clips of a few seconds of voice speech). For each speaker, we extract the mel-frequency cepstral coefficients (MFCC) of the utterances \cite{reynolds2000speaker}, resulting in circa 5000 vectors of 40 dimensions. This represent the training set for each speaker. 
        A speaker is then modeled with a mixture of 16 different diagonal Gaussians. The test set consists of other 5 or more unseen utterances for each of the same speakers. To label an utterance with a speaker, we compute the log-likelihood of the utterance for each trained model. The label consist in the speaker with highest log-likelihood. The experiment has been carried in form of classical simulation on a laptop computer. We repeated the experiment using a perturbed model, where we added some noise to the GMM at each iteration of the training, as in Definition \ref{def:approxgmm}. Then we measured the accuracy of the speaker recognition task. At last, we measured condition number, the absolute value of the log-determinant, and the value of $\mu(V)$ and $\mu(\Sigma)$. In this way we can test the stability and accuracy of the approximate GMM model introduced in Section \ref{gmmsection}, under the effect of noise. For values of $\delta_\theta = 0.038$, $\delta_\mu=0.5$, we correctly classified 98.7\% utterances. The baseline for ML estimate of the GMM is of $97.1\%$. We attribute the improved accuracy to the regularizing effect of the threshold and the noise, as the standard ML estimate is likely to overfit the data. We report the results of the measurement in Table \ref{tab:results}.

        We used a subset of the voices that can be found on VoxForge \citep{voxforge}. The training set consist in at 5 speech utterances from 38 speakers. An utterance is a wav audio clips of a few seconds of voice speech.  In order to proceed with speech recognition from raw wav audio files, we need to proceed with classical feature extraction procedures. In the speech recognition community is common to extract from audio the Mel Frequency Cepstrum Coefficients (MFCCs) features \citep{reynolds2000speaker}, and we followed the same approach. We selected $d=40$ features for each speaker. This classical procedure, takes as input an audio file, and return a matrix where each row represent a point in $\mathbb{R}^{40}$, and each row represent a small window of audio file of a few milliseconds. Due to the differences in the speakers' audio data, the different dataset $V_1 \dots V_{38}$ are made of a variable number of points which ranges from $n = 2000$ to $4000$. Then, each speaker is modeled with a mixture of 16 Gaussians with diagonal covariance matrix. The test set consists of other 5 (or more) unseen utterances of the same 38 speakers. The task is to correctly label the unseen utterances with the name of the correct speaker. This is done by testing each of the GMM fitted during the training against the new test sample. The selected model is the one with the highest likelihood. In the experiments, we compared the performances of classical and quantum algorithm, and measured the relevant parameters that govern the runtime of the quantum algorithm. We used scikit-learn \citep{scikit-learn} to run all the experiments.

        We also simulated the impact of noise during the training of the the GMM fitted with ML estimate, so to assure the convergence of the quantum algorithm. For almost all GMM fitted using 16 diagonal covariance matrices, there is at least a $\Sigma_j$ with bad condition number (i.e up to 2500 circa). As in \citep{kerenidis2019q, KL18} we took a threshold on the matrix by discarding singular values smaller than a certain value. Practically, we discarded any singular value smaller than $0.07$. In the experiment, thresholding the covariance matrices not only did not made the accuracy worse, but had also a positive impact on the accuracy, perhaps because it has a regularizing effect on the model. For each of the GMM $\gamma^t$ estimated with ML estimate, we perturbed $\gamma$ at each iteration. Then, we measured the accuracy on the test set. For each model, the perturbation consists of three things. First we add to each of the components of $\theta$ some noise from the truncated gaussian distribution centered in $\theta_i$ in the interval $(\theta_i-\delta_\theta/\sqrt{k}, \theta_i+\delta_\theta/\sqrt{k} )$ with unit variance. This can guarantee that overall, the error in the vector of the mixing weights is smaller than $\delta_\theta$. 
        Then we perturb each of the components of the centroids $\mu_j$ with gaussian noise centered in $(\mu_j)_i$ on the interval $( (\mu_j)_i - \frac{\delta_\mu}{\sqrt{d}}), (\mu_j)_i+\frac{\delta_\mu}{\sqrt{d}})$. Similarly, we perturbed also the diagonal matrices $\Sigma_j$ with a vector of norm smaller than $\delta_\mu\sqrt{\eta}$, where $\eta=10$. As we are using a diagonal GMM, this reduces to perturbing each singular value with gaussian noise from a truncated gaussian centered $\Sigma$ on the interval $((\Sigma_j)_{ii}-\delta_\mu\sqrt{\eta}/\sqrt{d}, (\Sigma_j)_{ii}+\delta_\mu\sqrt{\eta}/\sqrt{d})$. Then, we made sure that each of the singular values stays positive, as covariance matrices are SPD. Last, the matrices are thresholded, i.e. the eigenvalues smaller than a certain threhosld $\sigma_\tau$ are set to $\sigma_\tau$. This is done in order to make sure that the effective condition number $\kappa(\Sigma)$ is no bigger than a threhosld $\kappa_\tau$. With a $\epsilon_\tau=7\times 10^{-3}$ and $70$ iterations per initialization, all runs of the $7$ different initialization of classical and quantum EM converged. Once the training has terminated, we measured all the values of $\kappa(\Sigma), \kappa(V), \mu(V), \mu(\Sigma), \log\det(\Sigma)$ for both ML and for MAP estimate. The results are in the Table on the main manuscript. Notably, thresholding the $\Sigma_j$ help to mitigate the errors of noise as it regularized the model. In fact, using classical EM with ML estimation, we reached an accuracy of $97.1\%$. With parameters of $\delta_\mu = 0.5$, $\delta_\theta=0.038$, and a threshold on the condition number of the covariance matrices of $\Sigma_j$ of $0.07$, we reached an accuracy of $98.7\%$. \footnote{The experiments has been improved upon a previous version of this work, by adding more data, adding the noise during the training procedure, and finding better hyperparameters. } 
        
        We further analyzed experimentally the evolution of the condition number $\kappa(V_i)$ while adding vectors from all the utterances of the speakers to the training set $V_i$. As we can see from Figure \ref{img:kappa}, all the condition numbers are pretty stable and do not increase by adding new vectors to the various training sets $V_1, \dots V_n$. 
        
        \begin{figure}[h]
        \caption{Evolution of $\kappa(V_i)$ where $V_i$ is the data matrix obtained by all the utterances available from the $i$-th speaker to the training set. For all the different speaker, the condition number of the matrix $V_i$ is stable, and does not increase while adding vectors to the training set.}\label{img:kappa}
        \centering
        \includegraphics[width=0.5\textwidth]{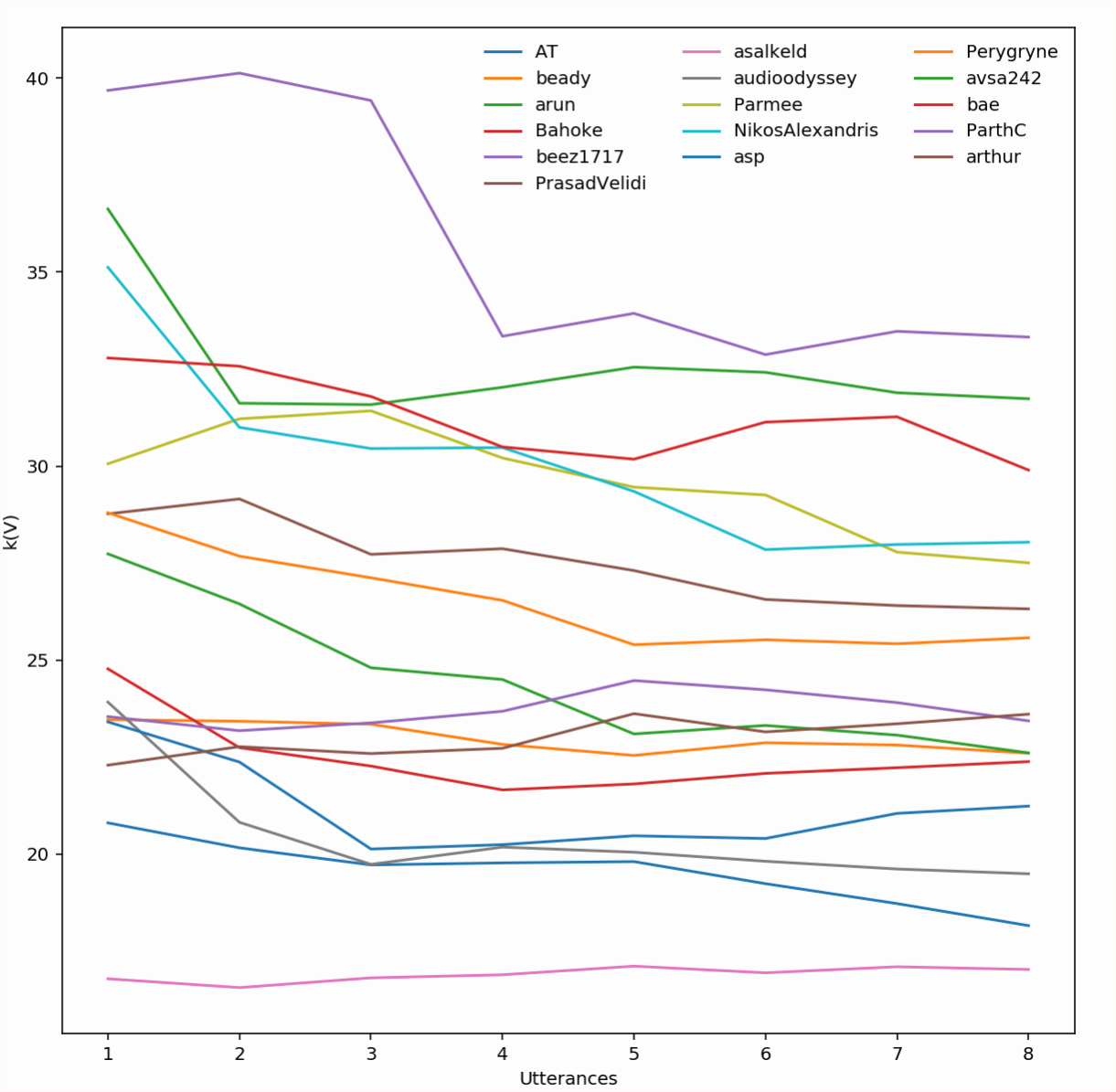}
        \end{figure}
        
        We leave for future work the task of testing the algorithm with further experiments (i.e. bigger and different types of datasets), and further optimizations, like procedures for hyperparameter tuning.

\section{Conclusions} 
Given the tremendous relevance of classical Expectation-Maximization algorithm, it is important to consider quantum versions of EM. Here we proposed a quantum Expectation-Maximization algorithm, and showed how to use it to fit a GMM in the ML estimation setting. We analyzed theoretically the runtime of QEM, and estimate it on real-world datasets, so to better understand cases where quantum computers can offer a computational advantage. While we discussed how to use QEM to fit other mixture models, its usage in the MAP settings is detailed in the Supp. Material. 

The experiments suggest that the influence of the extra parameters in the quantum running time is moderate. This makes us believe that, when quantum computers will be available, our algorithm could be useful in analyzing large datasets. We believe further improvements of the algorithm can reduce even more the complexity with respect to these extra parameters. For instance, the $\ell_\infty$ tomography can potentially remove the dependence on $d$ from the runtime. As we discussed above, we believe dequantization techniques can successfully be applied to this work as well, but we don't expect the time complexity of the final algorithm to be competitive with the classical EM algorithm or with this result. We leave for future work the study of quantum algorithms for fitting GMM robust to adversarial attacks \cite{xu2018parameter, pensia2019extracting,dasgupta1999learning}, and quantum algorithms for computing the log-determinant of Symmetric Positive Definite matrices, and further experiments.

\section{Acknowledgement}
This research has been partly supported by QuData, Quantex, and QuantAlgo. We thank Changpeng Shao for useful discussions. 

\bibliography{further_bibliography,Mendeley} 

\begin{thebibliography}{67}
\providecommand{\natexlab}[1]{#1}
\providecommand{\url}[1]{\texttt{#1}}
\expandafter\ifx\csname urlstyle\endcsname\relax
  \providecommand{\doi}[1]{doi: #1}\else
  \providecommand{\doi}{doi: \begingroup \urlstyle{rm}\Url}\fi

\bibitem[A{\"\i}meur et~al.(2013)A{\"\i}meur, Brassard, and
  Gambs]{aimeur2013quantum}
A{\"\i}meur, E., Brassard, G., and Gambs, S.
\newblock Quantum speed-up for unsupervised learning.
\newblock \emph{Machine Learning}, 90\penalty0 (2):\penalty0 261--287, 2013.

\bibitem[Arrazola et~al.(2019)Arrazola, Delgado, Bardhan, and
  Lloyd]{arrazola2019quantum}
Arrazola, J.~M., Delgado, A., Bardhan, B.~R., and Lloyd, S.
\newblock Quantum-inspired algorithms in practice.
\newblock \emph{arXiv preprint arXiv:1905.10415}, 2019.

\bibitem[Arthur \& Vassilvitskii(2007)Arthur and Vassilvitskii]{arthur2007k}
Arthur, D. and Vassilvitskii, S.
\newblock k-means++: The advantages of careful seeding.
\newblock In \emph{Proceedings of the eighteenth annual ACM-SIAM symposium on
  Discrete algorithms}, pp.\  1027--1035. Society for Industrial and Applied
  Mathematics, 2007.

\bibitem[Balafar et~al.(2010)Balafar, Ramli, Saripan, and
  Mashohor]{balafar2010review}
Balafar, M.~A., Ramli, A.~R., Saripan, M.~I., and Mashohor, S.
\newblock Review of brain mri image segmentation methods.
\newblock \emph{Artificial Intelligence Review}, 33\penalty0 (3):\penalty0
  261--274, 2010.

\bibitem[Biamonte et~al.(2017)Biamonte, Wittek, Pancotti, Rebentrost, Wiebe,
  and Lloyd]{biamonte2017quantum}
Biamonte, J., Wittek, P., Pancotti, N., Rebentrost, P., Wiebe, N., and Lloyd,
  S.
\newblock Quantum machine learning.
\newblock \emph{Nature}, 549\penalty0 (7671):\penalty0 195--202, 2017.

\bibitem[Biernacki et~al.(2003)Biernacki, Celeux, and
  Govaert]{biernacki2003choosing}
Biernacki, C., Celeux, G., and Govaert, G.
\newblock Choosing starting values for the {EM} algorithm for getting the
  highest likelihood in multivariate gaussian mixture models.
\newblock \emph{Computational Statistics \& Data Analysis}, 41\penalty0
  (3-4):\penalty0 561--575, 2003.

\bibitem[Bilmes et~al.(1998)]{bilmes1998gentle}
Bilmes, J.~A. et~al.
\newblock A gentle tutorial of the {EM} algorithm and its application to
  parameter estimation for gaussian mixture and hidden markov models.
\newblock \emph{{\:}}, 1998.

\bibitem[Bl{\"o}mer \& Bujna(2013)Bl{\"o}mer and Bujna]{blomer2013simple}
Bl{\"o}mer, J. and Bujna, K.
\newblock Simple methods for initializing the {EM} algorithm for gaussian
  mixture models.
\newblock \emph{CoRR}, 2013.

\bibitem[Boutsidis et~al.(2017)Boutsidis, Drineas, Kambadur, Kontopoulou, and
  Zouzias]{boutsidis2017randomized}
Boutsidis, C., Drineas, P., Kambadur, P., Kontopoulou, E.-M., and Zouzias, A.
\newblock A randomized algorithm for approximating the log determinant of a
  symmetric positive definite matrix.
\newblock \emph{Linear Algebra and its Applications}, 533:\penalty0 95--117,
  2017.

\bibitem[Brassard et~al.(2002)Brassard, H{\o}yer, Mosca, and Tapp]{BHMT00}
Brassard, G., H{\o}yer, P., Mosca, M., and Tapp, A.
\newblock {Quantum Amplitude Amplification and Estimation}.
\newblock \emph{Contemporary Mathematics}, 305, 2002.

\bibitem[Celeux \& Govaert(1992)Celeux and Govaert]{celeux1992classification}
Celeux, G. and Govaert, G.
\newblock A classification {EM} algorithm for clustering and two stochastic
  versions.
\newblock \emph{Computational statistics \& Data analysis}, 14\penalty0
  (3):\penalty0 315--332, 1992.

\bibitem[Chakrabarti et~al.(2019)Chakrabarti, Yiming, Li, Feizi, and
  Wu]{chakrabarti2019quantum}
Chakrabarti, S., Yiming, H., Li, T., Feizi, S., and Wu, X.
\newblock Quantum wasserstein generative adversarial networks.
\newblock In \emph{Advances in Neural Information Processing Systems}, pp.\
  6778--6789, 2019.

\bibitem[Chakraborty et~al.(2018)Chakraborty, Gily{\'e}n, and Jeffery]{CGJ18}
Chakraborty, S., Gily{\'e}n, A., and Jeffery, S.
\newblock The power of block-encoded matrix powers: improved regression
  techniques via faster {Hamiltonian} simulation.
\newblock \emph{arXiv preprint arXiv:1804.01973}, 2018.

\bibitem[Chia et~al.(2018)Chia, Lin, and Wang]{chia2018quantum}
Chia, N.-H., Lin, H.-H., and Wang, C.
\newblock Quantum-inspired sublinear classical algorithms for solving low-rank
  linear systems.
\newblock \emph{arXiv preprint arXiv:1811.04852}, 2018.

\bibitem[Church \& Gale(1995)Church and Gale]{church1995poisson}
Church, K.~W. and Gale, W.~A.
\newblock Poisson mixtures.
\newblock \emph{Natural Language Engineering}, 1\penalty0 (2):\penalty0
  163--190, 1995.

\bibitem[Dasgupta(1999)]{dasgupta1999learning}
Dasgupta, S.
\newblock Learning mixtures of gaussians.
\newblock In \emph{40th Annual Symposium on Foundations of Computer Science
  (Cat. No. 99CB37039)}, pp.\  634--644. IEEE, 1999.

\bibitem[Dempster et~al.(1977)Dempster, Laird, and Rubin]{dempster1977maximum}
Dempster, A.~P., Laird, N.~M., and Rubin, D.~B.
\newblock Maximum likelihood from incomplete data via the {EM} algorithm.
\newblock \emph{Journal of the royal statistical society. Series B
  (methodological)}, pp.\  1--38, 1977.

\bibitem[Dexter \& Tanner(1972)Dexter and Tanner]{dexter1972packing}
Dexter, A. and Tanner, D.
\newblock Packing densities of mixtures of spheres with log-normal size
  distributions.
\newblock \emph{Nature physical science}, 238\penalty0 (80):\penalty0 31, 1972.

\bibitem[Fan et~al.(2010)Fan, Yuan, Liu, et~al.]{fan2010algorithm}
Fan, X., Yuan, Y., Liu, J.~S., et~al.
\newblock The {EM} algorithm and the rise of computational biology.
\newblock \emph{Statistical Science}, 25\penalty0 (4):\penalty0 476--491, 2010.

\bibitem[Farhi \& Neven(2018)Farhi and Neven]{farhi2018classification}
Farhi, E. and Neven, H.
\newblock Classification with quantum neural networks on near term processors.
\newblock \emph{arXiv preprint arXiv:1802.06002}, 2018.

\bibitem[Ghitany et~al.(1994)Ghitany, Maller, and Zhou]{ghitany1994exponential}
Ghitany, M., Maller, R.~A., and Zhou, S.
\newblock Exponential mixture models with long-term survivors and covariates.
\newblock \emph{Journal of multivariate Analysis}, 49\penalty0 (2):\penalty0
  218--241, 1994.

\bibitem[Gily{\'e}n et~al.(2018{\natexlab{a}})Gily{\'e}n, Lloyd, and
  Tang]{GLT18}
Gily{\'e}n, A., Lloyd, S., and Tang, E.
\newblock Quantum-inspired low-rank stochastic regression with logarithmic
  dependence on the dimension.
\newblock \emph{arXiv preprint arXiv:1811.04909}, 2018{\natexlab{a}}.

\bibitem[Gily{\'e}n et~al.(2018{\natexlab{b}})Gily{\'e}n, Lloyd, and
  Tang]{gilyen2018quantum}
Gily{\'e}n, A., Lloyd, S., and Tang, E.
\newblock Quantum-inspired low-rank stochastic regression with logarithmic
  dependence on the dimension.
\newblock \emph{arXiv preprint arXiv:1811.04909}, 2018{\natexlab{b}}.

\bibitem[Gily{\'e}n et~al.(2018{\natexlab{c}})Gily{\'e}n, Su, Low, and
  Wiebe]{GLSW18}
Gily{\'e}n, A., Su, Y., Low, G.~H., and Wiebe, N.
\newblock Quantum singular value transformation and beyond: exponential
  improvements for quantum matrix arithmetics.
\newblock \emph{arXiv preprint arXiv:1806.01838}, 2018{\natexlab{c}}.

\bibitem[Han et~al.(2015)Han, Malioutov, and Shin]{han2015large}
Han, I., Malioutov, D., and Shin, J.
\newblock Large-scale log-determinant computation through stochastic chebyshev
  expansions.
\newblock In \emph{International Conference on Machine Learning}, pp.\
  908--917, 2015.

\bibitem[Harrow et~al.(2009)Harrow, Hassidim, and Lloyd]{HarrowHassidim2009HHL}
Harrow, A.~W., Hassidim, A., and Lloyd, S.
\newblock {Quantum Algorithm for Linear Systems of Equations}.
\newblock \emph{Physical Review Letters}, 103\penalty0 (15):\penalty0 150502,
  10 2009.
\newblock ISSN 0031-9007.
\newblock \doi{10.1103/PhysRevLett.103.150502}.
\newblock URL \url{http://link.aps.org/doi/10.1103/PhysRevLett.103.150502}.

\bibitem[Hastie et~al.(2009)Hastie, Tibshirani, and
  Friedman]{friedman2001elements}
Hastie, T., Tibshirani, R., and Friedman, J.
\newblock \emph{{The Elements of Statistical Learning}}, volume~1 of
  \emph{Springer Series in Statistics}.
\newblock Springer New York, New York, NY, 2009.
\newblock ISBN 978-0-387-84857-0.
\newblock \doi{10.1007/b94608}.
\newblock URL \url{http://www.springerlink.com/index/10.1007/b94608}.

\bibitem[Jiang et~al.(2019)Jiang, Pu, Chang, Li, Zhang, and
  Duan]{jiang2019experimental}
Jiang, N., Pu, Y.-F., Chang, W., Li, C., Zhang, S., and Duan, L.-M.
\newblock Experimental realization of 105-qubit random access quantum memory.
\newblock \emph{npj Quantum Information}, 5\penalty0 (1):\penalty0 28, 2019.

\bibitem[Kalai et~al.(2012)Kalai, Moitra, and Valiant]{kalai2012disentangling}
Kalai, A.~T., Moitra, A., and Valiant, G.
\newblock Disentangling gaussians.
\newblock \emph{Communications of the ACM}, 55\penalty0 (2):\penalty0 113--120,
  2012.

\bibitem[Kannan et~al.(2005)Kannan, Salmasian, and Vempala]{kannan2005spectral}
Kannan, R., Salmasian, H., and Vempala, S.
\newblock The spectral method for general mixture models.
\newblock In \emph{International Conference on Computational Learning Theory},
  pp.\  444--457. Springer, 2005.

\bibitem[Kearns et~al.(1998)Kearns, Mansour, and Ng]{kearns1998information}
Kearns, M., Mansour, Y., and Ng, A.~Y.
\newblock An information-theoretic analysis of hard and soft assignment methods
  for clustering.
\newblock In \emph{Learning in graphical models}, pp.\  495--520. Springer,
  1998.

\bibitem[Kerenidis \& Luongo(2018)Kerenidis and Luongo]{KL18}
Kerenidis, I. and Luongo, A.
\newblock Quantum classification of the {MNIST} dataset via {S}low {F}eature
  {A}nalysis.
\newblock \emph{arXiv preprint arXiv:1805.08837}, 2018.

\bibitem[Kerenidis \& Prakash(2017{\natexlab{a}})Kerenidis and Prakash]{KP16}
Kerenidis, I. and Prakash, A.
\newblock Quantum recommendation systems.
\newblock \emph{Proceedings of the 8th Innovations in Theoretical Computer
  Science Conference}, 2017{\natexlab{a}}.

\bibitem[Kerenidis \& Prakash(2017{\natexlab{b}})Kerenidis and
  Prakash]{kerenidis2016recommendation}
Kerenidis, I. and Prakash, A.
\newblock Quantum recommendation systems.
\newblock In \emph{8th Innovations in Theoretical Computer Science Conference
  (ITCS 2017)}. Schloss Dagstuhl-Leibniz-Zentrum fuer Informatik,
  2017{\natexlab{b}}.

\bibitem[Kerenidis \& Prakash(2018)Kerenidis and Prakash]{KP18}
Kerenidis, I. and Prakash, A.
\newblock A quantum interior point method for {LPs} and {SDPs}.
\newblock \emph{arXiv:1808.09266}, 2018.

\bibitem[Kerenidis \& Prakash(2020)Kerenidis and
  Prakash]{kerenidis2017quantumsquares}
Kerenidis, I. and Prakash, A.
\newblock Quantum gradient descent for linear systems and least squares.
\newblock \emph{Physical Review A}, 2020.

\bibitem[Kerenidis et~al.(2019{\natexlab{a}})Kerenidis, Landman, Luongo, and
  Prakash]{kerenidis2019q}
Kerenidis, I., Landman, J., Luongo, A., and Prakash, A.
\newblock q-means: A quantum algorithm for unsupervised machine learning.
\newblock In \emph{Advances in Neural Information Processing Systems}, pp.\
  4136--4146, 2019{\natexlab{a}}.

\bibitem[Kerenidis et~al.(2019{\natexlab{b}})Kerenidis, Landman, and
  Prakash]{jonas}
Kerenidis, I., Landman, J., and Prakash, A.
\newblock Quantum algorithms for deep convolutional neural networks.
\newblock \emph{arXiv preprint arXiv:1911.01117}, 2019{\natexlab{b}}.

\bibitem[Kumar()]{appliedgmm}
Kumar, A.
\newblock Speaker identification based on gaussian mixture models.
\newblock
  \url{https://github.com/abhijeet3922/Speaker-identification-using-GMMs}.
\newblock accessed 20/07/2019.

\bibitem[Lagendijk et~al.(1990)Lagendijk, Biemond, and
  Boekee]{lagendijk1990identification}
Lagendijk, R.~L., Biemond, J., and Boekee, D.~E.
\newblock Identification and restoration of noisy blurred images using the
  expectation-maximization algorithm.
\newblock \emph{IEEE Transactions on Acoustics, Speech, and Signal Processing},
  38\penalty0 (7):\penalty0 1180--1191, 1990.

\bibitem[Li et~al.(2015)Li, Liu, Xu, and Du]{li2015experimental}
Li, Z., Liu, X., Xu, N., and Du, J.
\newblock Experimental realization of a quantum support vector machine.
\newblock \emph{Physical review letters}, 114\penalty0 (14):\penalty0 140504,
  2015.

\bibitem[Lindsay(1995)]{lindsay1995mixture}
Lindsay, B.~G.
\newblock Mixture models: theory, geometry and applications.
\newblock In \emph{NSF-CBMS regional conference series in probability and
  statistics}, pp.\  i--163. JSTOR, 1995.

\bibitem[Liu \& Rubin(1995)Liu and Rubin]{liu1995ml}
Liu, C. and Rubin, D.~B.
\newblock {ML} estimation of the t distribution using {EM} and its extensions,
  {ECM} and {ECME}.
\newblock \emph{Statistica Sinica}, pp.\  19--39, 1995.

\bibitem[Lloyd et~al.(2013{\natexlab{a}})Lloyd, Mohseni, and Rebentrost]{LMR13}
Lloyd, S., Mohseni, M., and Rebentrost, P.
\newblock {Quantum algorithms for supervised and unsupervised machine
  learning}.
\newblock \emph{arXiv}, 1307.0411:\penalty0 1--11, 7 2013{\natexlab{a}}.
\newblock URL \url{http://arxiv.org/abs/1307.0411}.

\bibitem[Lloyd et~al.(2013{\natexlab{b}})Lloyd, Mohseni, and
  Rebentrost]{Lloyd2013a}
Lloyd, S., Mohseni, M., and Rebentrost, P.
\newblock {Quantum algorithms for supervised and unsupervised machine
  learning}.
\newblock \emph{arXiv}, 1307.0411:\penalty0 1--11, 7 2013{\natexlab{b}}.
\newblock URL \url{http://arxiv.org/abs/1307.0411}.

\bibitem[Miyahara et~al.(2019)Miyahara, Aihara, and
  Lechner]{miyahara2019expectation}
Miyahara, H., Aihara, K., and Lechner, W.
\newblock Quantum expectation-maximization algorithm.
\newblock \emph{Personal Communication}, 2019.

\bibitem[Moitra(2018)]{moitra2018algorithmic}
Moitra, A.
\newblock \emph{Algorithmic aspects of machine learning}.
\newblock Cambridge University Press, 2018.

\bibitem[Moitra \& Valiant(2010)Moitra and Valiant]{moitra2010settling}
Moitra, A. and Valiant, G.
\newblock Settling the polynomial learnability of mixtures of gaussians.
\newblock In \emph{2010 IEEE 51st Annual Symposium on Foundations of Computer
  Science}, pp.\  93--102. IEEE, 2010.

\bibitem[Montanaro(2016)]{montanaro2016quantum}
Montanaro, A.
\newblock Quantum algorithms: an overview.
\newblock \emph{npj Quantum Information}, 2\penalty0 (1):\penalty0 1--8, 2016.

\bibitem[Murphy(2012)]{murphy2012machine}
Murphy, K.~P.
\newblock \emph{Machine learning: a probabilistic perspective}.
\newblock MIT press, 2012.

\bibitem[Ng(2012)]{ng2012cs229}
Ng, A.
\newblock Cs229 lecture notes - machine learning.
\newblock Lecture notes CS229 Stanford, 2012.

\bibitem[Nielsen \& Chuang(2002)Nielsen and Chuang]{NC02}
Nielsen, M.~A. and Chuang, I.
\newblock Quantum computation and quantum information, 2002.

\bibitem[Otterbach et~al.(2017)Otterbach, Manenti, Alidoust, Bestwick, Block,
  Bloom, Caldwell, Didier, Fried, Hong, et~al.]{otterbach2017unsupervised}
Otterbach, J., Manenti, R., Alidoust, N., Bestwick, A., Block, M., Bloom, B.,
  Caldwell, S., Didier, N., Fried, E.~S., Hong, S., et~al.
\newblock Unsupervised machine learning on a hybrid quantum computer.
\newblock \emph{arXiv preprint arXiv:1712.05771}, 2017.

\bibitem[Pearson(1895)]{pearson1895x}
Pearson, K.
\newblock X. contributions to the mathematical theory of evolution.—ii. skew
  variation in homogeneous material.
\newblock \emph{Philosophical Transactions of the Royal Society of
  London.(A.)}, \penalty0 (186):\penalty0 343--414, 1895.

\bibitem[Pedregosa et~al.(2011)Pedregosa, Varoquaux, Gramfort, Michel, Thirion,
  Grisel, Blondel, Prettenhofer, Weiss, Dubourg, Vanderplas, Passos,
  Cournapeau, Brucher, Perrot, and Duchesnay]{scikit-learn}
Pedregosa, F., Varoquaux, G., Gramfort, A., Michel, V., Thirion, B., Grisel,
  O., Blondel, M., Prettenhofer, P., Weiss, R., Dubourg, V., Vanderplas, J.,
  Passos, A., Cournapeau, D., Brucher, M., Perrot, M., and Duchesnay, E.
\newblock Scikit-learn: Machine learning in {P}ython.
\newblock \emph{Journal of Machine Learning Research}, 12:\penalty0 2825--2830,
  2011.

\bibitem[Pensia et~al.(2019)Pensia, Jog, and Loh]{pensia2019extracting}
Pensia, A., Jog, V., and Loh, P.-L.
\newblock Extracting robust and accurate features via a robust information
  bottleneck.
\newblock \emph{arXiv preprint arXiv:1910.06893}, 2019.

\bibitem[Reynolds et~al.(2000)Reynolds, Quatieri, and
  Dunn]{reynolds2000speaker}
Reynolds, D.~A., Quatieri, T.~F., and Dunn, R.~B.
\newblock Speaker verification using adapted gaussian mixture models.
\newblock \emph{Digital signal processing}, 10\penalty0 (1-3):\penalty0 19--41,
  2000.

\bibitem[Rudin et~al.(1964)]{rudin1964principles}
Rudin, W. et~al.
\newblock \emph{Principles of mathematical analysis}, volume~3.
\newblock McGraw-hill New York, 1964.

\bibitem[Suba{\c{s}}{\i} et~al.(2019)Suba{\c{s}}{\i}, Somma, and
  Orsucci]{subacsi2019quantum}
Suba{\c{s}}{\i}, Y., Somma, R.~D., and Orsucci, D.
\newblock Quantum algorithms for systems of linear equations inspired by
  adiabatic quantum computing.
\newblock \emph{Physical review letters}, 122\penalty0 (6):\penalty0 060504,
  2019.

\bibitem[Tang(2018{\natexlab{a}})]{tang2018quantum}
Tang, E.
\newblock Quantum-inspired classical algorithms for principal component
  analysis and supervised clustering.
\newblock \emph{arXiv preprint arXiv:1811.00414}, 2018{\natexlab{a}}.

\bibitem[Tang(2018{\natexlab{b}})]{tang2018quantum2}
Tang, E.
\newblock A quantum-inspired classical algorithm for recommendation systems.
\newblock \emph{arXiv preprint arXiv:1807.04271}, 2018{\natexlab{b}}.

\bibitem[Voxforge.org()]{voxforge}
Voxforge.org.
\newblock Free speech... recognition - voxforge.org.
\newblock \url{http://www.voxforge.org/}.
\newblock accessed 20/07/2019.

\bibitem[Wiebe et~al.(2014{\natexlab{a}})Wiebe, Kapoor, and Svore]{WKS14}
Wiebe, N., Kapoor, A., and Svore, K.~M.
\newblock {Quantum Algorithms for Nearest-Neighbor Methods for Supervised and
  Unsupervised Learning}.
\newblock 2014{\natexlab{a}}.
\newblock URL \url{https://arxiv.org/pdf/1401.2142.pdf}.

\bibitem[Wiebe et~al.(2014{\natexlab{b}})Wiebe, Kapoor, and
  Svore]{wiebe2014quantum}
Wiebe, N., Kapoor, A., and Svore, K.~M.
\newblock {Quantum Algorithms for Nearest-Neighbor Methods for Supervised and
  Unsupervised Learning}.
\newblock 2014{\natexlab{b}}.
\newblock URL \url{https://arxiv.org/pdf/1401.2142.pdf}.

\bibitem[Wiebe et~al.(2017)Wiebe, Shankar, and Kumar]{Wiebe2017adversaries}
Wiebe, N., Shankar, R., and Kumar, S.
\newblock {Hardening Quantum Machine Learning Against Adversaries}.
\newblock 2017.

\bibitem[Xu \& Mare{\v{c}}ek(2018)Xu and Mare{\v{c}}ek]{xu2018parameter}
Xu, J. and Mare{\v{c}}ek, J.
\newblock Parameter estimation in gaussian mixture models with malicious noise,
  without balanced mixing coefficients.
\newblock In \emph{2018 56th Annual Allerton Conference on Communication,
  Control, and Computing (Allerton)}, pp.\  446--453. IEEE, 2018.

\bibitem[Yin \& Wang(2014)Yin and Wang]{yin2014dirichlet}
Yin, J. and Wang, J.
\newblock A dirichlet multinomial mixture model-based approach for short text
  clustering.
\newblock In \emph{Proceedings of the 20th ACM SIGKDD international conference
  on Knowledge discovery and data mining}, pp.\  233--242. ACM, 2014.

\end{thebibliography}

\bibliographystyle{icml2020}

\appendix

\section{Supplementary Material}

We start by reviewing the classical EM algorithm for GMM, as in Algorithm \ref{GMMEM}.
\begin{algorithm}[ht]
	\caption{Expectation-Maximization for GMM} \label{GMMEM}
	\begin{algorithmic}[1]
		
		\REQUIRE  Dataset $V$, tolerance $\tau >0$.
		\ENSURE A GMM $\gamma^t= (\theta^t, \bm \mu^t, \bm \Sigma^t )$ that maximizes locally the likelihood $\ell(\gamma;V)$ up to tolerance $\tau$.
		\vspace{10pt} 
		\STATE Select $\gamma^0=(\theta^0, \bm \mu^{0}, \bm \Sigma^{0} )$ using classical initialization strategies described in Subsection \ref{initialization}. 
		\STATE $t=0$
		\REPEAT %
		\STATE {\bf Expectation}\\
		$\forall i,j$, calculate the responsibilities as:
		\begin{equation}
			r_{ij}^t = \frac{\theta^t_j \phi(v_i; \mu^t_j, \Sigma^t_j )}{\sum_{l=1}^k \theta^t_l \phi(v_i; \mu^t_l, \Sigma^t_l)}
		\end{equation}
		
		\STATE {\bf Maximization}\\
		Update the parameters of the model as:
		\begin{equation}
			\theta_j^{t+1} \leftarrow \frac{1}{n}\sum_{i=1}^n r^{t}_{ij}
		\end{equation} 
		
		\begin{equation}
			\mu_j^{t+1} \leftarrow \frac{\sum_{i=1}^n r^{t}_{ij} v_i }{ \sum_{i=1}^n r^{t}_{ij}}
		\end{equation} 
		
		\begin{equation}
			\Sigma_j^{t+1} \leftarrow \frac{\sum_{i=1}^n r^{t}_{ij} (v_i - \mu_j^{t+1})(v_i - \mu_j^{t+1})^T }{ \sum_{i=1}^n r^{t}_{ij}}
		\end{equation}

		\STATE t=t+1
		\UNTIL {$\: $}
		\STATE \begin{equation} | \ell(\gamma^{t-1};V) - \ell(\gamma^t;V) | < \tau \end{equation}
		
			\STATE Return $\gamma^{t}=(\theta^t, \bm \mu^t, \bm \Sigma^t )$
	\end{algorithmic}
\end{algorithm}

\subsection{Initialization strategies for EM}\label{initialization}
Unlike k-means clustering, choosing a good set of initial parameters for a mixture of Gaussian is by no means trivial, and in multivariate context is known that the solution is problem-dependent. There are plenty of proposed techniques, and here we describe a few of them. Fortunately, these initialization strategies can be directly translated into quantum subroutines without impacting the overall running time of the quantum algorithm. 

The simplest technique is called \emph{random EM}, and consists in selecting initial points at random from the dataset as centroids, and sample the dataset to estimate the covariance matrix of the data. Then these estimates are used as the starting configuration of the model, and we may repeat the random sampling until we get satisfactory results.

A more standard technique borrows directly the initialization strategy of \emph{k-means++}  proposed in \citep{arthur2007k}, and extends it to make an initial guess for the covariance matrices and the mixing weights. The initial guess for the centroids is selected by sampling from a suitable, easy to calculate distribution. This heuristic works as following: Let $c_0$ be a randomly selected point of the dataset, as first centroid. The other $k-1$ centroids are selected by selecting a vector $v_i$ with probability proportional to $d^2(v_i, \mu_{l(v_i)})$, where $
\mu_{l(v_i)}$ is the previously selected centroid that is the closest to $v_i$ in $\ell_2$ distance. 
These centroids are then used as initial centroids for a round of k-means algorithm to obtain $\mu_1^0 \cdots \mu_j^0$. Then, the covariance matrices can be initialized as $\Sigma_j^0 := \frac{1}{|\mathcal{C}_j|} \sum_{i \in \mathcal{C}_j } (v_i - \mu_j)(v_i - \mu_j)^T $, where $\mathcal{C}_j$ is the set of samples in the training set that have been assigned to the cluster $j$ in the previous round of k-means. The mixing weights are estimated as $\mathcal{C}_j/{n}$. Eventually $\Sigma_j^0$ is regularized to be a PSD matrix.

There are other possible choices for parameter initialization in EM, for instance, based on \emph{Hierarchical Agglomerative Clustering (HAC)} and the \emph{CEM} algorithm. In CEM we run one step of EM, but with a so-called classification step between E and M. The classification step consists in a hard-clustering after computing the initial conditional probabilities (in the E step). The M step then calculates the initial guess of the parameters \citep{celeux1992classification}. In the \emph{small EM} initialization method we run EM with a different choice of initial parameters using some of the previous strategies. The difference here is that we repeat the EM algorithm for a small number of iterations, and we keep iterating from the choice of parameters that returned the best partial results. For an overview and comparison of different initialization techniques, we refer to \citep{blomer2013simple, biernacki2003choosing}.

\paragraph{Quantum initialization strategies}\label{qheuristics}
For the initialization of $\gamma^0$ in the quantum algorithm we can use the same initialization strategies as in classical machine learning. For instance, we can use the classical \emph{random EM} initialization strategy for QEM.

A quantum initialization strategy can also be given using the $\emph{k-means++}$ initializion strategy from \citep{kerenidis2019q}, which returns $k$ initial guesses for the centroids $c_1^0 \cdots c_k^0$ consistent with the classical algorithm in time $\left(k^2 \frac{2\eta^{1.5}}{\epsilon\sqrt{\mathbb{E}(d^2(v_i, v_j))}}\right)$, where $\mathbb{E}(d^2(v_i, v_j))$ is the average squared distance between two points of the dataset, and $\epsilon$ is the tolerance in the distance estimation. From there, we can perform a full round of q-means algorithm and get an estimate for $\mu_1^0 \cdots \mu_k^0$. With q-means and quantum access to the centroids, we can create the state 
\begin{equation}\label{labels} 
\ket{\psi^0} := \frac{1}{\sqrt{n}}\sum_{i=1}^{n} \ket{i} \ket{ l(v_{i})}.
\end{equation}
Where $l(v_i)$ is the label of the closest centroid to the $i$-th point. By sampling $S \in O(d)$ points from this state we get two things. First, from the frequency $f_j$ of the second register we can have a guess of $\theta_j^0 \leftarrow |\mathcal{C}_j|/n \sim f_j/S$. Then, from the first register we can estimate $\Sigma_j^0 \leftarrow \frac{}{}\sum_{i\in S}(v_i - \mu_j^0)(v_i - \mu_j^0)^T$. Sampling $O(d)$ points and creating the state in Equation \eqref{labels} takes time $\widetilde{O}(dk\eta)$ by Theorem \ref{th:innerproductestimation} and the minimum finding procedure described in \citep{kerenidis2019q}. 

Techniques illustrated in \citep{miyahara2019expectation} can also be used to quantize the CEM algorithm which needs a hard-clustering step. Among the different possible approaches, the \emph{random} and the \emph{small EM} greatly benefit from a faster algorithm, as we can spend more time exploring the space of the parameters by starting from different initial seeds, and thus avoid local minima of the likelihood.

\subsection{Special cases of GMM.} What we presented in the main manuscript is the most general model of GMM. For simple datasets, it is common to assume some restrictions on the covariance matrices of the mixtures. The translation into a quantum version of the model should be straightforward. We distinguish between these cases:

\begin{enumerate}
    \item \textbf{Soft $k$-means}. This algorithm is often presented as a generalization of k-means, but it can actually be seen as special case of EM for GMM - albeit with a different assignment rule. In soft $k$-means, the assignment function is replaced by a softmax function with \emph{stiffness} parameter $\beta$. This $\beta$ represents the covariance of the clusters. It is assumed to be equal for all the clusters, and for all dimensions of the feature space. Gaussian Mixtures with constant covariance matrix (i.e. $\Sigma_j = \beta I$ for $\beta \in \mathbb{R}$) can be interpreted as a kind of soft or fuzzy version of k-means clustering. The probability of a point in the feature space being assigned to a certain cluster $j$ is:
    
$$ r_{ij}=\frac{e^{-\beta \norm{x_i - \mu_i}^2}}{\sum_{l=1}^k e^{-\beta \norm{x_i - \mu_l}^2  }} $$

where $\beta>0$ is the stiffness parameter. 

    \item \textbf{Spherical}. In this model, each component has its own covariance matrix, but the variance is uniform in all the directions, thus reducing the covariance matrix to a multiple of the identity matrix (i.e. $\Sigma_j = \sigma_j^2 I$ for $\sigma_j \in \mathbb{R}$).

    \item \textbf{Diagonal}. As the name suggests, in this special case the covariance matrix of the distributions is a diagonal matrix, but different Gaussians might have different diagonal covariance matrices.

    \item \textbf{Tied}. In this model, the Gaussians share the same covariance matrix, without having further restriction on the Gaussian. 
    \item \textbf{Full}. This is the most general case, where each of the components of the mixture have a different, SDP, covariance matrix.

\end{enumerate}

\subsection{Quantum MAP estimate of GMM}
Maximum Likelihood is not the only way to estimate the parameters of a model, and in certain cases might not even be the best one. For instance, in high-dimensional spaces, it is pretty common for ML estimates to overfit. 
Moreover, it is often the case that we have prior information on the distribution of the parameters, and we would like our models to take this information into account. These issues are often addressed using a Bayesian approach, i.e. by using a so-called Maximum A Posteriori estimate (MAP) of a model \citep[Section 14.4.2.8]{murphy2012machine}. MAP estimates work by assuming the existence of a \emph{prior} distribution over the parameters $\gamma$. The posterior distribution we use as objective function to maximize comes from the Bayes' rule applied on the likelihood, which gives the posterior as a product of the likelihood and the prior, normalized by the evidence. More simply, we use the Bayes' rule on the likelihood function, as $p(\gamma;V) = \frac{p(V;\gamma)p(\gamma)}{p(V)}$. This allows us to treat the model $\gamma$ as a random variable, and derive from the ML estimate a MAP estimate:

\begin{equation}\label{mapestimate}
\gamma^*_{MAP} = \argmax_\gamma \sum_{i=1}^{n} \log p(\gamma | v_i)
\end{equation}

Among the advantages of a MAP estimate over ML is that it avoids overfitting by having a kind of regularization effect on the model \citep[Section 6.5]{murphy2012machine}. Another feature consists in injecting into a maximum likelihood model some external information, perhaps from domain experts. 
This advantage comes at the cost of requiring ``good'' prior information on the problem, which might be non-trivial. In terms of labelling, a MAP estimates correspond to a \emph{hard clustering}, where the label of the point $v_i$ is decided according to the following rule:
\begin{align}\label{argmaxmap}
y_i = \argmax_j r_{ij} = \argmax_j \log p(v_i|y_i = j;\gamma) + \nonumber \\
 \log p(y_i = j ;\gamma)  \end{align}

Deriving the previous expression is straightforward using the Bayes' rule, and by noting that the softmax is rank-preserving, and we can discard the denominator of $r_{ij}$ - since it does not depend on $\gamma$ - and it is shared among all the other responsibilities of the points $v_i$. Thus, from Equation \ref{mapestimate} we can conveniently derive Equation \ref{argmaxmap} as a proxy for the label. Fitting a model with MAP estimate is commonly done via the EM algorithm as well. The Expectation step of EM remains unchanged, but the update rules of the maximization step are slightly different. In this work we only discuss the GMM case, for the other cases the interested reader is encouraged to see the relevant literature.  For GMM, the prior on the mixing weight is often modeled using the Dirichlet distribution, that is $\theta_j \sim \text{Dir}(\bm \alpha)$. For the rest of parameters, we assume that the conjugate prior is of the form $p(\mu_j, \Sigma_j) = NIW(\mu_j, \Sigma_j | \bm m_0, \iota_0, \nu_0, \bm S_0)$, where $\text{NIW}(\mu_j, \Sigma_j)$ is the Normal-inverse-Wishart distribution. The probability density function of the NIW is the product between a multivariate normal $\phi(\mu|m_0, \frac{1}{\iota} \Sigma)$ and a inverse Wishart distribution $\mathcal{W}^{-1}(\Sigma|\bm S_0, \nu_0)$. NIW has as support vectors $\mu$ with mean $\mu_0$ and covariance matrices $\frac{1}{\iota}\Sigma$ where $\Sigma$ is a random variable with inverse Wishart distribution over positive definite matrices. NIW is often the distribution of choice in these cases, as is the conjugate prior of a multivariate normal distribution with unknown mean and covariance matrix. 
A shorthand notation, let's define $r_j = n\theta_j = \sum_{i=1}^n r_{ij}$. As in \citep{murphy2012machine}, we also denote with $\overline{x_j}^{t+1}$ and $\overline{S_j}^{t+1}$ the Maximum Likelihood estimate of the parameters $(\mu_j^{t+1})_{ML}$ and $(\Sigma_j^{t+1})_{ML}$. For MAP, the update rules are the following: 

\begin{equation}
\theta_j^{t+1} \leftarrow \frac{r_j +\alpha_j-1}{n + \sum_j \alpha_j - k}
\end{equation}

\begin{equation}
\mu_j^{t+1} \leftarrow \frac{r_j \overline{x_j}^{t+1} + \iota_0 \bm m_0 }{r_j + \iota_0}
\end{equation}

\begin{equation}
\Sigma^{t+1}_j \leftarrow \frac{\bm S_0 + \overline{S_j}^{t+1}+ \frac{\iota_0r_j}{\iota_0+r_j} ( \overline{x_j}^{t+1} - \bm m_0)( \overline{x_j}^{t+1} - \bm m_0)^T }{\nu_0 + r_k + d + 2}
\end{equation}

Where the matrix $\bm S_0$ is defined as:

\begin{equation}
\bm S_0 := \frac{1}{k^{1/d}} Diag(s_1^2, \cdots, s_d^2),
\end{equation}

where each value $s_j$ is computed as $s_j := \frac{1}{n}\sum_{i=1}^n (x_{ij} - \sum_{i=1}^n x_{ij}))^2$ which is the pooled variance for each of the dimension $j$. For more information on the advantages, disadvantages, and common choice of parameters of a MAP estimate, we refer the interested reader to \citep{murphy2012machine}. Using the QEM algorithm to fit a MAP estimate is straightforward, since once the ML estimate of the parameter is recovered from the quantum procedures, the update rules can be computed classically.

\begin{corollary}[QEM for MAP estimates of GMM]\label{th:qmap}
	We assume we have quantum access to a GMM with parameters $\gamma^t$. For parameters $\delta_\theta, \delta_\mu, \epsilon_\tau > 0$, the running time of one iteration of the Quantum Maximum A Posteriori (QMAP) algorithm algorithm is 

$$O(T_\theta + T_\mu + T_\Sigma + T_\ell),$$

for
\begin{eqnarray*}
T_\theta  & = & \widetilde{O}\left(k^{3.5} \eta^{1.5} \frac{ \kappa(\Sigma)\mu(\Sigma) }{\delta_\theta^2}  \right)
\\
 T_\mu & = & \widetilde{O}\left(    \frac{k d\eta \kappa(V) (\mu(V) + k^{3.5}\eta^{1.5}\kappa(\Sigma)\mu(\Sigma))}{\delta_{\mu}^3}  \right)
 \\
 T_\Sigma & = & \widetilde{O} \Big( \frac{kd^2 \eta\kappa^2(V)(\mu(V')+\eta^2k^{3.5}\kappa(\Sigma)\mu(\Sigma))}{\delta_{\mu}^3} \Big)	
\\
 T_\ell & = & \widetilde{O}\left( k^{1.5}\eta^{1.5}  \frac{\kappa(\Sigma)\mu(\Sigma)}{\errlikelihood^2} \right)
 \end{eqnarray*}
For the range of parameters of interest, the running time is dominated by $T_\Sigma$.

\end{corollary}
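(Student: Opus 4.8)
The plan is to read off the total runtime as the sum of the costs of the five steps comprising a single pass of Algorithm \ref{qGMMEM}, invoking one of the maximization lemmas for each step. Since the steps execute sequentially and each produces classical estimates that are written into the QRAM before the next step begins, the cost of one iteration is simply the sum of the individual step costs, and I would check along the way that the per-step error guarantees compose to meet Definition \ref{def:approxgmm}.

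Concretely, I would proceed step by step. Step 1 computes $\overline{\theta}^{t+1}$ with $\norm{\overline{\theta}^{t+1}-\theta^{t+1}} \leq \delta_\theta$ via Lemma \ref{lemma:theta}, contributing $T_\theta$. Step 2 computes each $\overline{\mu_j}^{t+1}$ with $\norm{\overline{\mu_j}^{t+1}-\mu_j^{t+1}} \leq \delta_\mu$ via Lemma \ref{lemma:centroids}, contributing $T_\mu$; this step already reuses the responsibility-state preparation of Lemma \ref{lemma:responsibility}, whose cost is folded into $T_\mu$. Step 3 computes each $\overline{\Sigma_j}^{t+1}$ with $\norm{\overline{\Sigma_j}^{t+1}-\Sigma_j^{t+1}}_F \leq \delta_\mu\sqrt{\eta}$ via Lemma \ref{lemma:sigma}, contributing $T_\Sigma$; crucially, the earlier observation that the log-determinant estimation of Lemma \ref{lemma:det} is subsumed by the tomography cost of Lemma \ref{lemma:sigma} lets me absorb Step 5 (rebuilding quantum access to $\gamma^{t+1}$ and re-estimating the $\{\log\det(\Sigma_j^{t+1})\}$) into $T_\Sigma$, so no separate determinant term appears. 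Step 4 estimates $\mathbb{E}[p(v_i;\gamma^{t+1})]$ to additive error $\errlikelihood/2$ via Lemma \ref{lemma:likelihoodestimation}, contributing $T_\ell$ and supplying the quantity needed to test the stopping criterion.

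Summing these four contributions gives $O(T_\theta+T_\mu+T_\Sigma+T_\ell)$ with the stated expressions. For the dominance claim, I would compare the terms in the regime $d\gg k$ with large $\kappa(V)$: $T_\Sigma$ is the only term carrying a $d^2$ factor (it operates on the $d^2$-dimensional vectorized outer products $\mathrm{vec}[v_iv_i^T]$ and performs $\ell_2$ tomography on a $d^2$-dimensional object) together with the higher power of $\kappa(V)$, so it overwhelms $T_\mu$ (linear in $d$) and $T_\theta, T_\ell$ (independent of $d$) whenever $d^2 \gg k^{3.5}$, which is exactly the regime of interest.

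The genuinely delicate points are not in the summation but in checking that the error parameters chosen inside the separate lemmas are mutually consistent and small enough that the composed output satisfies Definition \ref{def:approxgmm} simultaneously in all three parameters, and that the quantum access rebuilt in Step 5 is accurate enough to serve as the input $\gamma^{t+1}$ to the next iteration's lemmas. I expect the main obstacle to be precisely this consistency bookkeeping: each lemma internally re-derives the responsibility precision $\errgauss$ it needs (for instance $\errgauss \le \delta_\theta/2\sqrt{k}$ in Lemma \ref{lemma:theta} versus $\errtom=\errgauss=\errnorms \le \delta_\mu/4\sqrt{\eta}$ in Lemmas \ref{lemma:centroids} and \ref{lemma:sigma}), and I would verify that taking the worst-case (smallest) of these precisions leaves every $\widetilde{O}$ expression unchanged, since the precisions enter only polynomially and are already reflected in the stated $\delta_\theta,\delta_\mu$ dependence.
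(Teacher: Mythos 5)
Your proposal, as written, is a proof of Theorem \ref{th:qgmm} (the per-iteration cost of QEM for the \emph{maximum likelihood} estimate), not of Corollary \ref{th:qmap}. The corollary concerns the Quantum Maximum A Posteriori algorithm, whose maximization-step update rules differ from the ML ones: they involve the hyperparameters of the priors, namely the Dirichlet prior $\theta_j \sim \text{Dir}(\bm{\alpha})$ on the mixing weights and the Normal-inverse-Wishart prior $NIW(\mu_j,\Sigma_j \mid \bm{m}_0,\iota_0,\nu_0,\bm{S}_0)$ on $(\mu_j,\Sigma_j)$, giving rules such as $\theta_j^{t+1} \leftarrow (r_j+\alpha_j-1)/(n+\sum_j \alpha_j - k)$ and the analogous expressions for $\mu_j^{t+1}$ and $\Sigma_j^{t+1}$. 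Nothing in your argument touches these rules: every step you analyze is a step of Algorithm \ref{qGMMEM} with the ML updates, so what you have bounded is the per-iteration cost of QEM, and the statement about QMAP does not follow from it without a further argument.

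The missing idea --- which is exactly the content of the paper's proof of this corollary --- is the reduction of a MAP iteration to an ML iteration plus classical post-processing. The quantum subroutines (Lemmas \ref{lemma:theta}, \ref{lemma:centroids}, \ref{lemma:sigma} and \ref{lemma:likelihoodestimation}) are used unchanged to recover classical estimates of the ML quantities $r_j = n\theta_j$, $\overline{x_j}^{t+1} = (\mu_j^{t+1})_{ML}$ and $\overline{S_j}^{t+1} = (\Sigma_j^{t+1})_{ML}$; the MAP update rules are then deterministic classical functions of these recovered quantities and of the fixed prior hyperparameters, computable with $O(kd^2)$ classical arithmetic operations (dominated by forming $(\overline{x_j}^{t+1}-\bm{m}_0)(\overline{x_j}^{t+1}-\bm{m}_0)^T$ for each of the $k$ components). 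Since $O(kd^2)$ is subsumed by $T_\Sigma$, the per-iteration runtime of QMAP equals that of QEM, which is what the corollary asserts. Your step-by-step summation, the absorption of the log-determinant and norm-estimation costs, and the dominance of $T_\Sigma$ when $d \gg k$ are all fine, and indeed coincide with how the paper proves Theorem \ref{th:qgmm}; what is absent is this ML-to-MAP bridge, which is precisely what makes the statement a corollary of that theorem rather than a restatement of it.
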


\end{document}